\newtheorem{theorem}{Theorem}[section]
\newtheorem{lemma}[theorem]{Lemma}
\theoremstyle{definition}
\newtheorem{design-crit}[theorem]{Design Criterion}
\newtheorem{remark}[theorem]{Remark}
\theoremstyle{remark}
\newcommand{\E}{\mathbb{E}}
\newcommand{\Prob}{\mathbb{P}}
\newcommand{\R}{{\mathbb{R}}}
\newcommand{\ra}{{\rangle}}
\newcommand{\la}{{\langle}}
\newcommand{\ov}{\overline}
\newcommand{\vare}{{\varepsilon}}
\newcommand{\eps}{{\epsilon}}
\newcommand{\rank}{\mathrm{rank}}
\newcommand{\card}{\mathrm{card}}
\newcommand{\sgn}{\mathrm{sgn}}
\newcommand{\supp}{\mathrm{supp}}
\newcommand{\eff}{\mathrm{eff}}
\newcommand{\tr}{{\mathrm{Tr}}}
\newcommand{\bX}{{\mathbf X}}
\newcommand{\bY}{{\mathbf Y}}
\newcommand{\bA}{{\mathbf A}}
\newcommand{\bE}{{\mathbf E}}
\newcommand{\bx}{{\mathbf x}}
\newcommand{\bv}{{\mathbf v}}
\newcommand{\bw}{{\mathbf w}}
\newcommand{\bu}{{\mathbf u}}
\newcommand{\by}{{\mathbf y}}
\newcommand{\bz}{{\mathbf z}}
\newcommand{\bZ}{{\mathbf Z}}
\newcommand{\bh}{{\mathbf h}}
\newcommand{\be}{{\mathbf e}}
\newcommand{\bB}{{\mathbf B}}
\newcommand{\bC}{{\mathbf C}}
\newcommand{\bP}{{\mathbf P}}
\newcommand{\AP}{{\bf A_P}}
\newcommand{\tAP}{{\bf \tilde{A}_P}}
\newcommand{\Rmnum}[1]{\expandafter\@slowromancap\romannumeral #1@}
\newcommand{\specificthanks}[1]{\@fnsymbol{#1}}
\begin{document}

\title{Sparse Recovery of Fusion Frame Structured Signals
}



\author{Ula\c{s} Ayaz\thanks{Laboratory for Information \& Decision Systems, Massachusetts 
  Institute of Technology, Cambridge, MA, USA.}}





\maketitle



\begin{abstract}

Fusion frames are collection of subspaces which provide a redundant
representation of signal spaces. They generalize classical frames by
replacing frame vectors with frame subspaces. This paper considers
the sparse recovery of a signal from a fusion frame. We use a
block sparsity model for fusion frames and then show that sparse
signals under this model can be compressively sampled and
reconstructed in ways similar to standard Compressed Sensing (CS).
In particular we invoke a mixed $\ell_1/\ell_2$ norm minimization in
order to reconstruct sparse signals. In our work, we show that
assuming a certain incoherence property of the subspaces and the
apriori knowledge of it allows us to improve recovery when compared
to the usual block sparsity case.
\end{abstract}

\section{Introduction}
The problem of recovering sparse signals in $\R^N$ from $m < N$
measurements has drawn a lot of attention in recent years
\cite{carota06,cata06,do06-2}. Compressed Sensing (CS) achieves such
performance by imposing a sparsity model on the signal of interest.
The sparsity model, combined with randomized linear acquisition,
guarantees that certain non-linear reconstruction can be used to
efficiently and accurately recover the signal.

Classical frames are nowadays a standard notion for modeling
applications, in which redundancy plays a vital role such as filter
bank theory, sigma-delta quantization, signal and image processing
and wireless communications. Fusion frames are a
relatively new concept which is potentially useful when a single frame
system is not sufficient to represent the whole sensing mechanism
efficiently. Fusion frames were first introduced in \cite{Casazza04}
under the name of `frames of subspaces' (see also the survey
\cite{Casazza13}). They analyze signals by projecting them onto
multidimensional subspaces, in contrast to frames which consider
only one-dimensional projections. In the literature, there have been
a number of applications where fusion frames have proven to be
useful practically, such as \textit{distributed processing}
\cite{Kutyniok09, Casazza07}, \textit{parallel processing}
\cite{Bjorstad91,Oswald97}, \textit{packet encoding}
\cite{Bodmann07}.

In this paper, we consider the recovery of sparse signals from a
fusion frame. Signals of interest are collections of vectors from
fusion frame subspaces which can be considered as a `block' signal.
In other words, say we have $N$ subspaces, then we have a collection
of $N$ vectors which is the (block) signal we wish to recover. In
addition to block structure, the signals from a fusion frame have
the property that each block belongs to a
particular fusion frame subspace. We
are then allowed to observe linear measurements of those vectors and 
we aim to reconstruct the original signal from those measurements. In
order to do so, we use ideas from CS. We assume a `block' sparsity
model on the signals to be recovered where a few of the vectors in
the collection are assumed to be nonzero. For instance, we are not
interested whether each vector is sparse or not within the subspace
it belongs to. For the reconstruction, a mixed $\ell_1/\ell_2$
minimization is invoked in order to capture the structure that comes
with the sparsity model.

This problem was studied before in \cite{Boufounos09}
where the authors proved that it is sufficient for recovery to take
$m \gtrsim s \ln(N)$ random measurements. Here $s$ is the sparsity
level and $N$ is the number of subspaces. It is worth emphasizing
that our model assumes that the signals of interest lie in
particular subspaces which is not assumed in block sparsity
problems. In this paper, our goal is to improve the results in
\cite{Boufounos09} when the subspaces have a certain incoherence
property , i.e., they have nontrivial mutual angles between them,
and we assume to know them. Recently the authors in \cite{AyazU13}
have studied this problem in the uniform recovery setting. Our focus
in this paper will be on the nonuniform recovery of signals from a
fusion frame. To our best knowledge, a result of this
nature is not available in the literature.

\subsection{Notation} 
We denote Euclidean norm of vectors by $\|\cdot\|_2$ and the spectral norm of matrices by $\| \cdot \|$.
Boldface notation refers to block vectors
and matrices throughout this paper. Let $[N]=\{1,2,\ldots,N\}$. For a block matrix $\bA =
(a_{ij} B_{ij})_{i \in [m], j \in [N] } \in \R^{m d \times N d}$ with
blocks $B_{ij} \in \R^{d \times d}$, we denote the $\ell$-th
block column by $\bA_\ell = (a_{i \ell} B_{i \ell})_{i \in [m]} \in \R^{m d \times d}$ 
and the column submatrix restricted to $S \subset
[N]$ by $\bA_S = (a_{ij} B_{ij})_{i \in [m],j \in S}$. Similarly for a
block vector $\bx=(x_i)_{i=1}^N \in \R^{N d}$ with $x_i \in \R^d$, we denote
the vector $\bx_S=(x_i)_{i \in S}$ restricted to $S$. The complement $[N] \setminus S$
is denoted $\ov{S}$. The $\ell_\infty \to \ell_\infty$ and $\ell_2 \to \ell_\infty$ norms of a
matrix $A \in \R^{m \times n}$ are given as
$$
\|A\|_\infty = \max_{ i \in [m]} \sum_{j=1}^n A_{ij}, \ \ \text{ and } \ \ \|A\|_{2,\infty} = \max_{ i \in [m]} \left(\sum_{j=1}^n A_{ij}^2 \right)^{1/2},
$$
respectively.
Furthermore, for a block vector $\bx = (x_i)_{i=1}^N$ with $x_i \in \R^d$ we define the
$\ell_{2,\infty}$-norm as
$$
\|\bx\|_{2,\infty}= \max_{i \in [N]} \|x_i\|_2.
$$
Given a block vector $\bz$ with $N$ blocks from $\R^d$, we define
$\sgn(\bz) \in \R^{dN}$ analogously to the scalar case as
$$ \sgn(\bz)_i = \left\{ \begin{array}{ccc} \frac{z_i}{\|z_i\|_2} & : &
\|z_i\|_2 \neq 0, \\ 0 & : & \|z_i\|_2 =0.
\end{array} \right. $$

\section{Background on Fusion Frames}
A {\em fusion frame} for $\R^d$ is a collection of $N$ subspaces
$W_j \subset \R^d$ and associated weights $v_j$ that satisfy
\begin{align}\label{FF_property}
A \|x\|_2^2 \leq \sum_{j=1}^N v_j^2\|P_j x\|_2^2 \leq B\|x\|_2^2
\end{align}
for all $x \in \R^d$ and for some universal fusion frame bounds $0 <
A\leq B < \infty$, where $P_j \in \R^{d \times d}$ denotes the
orthogonal projection onto the subspace $W_j$. For simplicity we
assume that the dimensions of the $W_j$ are equal, that is $\text{dim}(W_j)=
k$. For a fusion frame $(W_j)_{j=1}^N$, let us define the space
$$\mathcal{H}=\{ (x_j)_{j=1}^N : x_j \in W_j, \ \forall j \in [N]\} \subset
\R^{d \times N},
$$
where we denote $[N]=\{1,\ldots,N\}$. The {\em mixed
$\ell_{2,1}$-norm} of a vector $\bx \equiv (x_j)_{j=1}^N \in \mathcal{H}$ is defined as
\begin{align*}
\|\bx\|_{2,1} = \sum_{j=1}^N \|x_j\|_2.
\end{align*}
Furthermore, the `$\ell_0$- block norm' of $\bx \in \mathcal{H}$ (which
is actually not even a quasi-norm) is defined as
$$\|\bx\|_0 = \sharp \{j \in [N] : x_j \neq 0\}.$$
We say that a vector $\bx \in \mathcal{H}$ is $s$-sparse, if $\|\bx\|_0 \leq
s$.

In this paper we consider the recovery of sparse vectors from the set $\mathcal{H}$ which
collects all vectors from a fusion frame subspaces. However our results do not
assume necessarily that subspaces satisfy the fusion frame property \eqref{FF_property}
but rather assume they satisfy an incoherence
property as explained in Section~\ref{incoher}.

\subsection{Relation with other recovery problems}
A special case of the sparse recovery problem above appears when all
subspaces coincide with the ambient space $W_j=\R^d$ for all $j$.
Then the problem reduces to the well studied {\em joint sparsity
setup} \cite{fora08} in which all the vectors have the same sparsity
structure. In order to see this, say we have $N$ vectors $\{x_1,
\ldots, x_N\}$ in $\R^d$ and write them as rows of a matrix $X \in
\R^{N \times d}$. Assuming only $s$ of the rows are non-zero, the $d$
vectors consisting of the columns of the matrix $X$ have the same
common support set, i.e., are \textit{jointly} sparse. To be more
precise, if $(X_i)_{i=1}^d$ denote the columns of $X$, $\supp(X_i) =
\supp(X_j)$ for all $i \neq j$. 

Furthermore, our problem is itself a special case of {\em block
sparsity setup} \cite{Eldar08}, with significant additional
structure that allows us to enhance existing results. Without the
subspace structure, we would have $N$ vectors $\{x_1, \ldots, x_N\}$ 
in $\R^d$ where only $s$ of them are non-zero, i.e., $\card\{ i \in
[N]: x_i \neq 0\} =s$. The reason this is called block sparsity
is because we count the vectors which are non-zero as a
\textit{block} rather than checking if its entries are sparse. 

Another related concept is \textit{group sparsity} \cite{Rao12}
where each entry of the vector is assigned to a predefined group and
sparsity counts the groups that are active in the support set of the
vector. In mathematical notation, consider a vector $x$ of length
$p$ and assume that its coefficients are grouped into sets $\{G_i\}_{i=1}^N$,
such that for all $i \in [N]$, $G_i \subset [p]$. The vector $x$ is
group $s$-sparse if the non-zero coefficients lie in $s$ of the groups
$\{G_i\}_{i=1}^N$. Formally,
$$
\card \{i \in [N]: \supp(x) \cap G_i \neq 0\}=s.
$$
This is similar to the block sparsity with $p=Nd$ except that the groups
may be allowed to overlap, i.e., $G_i \cap G_j \neq \emptyset$. We
also note that, for sparse recovery, a natural proxy for group sparsity becomes 
the norm $\sum_i \|x_{G_i}\|_2$. 

Finally in the case $d=1$, the projections equal 1, and hence the
problem reduces to the {\em standard CS recovery problem} $Ax=y$
with $x \in \R^N$ and $y \in \R^m$.

\subsection{Sparse recovery problem}
Our measurement model is as follows. Let $\bx^0=(x^0_j)_{j=1}^N \in \mathcal{H}$ be an $s$-sparse
and we observe $m$ linear combinations of those vectors
$$
\by=(y_i)_{i=1}^m = \left(\sum_{j=1}^N a_{ij} x_j^0 \right)_{i=1}^m,
\ \ y_i \in \R^d,
$$
for some scalars $(a_{ij})$.
Let us denote the block matrices ${\bf A_I}=(a_{ij} I_d)_{i \in [m],
j\in[N]}$ and $\AP = (a_{ij}P_j)_{i \in [m],j \in [N]}$ that consist
of the blocks $a_{ij}I_d$ and $a_{ij}P_j$ respectively. Here $I_d$
is the identity matrix of size $d \times d$. Then we can formulate
this measurement scheme as
$$ \by = {\bf A_I} \bx^0 = \AP \bx^0, \ \text{ for } \bx^0 \in \mathcal{H}.$$
We replaced ${\bf A_I}$ by $\AP$ since the relation $P_j x_j=x_j$
holds for all $j \in [N]$. We wish to recover $\bx^0$ from those
measurements. This problem can be formulated as the following
optimization program
\begin{align*}
(L0) \ \ \hat{\bx}= \text{argmin}_{\bx \in \mathcal{H}} \|\bx\|_0 \ \
s.t. \ \ \AP\bx=\by.
\end{align*}
The optimization program ($L0$) is NP-hard. Instead we propose the
following convex program the former
\begin{align*}
(L1) \ \ \hat{\bx}= \text{argmin}_{\bx \in \mathcal{H}} \|\bx\|_{2,1} \
\ s.t. \ \ \AP \bx=\by.
\end{align*}
We shall show that block sparse signals can be accurately recovered
by solving ($L1$). In the rest of the paper $\bP$ denotes the $N
\times N$ block diagonal matrix where the diagonals are projection
matrices $P_j$, $j \in [N]$. $\bP_S$ denotes the $s \times s$ block
diagonal matrix with entries $P_i$, $i \in S$.

\subsection{Incoherence matrix }\label{incoher}
In this section we define an incoherence matrix $\Lambda$ associated
with a fusion frame
\begin{align}\label{definco}
\Lambda=(\alpha_{ij})_{i,j \in [N]},
\end{align}
where $\alpha_{ij}=\|P_iP_j\|$ for $i \neq j \in [N]$ and
$\alpha_{ii}=0$. Note that $\|P_iP_j\|$ equals the largest absolute
value of the cosines of the principle angles between $W_i$ and
$W_j$. Then, for a given support set $S \subset [N]$, we denote $\Lambda_S$ as the submatrix of $\Lambda$
with columns restricted to $S$, and $\Lambda^S$ as the the submatrix
with columns and rows restricted to $S$.
Then we note the following norms
$$
\|\Lambda_S\|_{2,\infty} = \max_{ i \in [N]} \left( \sum_{j \in S, j \neq i} \|P_i P_j\|^2 \right)^{1/2} \ \ \text{ and } \ \ \|\Lambda^S\|_{2,\infty} = \max_{ i \in S}
\left( \sum_{j \in S, j \neq i} \|P_i P_j\|^2 \right)^{1/2},
$$
and
$$
\|\Lambda_S\|_\infty = \max_{ i \in [N]} \sum_{j \in S, j \neq i} \|P_i P_j\| \ \ \text{ and } \ \ \|\Lambda^S\|_\infty = \max_{ i \in S} \sum_{j \in S, j \neq i} \|P_i P_j\|.
$$
Moreover, we define the parameter $\lambda$ as
\begin{align*}
\lambda = \max_{i \neq j} \|P_iP_j\|, \ \ i,j \in [N].
\end{align*}
Clearly $\lambda$ equals the largest entry of $\Lambda$. In addition
it holds that
\begin{align*}
\| \Lambda_S\|_{2,\infty} \leq \| \Lambda_S\|_\infty \leq \lambda s
\end{align*}
for any $S$ with $|S|=s$. If the subspaces are all orthogonal to each other, i.e.
$\lambda=0$ and $\Lambda=0$, then only one measurement $y= \sum_i
a_i x_i$ suffices to recover $\bx^0$. In fact, since $x_i \perp
x_j$, we have
$$x_i = \frac{1}{a_i} P_i y.$$
This observation suggests that fewer measurements are necessary when
$\lambda$ gets smaller. In this work our goal is to provide a solid
theoretical understanding of this observation.

\section{Nonuniform recovery with Bernoulli matrices}

In this section we study nonuniform recovery from fusion frame
measurements. Our main result states that a fixed sparse signal can be
recovered with high probability using a random draw of a Bernoulli
random matrix $A \in \R^{m \times N}$ whose entries take value
$\pm 1$ with equal probability. Also we assume that fusion frames
$(W_j)_{j=1}^N$ for $\R^d$ have $\dim(W_j)=k$ for all $j$.

\begin{theorem}\label{nonuni_main}Let $\bx \in \mathcal{H}$ be $s$-sparse and $S=\supp(\bx)$. Let $A \in \R^{m \times N}$ be
Bernoulli matrix and a fusion frame $(W_j)_{j=1}^N$ be given with
the incoherence matrix $\Lambda$. Assume that
\begin{align}\label{nonuni_cond}
m \geq C (1 + \|\Lambda_S\|_\infty ) \ln(N) \ln(sk) \ln(\vare^{-1}),
\end{align}
where $C > 0$ is a universal constant. Then with probability at
least $1-\vare$, $(L1)$ recovers $\bx$ from $\by=\AP \bx$.
\end{theorem}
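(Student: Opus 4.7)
The plan is to build an inexact dual certificate via Gross's golfing scheme, adapted to the block/projection structure of fusion frame measurements. Unique recovery of $\bx$ by $(L1)$ follows, in the standard way, from (i) invertibility of the restricted Gram operator $\AP_S^*\AP_S$ on $\mathcal{H}_S := \{(w_i)_{i\in S}: w_i\in W_i\}$, together with (ii) existence of $\bh=\AP^*\bv$ satisfying $\bP_S\bh_S = \sgn(\bx)$ and $\|h_j\|_2 < 1$ for $j\notin S$. It suffices to construct an inexact certificate in which $\|\bP_S\bh_S - \sgn(\bx)\|_2$ is a small fraction of $\sqrt{s}$, since (i) absorbs the perturbation.

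The invertibility step comes first. Because $a_{ij}^2=1$, the diagonal blocks of $\AP_S^*\AP_S$ sum to $mP_j$, which acts as $mI$ on $\mathcal{H}_S$, so
\[
\tfrac{1}{m}\AP_S^*\AP_S - I \;=\; \tfrac{1}{m}\sum_{i=1}^{m}\bZ_i
\]
is a sum of independent mean-zero block matrices whose summand spectral norms and whose matrix variance are both controlled by $\|\Lambda^S\|_\infty$. A matrix Bernstein bound on the $sk$-dimensional space $\mathcal{H}_S$ then gives $\|\tfrac{1}{m}\AP_S^*\AP_S - I\|_{\mathcal{H}_S\to\mathcal{H}_S} \leq \tfrac{1}{2}$ once $m \gtrsim (1+\|\Lambda^S\|_\infty)\log(sk)$; this is the origin of the $\log(sk)$ factor in \eqref{nonuni_cond}.

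For the certificate I partition the rows of $A$ into $L \asymp \log(\vare^{-1})$ equal batches $M_1,\dots,M_L$ with corresponding submatrices $\AP^{(\ell)}$. With $\bu := \sgn(\bx)$, initialize $\bh^{(0)}=0$, $\bw^{(0)}=\bu$, and iterate
\begin{align*}
\bh^{(\ell)} &= \bh^{(\ell-1)} + \tfrac{L}{m}\,(\AP^{(\ell)})^*\AP^{(\ell)}_S\,\bw^{(\ell-1)},\\
\bw^{(\ell)} &= \bu - \bP_S\,(\bh^{(\ell)})_S.
\end{align*}
The invertibility step applied to each batch yields the on-support contraction $\|\bw^{(\ell)}\|_2 \leq \tfrac{1}{2}\|\bw^{(\ell-1)}\|_2$, so $\bh := \bh^{(L)}$ satisfies $\|\bP_S\bh_S - \bu\|_2 \leq 2^{-L}\sqrt{s}$, sufficient for the inexact certificate after $L \asymp \log(\vare^{-1})$ rounds.

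The off-support step is the main obstacle. One must bound
\[
\|\bh_{\ov{S}}\|_{2,\infty} \;\leq\; \sum_{\ell=1}^{L} \bigl\|\tfrac{L}{m}(\AP^{(\ell)}_{\ov{S}})^*\AP^{(\ell)}_S\,\bw^{(\ell-1)}\bigr\|_{2,\infty} \;<\; 1.
\]
For each fixed $j\notin S$, conditional on $\AP^{(\ell)}_S$ the quantity $\tfrac{L}{m}(\AP^{(\ell)}_{\{j\}})^*\AP^{(\ell)}_S\bw$ is a vector-valued Rademacher sum in the independent column $\{a_{ij}\}_{i\in M_\ell}$, to which a vector Bernstein inequality applies. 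Its conditional variance involves $\sum_{k\in S}\|P_jP_kw_k\|_2^2$, which one bounds through the incoherence entries $\alpha_{jk}=\|P_jP_k\|$ and the inductively maintained property $\|\bw^{(\ell)}\|_{2,\infty} \lesssim 1$. Aggregating these into a bound governed by $\|\Lambda_S\|_\infty$ (via a mixed $\ell_1/\ell_2$-type aggregation over $S$) and taking a union bound over $j\in\ov{S}$ and $\ell\in[L]$ produces the $\log(N)\log(\vare^{-1})$ factor and the prefactor $1+\|\Lambda_S\|_\infty$ in \eqref{nonuni_cond}. Extracting precisely this prefactor---rather than the crude $s\lambda^2$ one would obtain by replacing each $\alpha_{jk}$ with $\lambda$---is the crux: it requires carefully separating the $\ell_1$- and $\ell_2$-aggregates of the incoherence matrix on $S$, and using vector-valued (rather than scalar) Bernstein so that the $k$-dimensional range of each $P_j$ does not contribute a spurious $\sqrt{k}$ penalty.
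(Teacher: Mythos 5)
Your overall strategy (inexact dual certificate built by golfing, matrix Bernstein on the $sk$-dimensional on-support space, vector Bernstein plus a union bound off the support, with the incoherence matrix $\Lambda$ controlling variances) is the same as the paper's, but two of your key claims break the argument as written.

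First, your premise that it suffices to make $\|\bP_S\bh_S - \sgn(\bx)\|_2$ a ``small fraction of $\sqrt{s}$'' is false in this setting, and with it your choice $L \asymp \log(\vare^{-1})$. In the inexact-duality criterion (the paper's Lemma~\ref{inexact}, or its robust variant Lemma~\ref{inexact_robust}) the on-support error $\gamma$ enters through the product $\gamma\,\beta/(1-\delta)$, where $\beta = \max_{\ell\in\ov{S}}\|(\AP)_S^*(\AP)_\ell\|$ is the cross-Gram coherence; since with $m$ as in \eqref{nonuni_cond} one can only guarantee $\beta = O(1)$, not $O(1/\sqrt{s})$, the certificate must satisfy $\|\bu_S - \sgn(\bx_S)\|_2 \leq$ an absolute constant such as $1/4$ --- invertibility of the on-support Gram does not absorb an error of order $\sqrt{s}$. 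Consequently, with contraction factor $1/2$ per golfing round the residual after $L$ rounds is $2^{-L}\sqrt{s}$, which falls below a constant only when $L \gtrsim \log s$; your $L \asymp \log(\vare^{-1})$ leaves a residual of order $\vare\sqrt{s}$, useless for fixed $\vare$ and large $s$. This is not cosmetic: the $\ln(sk)$ factor in \eqref{nonuni_cond} is generated precisely by the golfing length (the paper takes $L = \lceil\ln(s)/\ln\ln(N)\rceil+3$ with per-round contraction about $1/(2\sqrt{\ln N})$, so that $\sqrt{s}\prod_{n} r_n' \leq 1/4$), and not only by the Bernstein dimension $sk$ as you assert.

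Second, the cross-Gram condition $\max_{\ell\in\ov{S}}\|(\AP)_S^*(\AP)_\ell\|\leq 1$ is an independent hypothesis of the inexact-duality lemma --- it is exactly what allows one to bound $\|\bv_S\|_2 \lesssim \|\bv_{\ov{S}}\|_{2,1}$ for kernel vectors $\bv$, which is where your claim that invertibility ``absorbs the perturbation'' secretly lives --- and your proposal never establishes it. It requires its own rectangular matrix Bernstein estimate (the paper's Lemma~\ref{aux3}), costing $m \gtrsim \|\Lambda_S\|_{2,\infty}^2\ln(N(s+1)k/\vare)$. Your off-support treatment of the certificate itself (vector Bernstein in the independent column $(a_{ij})_i$, aggregation via $\|\Lambda_S\|_\infty$ together with the inductively maintained $\ell_{2,\infty}$ bound on the residual) is sound and essentially matches the paper's Lemmas~\ref{aux1} and~\ref{aux22}, but the two gaps above must be repaired before the theorem follows.
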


If the subspaces are not
equi-dimensional, one can replace $sk$ term in Condition~\eqref{nonuni_cond} by
$\sum_{i\in S} \dim(W_j)$, where $\dim(W_j)=k_j$. We prove Theorem~\ref{nonuni_main}
in Section \ref{golf_proof}. The proof
relies on the recovery condition of Lemma~\ref{inexact} via an
inexact dual. 

\paragraph{Gaussian Case.}
We also state a similar result for Gaussian random matrices without a
proof. These matrices have entries as independent standard
Gaussian random variables, i.e., $A_{ij}=g_{ij} \sim \mathcal{N}(0,1)$. 

\begin{theorem}\label{gauss_main}Let $\bx \in \mathcal{H}$ be $s$-sparse. Let $A \in \R^{m \times N}$ be a Gaussian matrix
and $(W_j)_{j=1}^N$ be given with parameter $\lambda \in [0,1]$ and
$\dim(W_j)=k$ for all $j$. Assume that
\begin{align*}
m \geq \tilde{C} (1+\lambda s) \ln^2 (6 Nk ) \ln^2(\vare^{-1}),
\end{align*}
where $\tilde{C} > 0$ is a universal constant. Then with probability
at least $1-\vare$, $(L1)$ recovers $\bx$ from $\by=\AP \bx$.
\end{theorem}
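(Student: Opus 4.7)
My plan is to mirror the inexact-dual/golfing-scheme architecture used to prove Theorem~\ref{nonuni_main}, replacing each Bernoulli concentration bound with its Gaussian counterpart. The entry point is the recovery criterion of Lemma~\ref{inexact}: it suffices to produce a suitable inexact dual certificate of the form $\bu = \bP\AP^*\bh$ such that $\bu_S$ is close to $\sgn(\bx)$ in $\ell_2$ and $\bu_{\ov S}$ is small in $\ell_{2,\infty}$, together with a spectral-norm control of $\AP_S$. One first exploits that $\|\Lambda^S\|_\infty \le \|\Lambda_S\|_\infty \le \lambda s$, which replaces every occurrence of $1+\|\Lambda_S\|_\infty$ in the Bernoulli argument by $1+\lambda s$ and thereby supplies the leading factor in the sample-size condition.

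The three probabilistic building blocks to be established, for $A \in \R^{m\times N}$ with i.i.d.\ $\mathcal N(0,1)$ entries, are:
\begin{equation*}
\bigl\|\tfrac{1}{m}\AP_S^*\AP_S - \bP_S\bigr\| \le \tfrac12, \qquad \bigl\|\AP_S\bigr\| \lesssim \sqrt{m(1+\lambda s)}, \qquad \bigl\|\tfrac{1}{m}\AP_{\ov S}^*\AP_S \bw\bigr\|_{2,\infty} \lesssim \sqrt{\tfrac{\ln(6Nk)}{m}}\,\|\bw\|_2.
\end{equation*}
The first bound comes from a matrix Bernstein inequality for subexponential summands applied to the sum $\tfrac{1}{m}\sum_i g_i g_i^\top\otimes$(projection data) whose mean is $\bP_S$ and whose off-diagonal norm is driven by $\|\bP_S\Lambda^S\bP_S\|\le\lambda s$. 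The third is obtained by conditioning on $A_S$: the columns $A_j$ with $j\in\ov S$ are then independent of $A_S$, so each block $(\AP_{\ov S}^*\AP_S\bw)_j \in W_j$ is a Gaussian vector whose variance is controlled by $\|\AP_S\bw\|_2^2$, which in turn is controlled by the second bound. A union over $j\in\ov S$ and over an $\varepsilon$-net of the $k$-dimensional subspace $W_j$ produces the $\ln(6Nk)$ factor in a single clean scale.

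With these estimates in hand I would run Gross's golfing scheme: split the $m$ rows into $L \asymp \ln(6Nk)+\ln(\vare^{-1})$ independent batches of size $m_\ell$, and set
\begin{equation*}
\bw_0 = \sgn(\bx), \qquad \bu_\ell = \sum_{j\le\ell} \tfrac{1}{m_j}\AP^{(j)*}\AP^{(j)}\bw_{j-1}, \qquad \bw_\ell = \sgn(\bx) - \bP_S\bu_\ell.
\end{equation*}
The first bound, applied to each batch, contracts $\|\bw_\ell\|_2$ by a factor $1/2$ per round; the third bound, applied to the increment, controls how much it leaks onto $\ov S$. Telescoping the $\ell_{2,\infty}$-leakage with geometrically decaying $\|\bw_\ell\|_2$ yields $\|\bu_\ell\|_{2,\infty}$ on $\ov S$ of order $\sqrt{(1+\lambda s)\ln(6Nk)/m_\ell}$, and after $L$ rounds $\bw_L$ is negligible. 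Choosing each $m_\ell \asymp (1+\lambda s)\ln(6Nk)\ln(\vare^{-1})$ and summing gives the total sample-size bound of Theorem~\ref{gauss_main}.

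The main obstacle is establishing the first concentration bound with the correct $(1+\lambda s)$ scaling: because Gaussian entries are unbounded, the bounded-summand form of matrix Bernstein does not apply directly, and one must either truncate $|g_{ij}|\lesssim\sqrt{\ln(6Nk)}$ with a union-bound estimate of the tail, or invoke a Hanson--Wright/Gaussian-chaos inequality for the quadratic form $\tfrac{1}{m}\AP_S^*\AP_S-\bP_S$. This unboundedness is precisely what inflates the Bernoulli factor $\ln(N)\ln(sk)$ to the $\ln^2(6Nk)$ in the Gaussian statement, and the delicate bookkeeping is to ensure that the resulting logarithmic losses across the $L$ golfing rounds aggregate to only the stated $\ln^2(6Nk)\ln^2(\vare^{-1})$ rather than a higher power.
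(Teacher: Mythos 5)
First, note that the paper itself does not prove Theorem~\ref{gauss_main}: it defers the proof to \cite[Section~3.3]{Ayaz14}, remarking only that it follows the inexact dual lemma but must replace the Bernoulli-specific tools (which require bounded summands) by other, lengthier arguments. So your proposal can only be judged against that described strategy and on its own internal merits. Your skeleton is consistent with the paper's description — Lemma~\ref{inexact} plus golfing, with Gaussian concentration substituted for matrix Bernstein — and you correctly identify unboundedness of the Gaussian entries as the technical obstacle for the spectral condition.

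However, there is a genuine quantitative gap in your off-support (leakage) estimate, and it sits exactly where the theorem's content lies. Your third building block bounds $\|\tfrac{1}{m}\AP_{\ov{S}}^*\AP_S\bw\|_{2,\infty}$ by $\sqrt{\ln(6Nk)/m}\,\|\bw\|_2$, with no dependence on the incoherence $\lambda$, and your golfing bookkeeping tracks only $\|\bw_\ell\|_2$. Since $\|\bw_0\|_2=\|\sgn(\bx_S)\|_2=\sqrt{s}$, summing the per-round leakage gives a total of order $\sqrt{s\ln(6Nk)/m_\ell}$ (or $\sqrt{s(1+\lambda s)\ln(6Nk)/m_\ell}$ if you insert your second building block), so forcing $\max_{i\in\ov{S}}\|u_i\|_2\le 1/4$ requires $m\gtrsim s\ln(6Nk)$ up to further log factors. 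That is the generic block-sparsity rate; it does not yield the claimed $(1+\lambda s)$ scaling, which is the entire point of the theorem, and your assertion that the telescoped leakage is of order $\sqrt{(1+\lambda s)\ln(6Nk)/m_\ell}$ does not follow from the three displayed bounds. The fix is what the paper's Bernoulli proof does: the leakage bound must see the incoherence through the products $P_iP_j$ (as in Lemma~\ref{aux1}, where the bound is $\kappa\|\Lambda_S\|_{2,\infty}/\sqrt{m}+t$ with $\|\Lambda_S\|_{2,\infty}\le\lambda\sqrt{s}$), and one must separately track the $\ell_{2,\infty}$ norm of the residuals $\bw_\ell$ (the quantities $h_n$ via Lemma~\ref{aux22}), which start at $1$ rather than $\sqrt{s}$. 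In your conditional-Gaussian argument this means controlling, for each $j\in\ov{S}$, the trace $\sum_i\|P_j(\AP_S\bw)_i\|_2^2$ using $\|P_jP_\ell\|\le\lambda$ — an additional concentration step over $A_S$ that you do not supply. Two smaller issues: you never verify the second hypothesis of Lemma~\ref{inexact}, $\max_{\ell\in\ov{S}}\|(\tAP)_S^*(\tAP)_\ell\|\le 1$ (the analogue of Lemma~\ref{aux3}), which is an operator-norm bound, not a bound applied to a fixed vector; and an $\varepsilon$-net over the unit sphere of a $k$-dimensional subspace has cardinality exponential in $k$, so it produces a factor $k$, not $\ln k$ — the $\ln(6Nk)$ factor must come from dimension factors in matrix-type tail bounds (or the trace/Lipschitz route, which needs no net at all), not from such a net.
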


The proof of this result can be found in \cite[Section~3.3]{Ayaz14}. It
also follows the inexact dual lemma 
however proceeds in a rather different way the
Bernoulli case since the
probabilistic tools we use for proving Theorem \ref{nonuni_main} apply only for
bounded random variables.  Therefore we apply other tools which make
proofs considerably long and so we do not present them
here.

\subsection{Recovery lemma}
This section gives a sufficient condition for recovery of fixed
sparse vectors based on an ``inexact dual vector". Sufficient
conditions involving an exact dual vector were given in
\cite{Fuchs04,Tropp05}. The modified inexact version is due to
Gross \cite{Gross09}. Below, $A_{|\mathcal{H}}$ restricts the matrix $A$ to its range $\mathcal{H}$.

\begin{lemma}\label{inexact}
Let $A \in \R^{m \times N}$ and $(W_j)_{j=1}^N$ be a fusion frame
for $\R^d$ and $x \in \mathcal{H}$ with support $S$. Assume that
\begin{align}\label{dual_cond1}
\|[(\AP)_S^* (\AP)_S]_{|\mathcal{H}}^{-1} \| \leq 2 \ \ \text{ and } \
\ \max_{\ell \in \ov{S}} \|(\AP)_S^* (\AP)_\ell \| \leq 1.
\end{align}
Suppose there exists a block vector $\bu \in \R^{Nd}$ of the form
$\bu = \AP^* \bh$ with block vector $\bh \in \R^{md}$ such that
\begin{align}\label{dual_cond2}
\|\bu_S- \sgn(\bx_S)\|_2 \leq 1/4 \ \ \text{ and } \ \ \max_{i \in
\ov{S}} \|u_i\|_2 \leq 1/4.
\end{align}
Then $x$ is the unique minimizer of $\|\bz\|_{2,1}$ subject to $\AP
\bz=\AP \bx$.
\end{lemma}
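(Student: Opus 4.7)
The plan is to run the standard null-space / subgradient style argument, adapted for block $\ell_{2,1}$ norms and modified to accommodate an inexact (rather than exact) dual certificate. Let $\bz \in \mathcal{H}$ be any feasible point for $(L1)$ and write $\bv := \bz - \bx$, which lies in $\mathcal{H}$ and in $\ker(\AP)$. Since the supports are separated, $\|\bz\|_{2,1} = \|\bx_S + \bv_S\|_{2,1} + \|\bv_{\ov S}\|_{2,1}$. My first step is the blockwise subgradient inequality
\begin{equation*}
\|\bx_S + \bv_S\|_{2,1} \;\geq\; \|\bx_S\|_{2,1} + \la \sgn(\bx_S), \bv_S \ra,
\end{equation*}
which holds because $\sgn(\bx_S)$ lies in the subdifferential of $\|\cdot\|_{2,1}$ at $\bx_S$.

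Next, I would exploit the inexact dual $\bu = \AP^* \bh$ together with $\AP \bv = 0$. Since $\la \bu, \bv \ra = \la \bh, \AP\bv\ra = 0$, we get $\la \bu_S, \bv_S\ra = -\la \bu_{\ov S}, \bv_{\ov S}\ra$, and therefore
\begin{equation*}
\la \sgn(\bx_S), \bv_S \ra = \la \sgn(\bx_S) - \bu_S,\, \bv_S \ra - \la \bu_{\ov S},\, \bv_{\ov S} \ra.
\end{equation*}
Cauchy--Schwarz on the first term and the duality between $\|\cdot\|_{2,\infty}$ and $\|\cdot\|_{2,1}$ on the second, combined with the two bounds in \eqref{dual_cond2}, yield
\begin{equation*}
|\la \sgn(\bx_S), \bv_S \ra| \;\leq\; \tfrac{1}{4}\|\bv_S\|_2 + \tfrac{1}{4}\|\bv_{\ov S}\|_{2,1}.
\end{equation*}
Combining with the splitting above gives $\|\bz\|_{2,1} \geq \|\bx\|_{2,1} - \tfrac14\|\bv_S\|_2 + \tfrac34\|\bv_{\ov S}\|_{2,1}$.

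The remaining task is to absorb the unfavorable $-\tfrac14\|\bv_S\|_2$ term, and this is where the first condition \eqref{dual_cond1} is used. From $(\AP)_S \bv_S = -\sum_{\ell \in \ov S}(\AP)_\ell v_\ell$ and the fact that $\bv_S$ lies in the restricted range $\mathcal{H}_S$ on which $(\AP)_S^*(\AP)_S$ is invertible with inverse bounded by $2$, I would write
\begin{equation*}
\bv_S = -\big[(\AP)_S^*(\AP)_S\big]^{-1}_{|\mathcal{H}} \sum_{\ell \in \ov S}(\AP)_S^*(\AP)_\ell v_\ell,
\end{equation*}
so that $\|\bv_S\|_2 \leq 2 \max_{\ell \in \ov S}\|(\AP)_S^*(\AP)_\ell\| \cdot \|\bv_{\ov S}\|_{2,1} \leq 2\|\bv_{\ov S}\|_{2,1}$. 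Plugging this in gives $\|\bz\|_{2,1} \geq \|\bx\|_{2,1} + \tfrac14 \|\bv_{\ov S}\|_{2,1}$, which is strict unless $\bv_{\ov S}=0$; but then the same bound forces $\bv_S=0$, so $\bz=\bx$, proving uniqueness.

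The main conceptual obstacle is the correct handling of the restriction to $\mathcal{H}$: one must verify that $\bv_S$ really does lie in the subspace on which $[(\AP)_S^*(\AP)_S]_{|\mathcal{H}}$ is boundedly invertible (which follows because $\bv \in \mathcal{H}$ implies $v_i \in W_i$, hence $\bP \bv_S = \bv_S$, and $\AP = \AP \bP$). Everything else reduces to careful bookkeeping of block norms and the duality $\|\cdot\|_{2,1}^* = \|\cdot\|_{2,\infty}$; the two numerical thresholds $1/4$ and $1/4$ in \eqref{dual_cond2} are precisely calibrated so that $\tfrac34 - 2 \cdot \tfrac14 = \tfrac14 > 0$ in the final combination.
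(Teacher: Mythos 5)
Your proposal is correct and follows essentially the same route as the paper's proof: the blockwise subgradient inequality, the cancellation $\la \bu_S,\bv_S\ra = -\la \bu_{\ov S},\bv_{\ov S}\ra$ from $\bu=\AP^*\bh$ and $\AP\bv=\mathbf{0}$, the Cauchy--Schwarz/$\ell_{2,\infty}$--$\ell_{2,1}$ duality bounds, and the estimate $\|\bv_S\|_2 \leq 2\|\bv_{\ov S}\|_{2,1}$ via the two conditions in \eqref{dual_cond1}. The only cosmetic difference is that you argue for an arbitrary feasible $\bz$ and conclude strict inequality unless $\bz=\bx$, whereas the paper starts from a minimizer $\hat{\bx}$ and invokes injectivity of $(\AP)_S$ at the end; these are logically equivalent.
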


\begin{proof}
The proof follows \cite[Theorem 4.32]{Foucart13} and generalizes it
to the block vector case. For convenience we give the details here. Let
$\hat{\bx}$ be a minimizer of $\|\bz\|_{2,1}$ subject to ${\bf
A_P}\bz=\AP \bx$. Then $\bv=\hat{\bx}-\bx \in \mathcal{H}$ satisfies
$\AP \bv= {\bf 0}$. We need to show that $\bv={\bf 0}$. First we
observe that
\begin{align}\label{norm1}
\| \hat{\bx} \|_{2,1} &= \|\bx_S + \bv_S\|_{2,1} +
\|\bv_{\ov{S}}\|_{2,1} =
\la \sgn (\bx_S + \bv_S), (\bx_S + \bv_S) \ra + \|\bv_{\ov{S}}\|_{2,1} \notag \\
& \geq \la \sgn (\bx_S), (\bx_S+\bv_S) \ra + \|\bv_{\ov{S}}\|_{2,1} \notag \\
&= \|\bx_S\|_{2,1} + \la \sgn (\bx_S), \bv_S \ra +
\|\bv_{\ov{S}}\|_{2,1}.
\end{align}
For $\bu= \AP^* \bh$ it holds
$$ \la \bu_S,\bv_S \ra = \la \bu,\bv \ra - \la \bu_{\ov{S}} , \bv_{\ov{S}} \ra
= \la \bh, \AP \bv \ra - \la \bu_{\ov{S}} , \bv_{\ov{S}} \ra = -\la
\bu_{\ov{S}} , \bv_{\ov{S}} \ra.$$ Hence,
\begin{align*}
\la \sgn (\bx_S), \bv_S \ra &= \la \sgn (\bx_S)-\bu_S, \bv_S \ra +
\la
\bu_S,\bv_S \ra \\
&= \la \sgn (\bx_S)-\bu_S, \bv_S \ra - \la \bu_{\ov{S}},\bv_{\ov{S}}
\ra.
\end{align*}
The Cauchy-Schwarz inequality together with \eqref{dual_cond2}
yields \begin{align*} |\la \sgn (\bx_S), \bv_S \ra | \leq \|\sgn
(\bx_S)-\bu_S\|_2 \|\bv_S\|_2 + \max_{i \in \ov{S}} \|u_i\|_2
\|\bv_{\ov{S}}\|_{2,1} \leq \frac{1}{4} \|\bv_S\|_2 + \frac{1}{4}
\|\bv_{\ov{S}}\|_{2,1}.
\end{align*}
Together with \eqref{norm1} this yields
$$
\|\hat{\bx}\|_{2,1} \geq \|\bx_S\|_{2,1} - \frac{1}{4} \|\bv_S\|_2
+\frac{3}{4} \|\bv_{\ov{S}}\|_{2,1}.
$$
We now bound $\|\bv_S\|_2$. Since $\AP \bv= {\bf 0}$, we have
$(\AP)_S\bv_S= - (\AP)_{\ov{S}} \bv_{\ov{S}}$ and
\begin{align*}
\|\bv_S\|_2 &= \|[(\AP)_S^* (\AP)_S]_{|\mathcal{H}}^{-1} ({\bf
A_P})_S^* (\AP)_S \bv_S\|_2 = \|- [(\AP)_S^* ({\bf A_P})_S]_{|\mathcal{H}}^{-1}(\AP)_S^* (\AP)_{\ov{S}}
\bv_{\ov{S}}\|_2 \notag \\
&\leq \|   [(\AP)_S^* (\AP)_S]_{|\mathcal{H}}^{-1}\| \|({\bf A_P})_S^*
(\AP)_{\ov{S}} \bv_{\ov{S}}\|_2 \leq 2 \left\|
(\AP)_S^* \sum_{i \in \ov{S}} (\AP)_i v_i \right\|_2 \notag \\
&\leq 2 \sum_{i \in \ov{S}} \|(\AP)_S^*(\AP)_i\| \|v_i\|_2 \leq 2
\|\bv_{\ov{S}}\|_{2,1}.
\end{align*}
Hereby, we used the second condition in \eqref{dual_cond1}. Then we have
$$
\|\hat{\bx}\|_{2,1} \geq \|\bx\|_{2,1} + \frac{1}{4} \|\bv_{\ov{S}}\|_{2,1}.
$$
Since $\hat{\bx}$ is an
$\ell_{2,1}$-minimizer it follows that $\bv_{\ov{S}}= {\bf 0}$.
Therefore $(\AP)_S\bv_S= - (\AP)_{\ov{S}} \bv_{\ov{S}}= {\bf 0}$.
Since $(\AP)_S$ is injective, it follows that
$\bv_S={\bf 0}$, so that $\bv={\bf 0}$.
\end{proof}

To this end, we introduce the rescaled matrix $\tAP=
\frac{1}{\sqrt{m}} \AP$. The term $\|[(\tAP)_S^* (\tAP)_S]_{|\mathcal{H}}^{-1} \|$ in \eqref{dual_cond1} will be treated with
Theorem~\ref{submatrix} by noticing that $\| (\tAP)_S^* (\tAP)_S -
\bP_S \| \leq \delta$ implies $\| [(\tAP)_S^* (\tAP)_S]_{|\mathcal{H}}^{-1} \| \leq (1-\delta)^{-1}$. The other terms in
Lemma~\ref{inexact} will be estimated by the lemmas in the next
section. Throughout the proof, we use the notation $\bE_{jj}(A)$ to
denote the $s \times s$ block diagonal matrix
with the matrix $A \in \R^{d \times d}$ in its $j$-th diagonal entry
and $0$ elsewhere. 

\subsection{Auxiliary results}

We use the matrix Bernstein inequality \cite{Tropp12}, stated in
Theorem~\ref{Tropp} for convenience, in order to bound
$\|(\tAP)_S^*(\tAP)_S-\bP_S\|$. Recall the definition \eqref{definco}
of the incoherence matrix.
\begin{theorem}\label{submatrix}
Let $A \in \R^{m \times N}$ be a measurement matrix whose entries
are i.i.d. Bernoulli random variables and $(W_j)_{j=1}^N$ be a
fusion frame with the associated matrix $\Lambda$. Then, for
$\delta \in (0,1)$, the block matrix $\tAP$ satisfies
$$ \|(\tAP)_S^*(\tAP)_S-\bP_S\| \leq \delta $$
with probability at least $1-\vare$ provided
$$
m\geq \delta^{-2} \left( 2\|\Lambda^S\|_{2,\infty}^2
+ \frac{2}{3}\max\{\|\Lambda^S\|,1\} \right) \ln(2sk/\vare).
$$
\end{theorem}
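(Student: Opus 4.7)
The plan is to write $(\tAP)_S^*(\tAP)_S - \bP_S$ as the average of $m$ i.i.d.\ mean-zero block matrices and then apply the matrix Bernstein inequality (Theorem~\ref{Tropp}), working on the $sk$-dimensional invariant subspace $\mathcal{H}_S = \bigoplus_{j\in S} W_j$ rather than on all of $\R^{sd}$. The $(j,k)$ block of $(\tAP)_S^*(\tAP)_S$ is $\frac{1}{m}\sum_{i=1}^m a_{ij}a_{ik}P_jP_k$; for $j=k$ the Bernoulli property $a_{ij}^2=1$ collapses this to $P_j$, which cancels against $\bP_S$ exactly. Hence
\begin{equation*}
(\tAP)_S^*(\tAP)_S - \bP_S \;=\; \frac{1}{m}\sum_{i=1}^m X_i,
\end{equation*}
where $X_i$ is the symmetric $s\times s$ block matrix with vanishing diagonal blocks and $(j,k)$-entry $a_{ij}a_{ik}P_jP_k$ for $j\neq k$. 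Independence of the $X_i$ is immediate, and $\mathbb{E}[a_{ij}a_{ik}]=0$ for $j\neq k$ gives $\mathbb{E}[X_i]=0$. Since $P_j^2=P_j$, one also checks $X_i = \bP_S X_i \bP_S$, so every $X_i$ is effectively an operator on $\mathcal{H}_S$.

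Next I would bound the two quantities Bernstein requires. For the deterministic norm bound, let $v=(v_j)_{j\in S}\in \mathcal{H}_S$ and write $w_j=\|v_j\|_2$. Using $|a_{ij}a_{ik}|=1$ and $\|P_jP_k\|=\alpha_{jk}$, the triangle inequality gives
\begin{equation*}
\|X_i v\|_2^2 \;=\; \sum_{j\in S}\Bigl\|\sum_{k\neq j} a_{ij}a_{ik}P_jP_k v_k\Bigr\|_2^2 \;\leq\; \sum_{j\in S}\Bigl(\sum_{k\neq j}\alpha_{jk}w_k\Bigr)^{\!2} \;=\; \|\Lambda^S w\|_2^2 \;\leq\; \|\Lambda^S\|^2\|v\|_2^2,
\end{equation*}
so $\|X_i\|\leq \|\Lambda^S\|$ almost surely. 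For the variance, a direct expansion using $a_{i\ell}^2=1$ yields $(X_i^2)_{jk} = \sum_{\ell\notin\{j,k\}} a_{ij}a_{ik}P_jP_\ell P_k$; taking expectations kills the off-diagonal blocks and leaves $(\mathbb{E}[X_i^2])_{jj}=\sum_{\ell\in S,\ell\neq j} P_jP_\ell P_j$. Since $P_jP_\ell P_j$ is positive semidefinite with $\|P_jP_\ell P_j\|=\|P_jP_\ell\|^2=\alpha_{j\ell}^2$, the triangle inequality on this block-diagonal matrix gives
\begin{equation*}
\Bigl\|\sum_{i=1}^m \mathbb{E}[X_i^2]\Bigr\| \;\leq\; m\cdot\max_{j\in S}\sum_{\ell\in S,\ell\neq j}\alpha_{j\ell}^2 \;=\; m\,\|\Lambda^S\|_{2,\infty}^2.
\end{equation*}

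Plugging $R=\|\Lambda^S\|$ and $\sigma^2 = m\|\Lambda^S\|_{2,\infty}^2$ into Theorem~\ref{Tropp} on the $sk$-dimensional space $\mathcal{H}_S$, with $t=m\delta$, produces a tail bound of the form $2sk\cdot\exp\!\bigl(-m\delta^2/(2\|\Lambda^S\|_{2,\infty}^2 + \tfrac{2}{3}\|\Lambda^S\|\delta)\bigr)$; setting this $\leq \vare$, solving for $m$, and absorbing $\delta\leq 1$ together with the trivial inequality $\|\Lambda^S\|\leq \max\{\|\Lambda^S\|,1\}$ yields the stated sufficient condition on $m$. The main obstacle I anticipate is the deterministic bound $\|X_i\|\leq \|\Lambda^S\|$: because the blocks of $X_i$ are operator-valued products of projectors, one must argue a Schur-type comparison identifying the operator norm of the block matrix with the spectral norm of the scalar coherence matrix $\Lambda^S$. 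A secondary subtlety is keeping the logarithmic factor at $\ln(2sk/\vare)$ rather than $\ln(2sd/\vare)$, which is why one must exploit the identity $X_i=\bP_S X_i\bP_S$ to restrict Bernstein to the range of $\bP_S$.
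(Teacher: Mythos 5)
Your proposal is correct, and its core is the same as the paper's: your zero-diagonal summand $X_i$ is exactly the paper's $m\bX_\ell = \bY_\ell\bY_\ell^* - \bP_S$ (the Bernoulli property makes the diagonal blocks cancel deterministically, so subtracting the expectation and deleting the diagonal are the same operation), and you arrive at the identical variance bound $\bigl\|\sum_i \E X_i^2\bigr\| \leq m\|\Lambda^S\|_{2,\infty}^2$ via the same block-diagonal computation $(\E X_i^2)_{jj} = \sum_{\ell \neq j} P_j P_\ell P_j$. Your uniform bound is in fact slightly tighter: the Schur-type comparison gives $\|X_i\| \leq \|\Lambda^S\|$ directly, whereas the paper bounds $\frac1m\max\{\|\bY_\ell\|^2-1,1\}$ through its estimate \eqref{basic} and ends up with $\max\{\|\Lambda^S\|,1\}/m$; both suffice for the stated condition. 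The one genuine divergence is how the dimensional factor $2sk$ (rather than $2sd$) is obtained. You restrict matrix Bernstein to the $sk$-dimensional invariant subspace $\mathcal{H}_S$ via the identity $X_i = \bP_S X_i \bP_S$, so only the standard statement of Theorem~\ref{Tropp} is needed. The paper instead keeps the ambient dimension and invokes its rank-based sharpening of Tropp's inequality (the remark containing \eqref{spec_tail}), which requires the summands to be identically distributed, uses $\rank(\E \bX_\ell^2) = sk$, and needs its own sketch of a modified proof of the matrix Laplace-transform bound. Your route is the more elementary and self-contained of the two, and it delivers the same conclusion; the paper's remark is a reusable general-purpose strengthening of the Bernstein inequality, but for this particular theorem the invariant-subspace argument suffices.
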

\begin{proof}
We can assume that $S=[s]$ without loss of generality. Denote ${\bf
Y}_\ell = ( \eps_{\ell j} P_j )_{j\in S}$ for
$\ell \in [m]$ as the $\ell$-th block column vector of $(\tAP)_S^*$.
Observing that $\E(\bY_\ell \bY_\ell^*)_{j,k} = \E
(\eps_{\ell j}P_j \eps_{\ell k}P_k ) =
\delta_{jk}P_jP_k $, we have $\E \bY_\ell \bY_\ell^*
=\bP_S.$ Therefore, we can write
\begin{align*}
(\tAP)_S^*(\tAP)_S-\bP_S = \frac{1}{m} \sum_{\ell=1}^{m} ( \bY_\ell \bY_\ell^* - \E \bY_\ell\bY_\ell^*).
\end{align*}
This is a sum of independent self-adjoint random matrices. We denote
the block matrices $\bX_\ell := \frac{1}{m} (\bY_\ell\bY_\ell^* - \E
\bY_\ell \bY_\ell^*)$ which have mean zero. Moreover,
\begin{align*}
\|\bX_\ell \| &=\frac{1}{m} \max_{\|\bx\|_2=1,\bx \in \mathcal{H}} \left| \la
\bY_\ell \bY_\ell^* \bx,\bx \ra - \la \bP_S \bx,\bx \ra
\right| =
\frac{1}{m} \max_{\|\bx\|_2=1,\bx \in \mathcal{H}} \left| \|\bY_\ell^* \bx \|_2^2 - \| \bx \|_2^2 \right|\\
&\leq \frac{1}{m} \max \left\{ \max_{\|\bx\|_2=1,\bx \in \mathcal{H}} \|\bY_\ell^*
\bx \|_2^2 - 1 , 1 \right\} = \frac{1}{m}\max \left\{ \|
\bY_\ell \|^2 - 1, 1 \right\} .
\end{align*}
We further bound the spectral norm of
the block matrix ${\bf Y}_\ell^*$. We separate a vector $\bx \in
\R^{sd}$ into $s$ blocks of length $d$ and denote
$\bx=(x_i)_{i=1}^s$. Defining the vector $\beta \in \R^s$ with
$\beta_i = \|x_i\|_2$ we have
\begin{align}\label{basic}
\| \bY_\ell^* \|^2 &= \max_{\|\bx\|_2=1} \left\| \sum_{i
=1}^s \eps_i P_i x_i \right\|^2 = \max_{\|\bx\|_2=1}
\sum_{i,j=1}^s \eps_i \eps_j
\la P_i x_i,P_j x_j \ra  \notag \\
&\leq \max_{\|\bx\|_2=1}\sum_{i,j=1}^s | \la P_i P_j
x_j, x_i \ra | \leq \max_{\|\bx\|_2=1} \sum_{i,j=1}^s
\|P_i P_j\| \|x_i\|_2 \|x_j\|_2 \notag \\
&\leq \max_{\|\bx\|_2=1} \sum_{j=1}^s \|x_j\|_2^2 +
\max_{\|\beta\|_2=1} \sum_{i \neq j}
\|P_i P_j\| \beta_i \beta_j \notag \\
&\leq 1+ \max_{\|\beta\|_2=1} \la \beta,
\Lambda \beta \ra \leq 1+ \|\Lambda^S\|.
\end{align}
This implies the estimate
\begin{align*}
\|{\bf X}_\ell \| \leq \frac{\max\{\|\Lambda^S\|,1\}}{m}.
\end{align*}
Furthermore,
\begin{align*}
\E \bX_\ell^2 &= \frac{1}{m^2}\E \left( \bY_\ell \bY_\ell^* \bY_\ell \bY_\ell^* +
\bP_S - \bY_\ell \bY_\ell^* \bP_S -
\bP_S \bY_\ell \bY_\ell^* \right)\\
&=\E  \frac{1}{m^2} \bY_\ell \left( \sum_{j=1}^s P_j  \right)
\bY^*_\ell + \frac{1}{m^2} \bP_S - \frac{1}{m^2} \E ( \bY_\ell \bY^*_\ell )
\bP_S -
\frac{1}{m^2}\bP_S \E ( \bY_\ell \bY^*_\ell ) \\
&= \frac{1}{m^2} \sum_{i=1}^s \bE_{ii} \left( P_i \left(
\sum_{j=1}^s P_j \right) P_i \right) - \frac{1}{m^2} \bP_S.
\end{align*}
In the first equality above, we used the independence of $\eps_{\ell
j}$ for $j \in S$ and the fact that $\eps^2_{\ell j}=1$. Next, we estimate the variance parameter appearing in the noncommutative
Bernstein inequality as
\begin{align*}
\sigma^2 &:= \left\| \sum_{\ell=1}^m \E (\bX_\ell^2) \right\| =
\frac{1}{m} \left\| \sum_{i=1}^s \bE_{ii} \left( P_i \left(
\sum_{j=1}^s P_j
\right) P_i \right) - \bP_S \right\| \\
&= \frac{1}{m} \left\| \sum_{i=1}^s \bE_{ii} \left( P_i \left(\sum_{j\in[s], j \neq i} P_j
\right) P_i \right) \right\| =\frac{1}{m} \max_{i \in [s]} \left\| P_i \left(
\sum_{j \in [s], j \neq i} P_j \right) P_i \right\|.
\end{align*}
We further estimate,
\begin{align*}
  \max_{i \in [s]} \left\| P_i \left( \sum_{j \in [s], j \neq i} P_j \right) P_i
\right\| &= \max_{i \in [s]} \left\|\sum_{j \in [s], j \neq i} P_i P_j P_i
\right\| \leq \max_{i \in [s]} \sum_{\substack{j \in [s] \\  j \neq i}} \| P_i P_j \| \| P_j P_i
\| = \max_{i \in [s]} \sum_{\substack{j \in [s] \\  j \neq i}} \| P_i P_j \|^2 .
\end{align*}
Finally we arrive at
\begin{align*}
\sigma^2 \leq \frac{\|\Lambda^S\|_{2,\infty}^2 }{m}.
\end{align*}
We are now in the position of applying Theorem \ref{Tropp}. It
holds
\begin{align}\label{prob_bound}
\Prob &\left( \|(\tAP)_S^*(\tAP)_S - \bP_S \| >\delta \right) =
\Prob \left( \|
  \sum_{\ell=1}^m \bX_\ell\| >\delta \right) \notag \\
&\leq 2sk \exp \left( - \dfrac{ \delta^2 m/2}{ \|\Lambda^S\|_{2,\infty}^2
+ \max\{\|\Lambda^S\|,1\} \delta/3} \right) \leq 2sk \exp
\left( - \dfrac{ \delta^2 m}{ 2 \|\Lambda^S\|_{2,\infty}^2
+ \frac{2}{3}\max\{\|\Lambda^S\|,1\} } \right),
\end{align}
where we used that $\delta \in (0,1)$. The careful reader may have noticed that $2sk$ appears in front of the
exponential instead of the dimension of $\bX_\ell \in \R^{sd \times
sd}$ as asked by Theorem~\ref{Tropp}. In fact, Theorem~\ref{Tropp}
gives a better estimate if the matrices $\E \bX_\ell^2$
are not full rank, see \eqref{spec_tail} and the remark after
Theorem~\ref{Tropp}. Indeed, in our case since
$\rank(P_j)=\dim(W_j)=k$, we have $\rank(\E \bX_\ell^2)=sk$ which appears in \eqref{prob_bound}. Bounding the
right hand side of \eqref{prob_bound} by $\vare$ completes the
proof.
\end{proof}
We now provide the analogous of the auxiliary lemmas in \cite[Section
12.4]{Foucart13} with slight modifications.
\begin{lemma}\label{aux1}
Let $S$ be a subset of $[N]$ with cardinality $s$ and $\bv \in \R^{S
\times d}$ be a block vector of size $s$ with $v_j \in W_j$ for $j
\in S$. Assume that $m \geq \|\Lambda_S\|_{2,\infty}^2$ and $\max_{i \in
S} \|v_i\|_2 \leq \kappa \leq 1$. Then, for $t > 0$,
\begin{align*}
&\Prob \left( \max_{\ell \in \ov{S}} \| (\tAP)_\ell^* (\tAP)_S \bv
\|_2 \geq \frac{\kappa \|\Lambda_S\|_{2,\infty} }{\sqrt{m}} + t
\right)  \\
&\hspace{2in} \leq N \exp \left( - \frac{t^2 m}{2 \kappa^2
\|\Lambda_S\|_{2,\infty}^2+ 4 \kappa^2 \|\Lambda_S\|_\infty + t \kappa \|\Lambda_S\|_\infty } \right).
\end{align*}
\end{lemma}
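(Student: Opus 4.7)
The plan is to decompose the quantity of interest into a sum of independent, mean-zero random vectors in $\R^d$, apply a vector-valued Bernstein inequality to a single $\ell$, and then union bound over $\ell \in \ov{S}$. Writing out the block products,
$(\tAP)_\ell^*(\tAP)_S \bv = \frac{1}{m} \sum_{i=1}^m Z_i$
with
$Z_i := \epsilon_{i\ell}\, P_\ell \sum_{j \in S} \epsilon_{ij} P_j v_j.$
The $Z_i$ are independent across $i$, and since $\ell \in \ov{S}$ the Rademacher $\epsilon_{i\ell}$ is independent of $(\epsilon_{ij})_{j\in S}$, so $\E Z_i = 0$.

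Next I would derive the two moment bounds required by Bernstein, using the incoherence matrix $\Lambda$ together with the hypothesis $\|v_j\|_2 \le \kappa$. The almost-sure bound follows from the triangle inequality and $|\epsilon_{ij}|=1$:
$\|Z_i\|_2 \le \sum_{j\in S} \|P_\ell P_j\|\,\|v_j\|_2 \le \kappa\,\|\Lambda_S\|_\infty.$
For the second moment, expanding $\|Z_i\|_2^2$ and taking expectation kills the cross-terms by orthogonality of the Rademachers in $S$, leaving
$\E\|Z_i\|_2^2 = \sum_{j\in S} \|P_\ell P_j v_j\|_2^2 \le \kappa^2\,\|\Lambda_S\|_{2,\infty}^2.$
By Jensen's inequality, $\E \|\sum_i Z_i\|_2 \le \sqrt{m}\,\kappa\,\|\Lambda_S\|_{2,\infty}$, so dividing by $m$ produces precisely the offset term $\kappa\,\|\Lambda_S\|_{2,\infty}/\sqrt{m}$ appearing in the statement.

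I would then apply a vector-valued Bernstein inequality to concentrate $\|\tfrac{1}{m}\sum_i Z_i\|_2$ around this expectation-based centering. With the rescaled boundedness parameter $K = \kappa\,\|\Lambda_S\|_\infty/m$ and variance proxy $\sigma^2 = \kappa^2\|\Lambda_S\|_{2,\infty}^2/m$, the resulting tail has an exponent of the Bernstein form $-t^2 m / \bigl(2\kappa^2\|\Lambda_S\|_{2,\infty}^2 + c\, t\, \kappa\,\|\Lambda_S\|_\infty\bigr)$. A union bound over the at most $N-s \le N$ indices $\ell \in \ov{S}$ then produces the factor $N$ in front.

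The main obstacle is matching the denominator $2\kappa^2\|\Lambda_S\|_{2,\infty}^2 + 4\kappa^2\|\Lambda_S\|_\infty + t\,\kappa\,\|\Lambda_S\|_\infty$ exactly. The additive $4\kappa^2\|\Lambda_S\|_\infty$ piece does not come directly from the naive Bernstein output; I expect it is the price of converting the vector Bernstein centering at $\E\|\sum Z_i\|_2$ into the displayed centering at $\sqrt{\sum \E\|Z_i\|_2^2}$, and of choosing a version of vector Bernstein whose coefficient in front of the linear-in-$t$ term is $1$ rather than $2/3$. The hypothesis $m \ge \|\Lambda_S\|_{2,\infty}^2$ enters exactly here: it forces the offset $\kappa\|\Lambda_S\|_{2,\infty}/\sqrt{m} \le \kappa \le 1$, keeping the correction terms of order $\kappa^2 \|\Lambda_S\|_\infty$ harmless and lettng them be absorbed into the denominator. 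Verifying this bookkeeping carefully is where the technical work will lie; the probabilistic core is a textbook application of vector Bernstein once the bounds above are in hand.
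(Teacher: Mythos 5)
Your proposal is correct and takes essentially the same route as the paper's proof: the identical decomposition of $(\tAP)_\ell^* (\tAP)_S \bv$ into independent mean-zero vectors, the same moment bounds $\E\|Z_i\|_2^2 \leq \kappa^2 \|\Lambda_S\|_{2,\infty}^2$ and $\|Z_i\|_2 \leq \kappa \|\Lambda_S\|_\infty$, an application of the vector-valued Bernstein inequality, and a union bound over $\ell \in \ov{S}$. The bookkeeping you flag resolves just as you anticipate: the paper's Lemma~\ref{v_bern} already centers at $\sqrt{\E Z^2}$ and carries the cross term $2K\sqrt{\E Z^2}$ in its denominator, which under the hypothesis $\|\Lambda_S\|_{2,\infty}/\sqrt{m} \leq 1$ becomes the $4\kappa^2\|\Lambda_S\|_\infty$ piece, while loosening the coefficient $2/3$ to $1$ gives the linear-in-$t$ term.
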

\begin{proof}
Fix $\ell \in \ov{S}$. We may assume without loss of generality that
$S=\{1,2,\ldots, s\}$.
Observe that for $i \in [m]$, $\eps_{i \ell}$ are independent from $\eps_{ij}$ for
$j \in S$.
For simplicity we denote the corresponding  matrices as $\bB=(\tAP)_\ell^*$ and
$\bC=(\tAP)_S$. The $i$-th block column and
$i$-th block row are denoted as $\bB_i$ and $\bB^i$ respectively. Note that
\begin{align}\label{sumvec}
(\tAP)_\ell^* (\tAP)_S \bv &= \sum_{i=1}^m \bB_i \bC^i\bv =
\sum_{i=1}^m \sum_{j=1}^s \frac{1}{m} \eps_{i \ell} \eps_{i j}
P_\ell P_j v_j
\end{align}
for $\ell \in \ov{S}$. For convenience we introduce
\begin{align*}
\bB_i \bC^i= \frac{1}{m}
\begin{pmatrix}
\eps_{i \ell} \eps_{i 1} P_\ell P_1 & \eps_{i \ell} \eps_{i 2}
P_\ell P_2 & \cdots & \eps_{i \ell} \eps_{i s} P_\ell P_s
\end{pmatrix}.
\end{align*}
The sum of independent vectors in \eqref{sumvec} will be bounded in
$\ell_2$ norm using the vector valued Bernstein inequality
Lemma~\ref{v_bern}. Observe that the vectors $\bB_i \bC^i \bv$ have
mean zero. Furthermore,
\begin{align*}
&m \E \|\bB_i \bC^i \bv\|_2^2 = \frac{1}{m} \E \sum_{j,k=1}^s
\eps_{i \ell}^2 \eps_{ij} \eps_{ik} \la P_\ell P_j v_j, P_\ell P_k
v_k \ra \\
&= \frac{1}{m} \sum_{j=1}^s \|P_\ell P_jv_j\|_2^2 \leq \frac{1}{m}
\sum_{j=1}^s \|P_\ell P_j\|^2 \|v_j\|_2^2 \leq \frac{\kappa^2}{m}
\|\Lambda_S\|_{2,\infty}^2
\end{align*}
where we used $\|v_j\|_2 \leq \kappa$. We bound $\sigma^2$ appearing
in Lemma~\ref{v_bern} simply due to Remark~\ref{variance} by
$$
m \sigma^2 \leq m \E \|\bB_i \bC^i \bv\|_2^2 \leq \frac{\kappa^2}{m}
\|\Lambda_S\|_{2,\infty}^2.
$$
For the uniform bound, observe that
\begin{align*}
\|\bB_i \bC^i \bv\|_2 &= \frac{1}{m} \left\| \sum_{j=1}^s
\eps_{i \ell} \eps_{i j} P_\ell P_j v_j \right\|_2 \leq \frac{1}{m}
\sum_{j=1}^s \|P_\ell P_j\| \|v_j\|_2 \leq \frac{\kappa}{m}
\|\Lambda_S\|_\infty.
\end{align*}
Then the vector valued Bernstein inequality \eqref{bern3} yields
\begin{align*}
&\Prob \left( \| (\tAP)_\ell^* (\tAP)_S \bv \|_2 \geq \frac{\kappa
\|\Lambda_S\|_{2,\infty} }{\sqrt{m}} + t \right) \\
& \hspace{2in} \leq \exp \left(
- \frac{t^2/2}{\frac{\kappa^2\|\Lambda_S\|_{2,\infty}^2 }{m}  + \frac{2
\kappa \|\Lambda_S\|_\infty}{m} \frac{\kappa
\|\Lambda_S\|_{2,\infty} }{\sqrt{m}} + \frac{t}{3} \frac{\kappa
\|\Lambda_S\|_\infty}{m}} \right).
\end{align*}
Taking the union bound over $\ell \in \ov{S} \subset [N]$ and using
that $\frac{\|\Lambda_S\|_{2,\infty} }{\sqrt{m}} \leq 1$  yields
\begin{align*}
&\Prob \left( \max_{\ell \in \ov{S}} \| (\tAP)_\ell^* (\tAP)_S \bv
\|_2 \geq \frac{\kappa \|\Lambda_S\|_{2,\infty} }{\sqrt{m}} + t
\right) \\
&\hspace{2in} \leq N \exp \left( - \frac{t^2 m}{2 \kappa^2
\|\Lambda_S\|_{2,\infty}^2+ 4 \kappa^2 \|\Lambda_S\|_\infty + t \kappa \|\Lambda_S\|_\infty } \right).
\end{align*}
This completes the proof.
\end{proof}

Next, we prove a similar auxiliary result.

\begin{lemma}\label{aux2} Let $S$ be subset of $[N]$ with cardinality $s$ and $\bv \in \R^{S
\times d}$ be a block vector of size $s$ with $v_j \in W_j$ for $j
\in S$. Assume that $m \geq \|\Lambda^S\|_{2,\infty}^2$. Then, for $t>0$,
\begin{align*}
&\Prob \left( \| [(\tAP)_S^*(\tAP)_S- \bP_S] \bv\|_2 \geq \left(
\frac{ \|\Lambda^S\|_{2,\infty} }{\sqrt{m}}+ t \right) \|\bv\|_2
\right) \\
&\hspace{2in} \leq \exp \left(-\frac{mt^2}{8+ 4 \|\Lambda^S\|_\infty + 2 \|\Lambda^S\|_{2,\infty}^2+ t
(\frac{4}{3} +\frac{2}{3} \|\Lambda^S\|_\infty)} \right).
\end{align*}
\end{lemma}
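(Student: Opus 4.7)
The plan is to apply the vector-valued Bernstein inequality, paralleling the strategy of Lemma~\ref{aux1} but with a different summand. First I write
\[
[(\tAP)_S^*(\tAP)_S - \bP_S]\bv = \sum_{\ell=1}^{m} X_\ell, \qquad X_\ell := \tfrac{1}{m}(\bY_\ell\bY_\ell^* - \bP_S)\bv,
\]
with $\bY_\ell = (\eps_{\ell j} P_j)_{j\in S}$ as in the proof of Theorem~\ref{submatrix}. The $X_\ell$ are independent, mean-zero random vectors in $\R^{sd}$. The $i$-th block simplifies to
\[
((\bY_\ell\bY_\ell^* - \bP_S)\bv)_i = \sum_{\substack{j\in S\\ j\neq i}}\eps_{\ell i}\eps_{\ell j} P_iP_jv_j,
\]
because the diagonal term $\eps_{\ell i}^2 P_i v_i = v_i$ cancels $(\bP_S\bv)_i = v_i$ (using $v_i\in W_i$).

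Next I compute the two parameters needed for vector Bernstein. Using independence of the signs and $\eps_{\ell j}^2=1$,
\[
\E\|(\bY_\ell\bY_\ell^* - \bP_S)\bv\|_2^2 = \sum_{i\in S}\sum_{\substack{j\in S\\ j\neq i}}\|P_iP_jv_j\|_2^2 \le \|\Lambda^S\|_{2,\infty}^2\|\bv\|_2^2,
\]
where the last step groups the sum by $j$ and uses $\|P_iP_j\|=\|P_jP_i\|$ together with the definition of $\|\Lambda^S\|_{2,\infty}$. Hence $M:=\sqrt{\sum_\ell \E\|X_\ell\|_2^2}\le \|\Lambda^S\|_{2,\infty}\|\bv\|_2/\sqrt m$, which will furnish the centering term in the claimed bound. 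For the uniform bound, inequality~\eqref{basic} gives $\|\bY_\ell\bY_\ell^*\|\le 1+\|\Lambda^S\|$, and since $\|\bP_S\|\le 1$ the triangle inequality yields $\|X_\ell\|_2 \le (2+\|\Lambda^S\|_\infty)\|\bv\|_2/m =: K$.

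I then apply the vector Bernstein inequality (Lemma~\ref{v_bern}) in the same form used in Lemma~\ref{aux1}, with the weak variance bounded by $\sum_\ell\E\|X_\ell\|_2^2\le M^2$. Setting $\tau = t\|\bv\|_2$, this gives
\[
\Prob\!\Big(\Big\|\sum_\ell X_\ell\Big\|_2 \ge M + \tau\Big) \le \exp\!\left(-\frac{\tau^2/2}{M^2 + 2KM + K\tau/3}\right).
\]
Substituting the explicit values of $M$ and $K$, and using the hypothesis $m\ge\|\Lambda^S\|_{2,\infty}^2$ to bound $\|\Lambda^S\|_{2,\infty}/\sqrt m \le 1$ (so that $2KM\le(4+2\|\Lambda^S\|_\infty)\|\bv\|_2^2/m$), the $\|\bv\|_2^2$ factors cancel between numerator and denominator. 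Multiplying numerator and denominator by $2m$ then produces exactly the denominator $8 + 4\|\Lambda^S\|_\infty + 2\|\Lambda^S\|_{2,\infty}^2 + t(\tfrac{4}{3}+\tfrac{2}{3}\|\Lambda^S\|_\infty)$ of the statement.

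The only subtlety is the bookkeeping of constants: one must use $\|\bY_\ell\bY_\ell^*-\bP_S\|\le 2+\|\Lambda^S\|_\infty$ obtained via the triangle inequality (rather than the sharper $\max\{\|\Lambda^S\|_\infty,1\}$ used in Theorem~\ref{submatrix}) in order to match the precise coefficients $8$, $4$, $\tfrac{4}{3}$, and $\tfrac{2}{3}$ in the stated exponent. Beyond this minor point, the argument is a direct transcription of the template in Lemma~\ref{aux1}, with no essential new obstacle.
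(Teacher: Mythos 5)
Your proposal is correct and follows essentially the same route as the paper: the same decomposition into independent mean-zero summands $\frac{1}{m}(\bY_\ell\bY_\ell^*-\bP_S)\bv$, the same application of the vector Bernstein inequality (Lemma~\ref{v_bern}) with the weak variance bounded by the strong variance, the same uniform bound $(2+\|\Lambda^S\|_\infty)/m$ via \eqref{basic}, and the same final simplification using $\|\Lambda^S\|_{2,\infty}/\sqrt{m}\le 1$. Your blockwise computation of $\E\|(\bY_\ell\bY_\ell^*-\bP_S)\bv\|_2^2$ (exploiting the cancellation of the diagonal terms) is a slightly cleaner but equivalent version of the paper's expansion of $\la \bY_1\bY_1^*\bY_1\bY_1^*\bv,\bv\ra - 2\|\bY_1^*\bv\|_2^2+1$.
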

\begin{proof}
Again we assume without loss of generality that $\|\bv\|_2=1$ and
$S=\{1,2,\ldots, s\}$. As in the proof of Theorem~\ref{submatrix},
we rewrite the term that we need to bound as
\begin{align}\label{ind_sum}
[(\tAP)_S^*(\tAP)_S- \bP_S] \bv = \frac{1}{m} \sum_{\ell=1}^m
(\bY_\ell \bY_\ell^*-\bP_S) \bv
\end{align}
where $\bY_\ell = (\eps_{\ell i}P_i)_{i=1}^s$ is the $\ell$-th block row
of $(\tAP)_S$. We use the vector valued Bernstein
inequality, Lemma~\ref{v_bern} once again, in order to estimate the
$\ell_2$ norm of this sum. Observe that $\E (\bY_\ell
\bY_\ell^*-\bP_S)\bv ={\bf 0}$ as in the proof of Theorem~\ref{submatrix}.
Furthermore, denoting
\begin{align*}
Z= \left\| \frac{1}{m} \sum_{\ell=1}^m (\bY_\ell \bY_\ell^*-\bP_S)
\bv \right\|_2,
\end{align*}
we have
\begin{align*}
\E Z^2 &= m \E \left\| \frac{1}{m} (\bY_1 \bY_1^*-\bP_S)\bv
\right\|_2^2 \\
&= \frac{1}{m} \E \la (\bY_1 \bY_1^*-\bP_S)\bv, (\bY_1
\bY_1^*-\bP_S)\bv \ra \\
&=\frac{1}{m} \E (\la \bY_1 \bY_1^*\bv , \bY_1
\bY_1^*\bv \ra -2 \la
\bY_1 \bY_1^*\bv,\bv \ra + \la \bv,\bv\ra ) \\
&=\frac{1}{m} \E (\la \bY_1 \bY_1^* \bY_1 \bY_1^*\bv ,
\bv \ra -2 \| \bY_1^*\bv\|_2^2 + 1 ).
\end{align*}
We now estimate the first two terms in the last line above.
First observe that due to $\bY_1^*\bY_1= \sum_{i=1}^s P_i$, it holds
\begin{align*}
\E \la \bY_1 \bY_1^* \bY_1 \bY_1^*\bv , \bv \ra &= \la
\E(\bY_1 \sum_{i=1}^s P_i \bY_1^* )\bv,\bv \ra = \la \sum_{j=1}^s \bE_{jj} ( P_j \sum_{i=1}^s P_i P_j )
\bv,\bv \ra \\
&= \sum_{j=1}^s \la P_j \sum_{i=1}^s P_i P_j v_j,v_j \ra
=\sum_{i,j=1}^s \la P_jP_i P_j v_j,v_j \ra  \\
&\leq \sum_{i,j, i \neq j} \|P_i P_j\|^2 \|v_j\|_2^2 + \sum_{j=1}^s \|v_j\|_2^2 \leq \left( \sum_{j=1}^s \|v_j\|_2^2 \sum_{i \neq j} \|P_i P_j\|^2
\right) +1 \\
&\leq 1 + \left(\max_{j \in S} \sum_{i \neq j} \|P_i P_j\|^2 \right)\sum_{j=1}^s \|v_j\|_2^2 \leq 1+ \|\Lambda^S\|_{2,\infty}^2,
\end{align*}
where we used that $\Lambda^S$ is symmetric and $\|\bv\|_2=1$.
Secondly, since
\begin{align*}
\E\|\bY_1^*\bv\|_2^2 &= \E \| \sum_{i=1}^s \eps_i P_i v_i\|_2^2
=\E \sum_{i,j=1}^s \eps_i \eps_j \la v_i,v_j \ra = \sum_{i=1}^s
\|v_i\|_2^2 =1,
\end{align*}
 we obtain
\begin{align*}
\E Z^2 \leq \frac{\|\Lambda^S\|_{2,\infty}^2 }{m}.
\end{align*}
For the uniform bound, we have
\begin{align*}
\frac{1}{m}\|(\bY_\ell \bY_\ell^*-\bP_S)\bv\|_2 &\leq \frac{1}{m}
\|\bY_\ell
\bY_\ell^*\| \|\bv\|_2 + \frac{1}{m} \|\bv\|_2 \\
&= \frac{1}{m} \|\bY_\ell\|^2 + \frac{1}{m} \leq \frac{2 +
\|\Lambda^S\|_\infty}{m}.
\end{align*}
The last inequality follows from \eqref{basic}.
Finally we estimate the weak variance simply by the strong variance
\begin{align*}
m \sigma^2 \leq \E Z^2 \leq \frac{\|\Lambda^S\|_{2,\infty}^2}{m}.
\end{align*}
Then the $\ell_2$-valued Bernstein inequality \eqref{bern3} yields
\begin{align*}
&\Prob \left( \| [(\tAP)_S^*(\tAP)_S- \bP_S] \bv\|_2 \geq \left(
\frac{ \|\Lambda^S\|_{2,\infty} }{\sqrt{m}} + t \right) \|\bv\|_2
\right) \\
&\hspace{2in}  \leq \exp
\left(-\frac{t^2/2}{\frac{ \|\Lambda^S\|_{2,\infty}^2 }{m}+\frac{(4+2\|\Lambda^S\|_\infty)}{m}\frac{\|\Lambda^S\|_{2,\infty}}{\sqrt{m}}
+ \frac{t (2+\|\Lambda^S\|_\infty)}{3m}} \right).
\end{align*}
Using that $\frac{\|\Lambda^S\|_{2,\infty}}{\sqrt{m}} \leq 1$, we
obtain
\begin{align*}
&\Prob \left( \| [(\tAP)_S^*(\tAP)_S- \bP_S] \bv\|_2 \geq \left(
\frac{ \|\Lambda^S\|_{2,\infty} }{\sqrt{m}}+ t \right) \|\bv\|_2
\right) \\
& \hspace{2in} \leq \exp \left(-\frac{mt^2}{8+ 4 \|\Lambda^S\|_\infty + 2 \|\Lambda^S\|_{2,\infty}^2+ t
(\frac{4}{3} +\frac{2}{3} \|\Lambda^S\|_\infty)} \right).
\end{align*}
This completes the proof.
\end{proof}

Lemma~\ref{aux2} shows that the multiplication with $(\tAP)_S^*(\tAP)_S- \bP_S$ decreases
the $\ell_2$ norm of the vectors with high probability. The next
lemma shows that this is true for $\ell_{2,\infty}$ norm as well.

\begin{lemma}\label{aux22} Assume the conditions of Lemma~\ref{aux2}. Then, for $t>0$,
\begin{align*}
&\Prob \left( \|[(\tAP)_S^*(\tAP)_S- \bP_S] \bv\|_{2,\infty} \geq
\left( \frac{ \|\Lambda^S\|_{2,\infty} }{\sqrt{m}}+ t \right)
\|\bv\|_{2,\infty} \right) \\
&\hspace{2in} \leq
s \cdot  \exp \left(-\frac{mt^2}{4 \|\Lambda^S\|_\infty + 2
    \|\Lambda^S\|_{2,\infty}^2 + \frac{2}{3} t \|\Lambda^S\|_\infty}\right).
\end{align*}
\end{lemma}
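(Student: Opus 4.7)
The plan is to mimic the proof of Lemma~\ref{aux2}, but block-wise. By homogeneity, I would assume without loss of generality that $\|\bv\|_{2,\infty}=1$ and, as before, that $S=\{1,\dots,s\}$. Writing $\bY_\ell=(\eps_{\ell i}P_i)_{i=1}^{s}$ as the $\ell$-th block row of $(\tAP)_S$, for each fixed $j\in S$ the $j$-th block of $[(\tAP)_S^*(\tAP)_S-\bP_S]\bv$ equals $\sum_{\ell=1}^m Z_{\ell,j}$, where
\[
Z_{\ell,j}=\frac{1}{m}\bigl[(\bY_\ell\bY_\ell^*-\bP_S)\bv\bigr]_j=\frac{1}{m}\sum_{k\in S,\,k\neq j}\eps_{\ell j}\eps_{\ell k}P_jP_kv_k.
\]
The key observation is that the diagonal contribution vanishes: the $k=j$ term in $\bY_\ell\bY_\ell^*\bv$ gives $\eps_{\ell j}^2 P_j^2 v_j=v_j$, which cancels exactly against $(\bP_S\bv)_j=v_j$. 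This is what will tighten the bound compared to Lemma~\ref{aux2}.

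Once this is set up, the proof reduces to applying the vector-valued Bernstein inequality (Lemma~\ref{v_bern}) to the independent, mean-zero vectors $Z_{\ell,j}$ and then taking a union bound over the $s$ choices of $j\in S$. The two ingredients needed are a variance bound and a uniform bound. Using independence of the Bernoulli variables and $\|v_k\|_2\leq\|\bv\|_{2,\infty}=1$, the computation
\[
m^2\,\E\|Z_{\ell,j}\|_2^2=\sum_{k\neq j}\|P_jP_kv_k\|_2^2\leq\sum_{k\neq j}\|P_jP_k\|^2\leq\|\Lambda^S\|_{2,\infty}^2
\]
gives $\sum_{\ell=1}^m\E\|Z_{\ell,j}\|_2^2\leq\|\Lambda^S\|_{2,\infty}^2/m$, while the triangle inequality yields the uniform bound
\[
\|Z_{\ell,j}\|_2\leq\frac{1}{m}\sum_{k\neq j}\|P_jP_k\|\,\|v_k\|_2\leq\frac{\|\Lambda^S\|_\infty}{m}.
\]
Notice that, in contrast to Lemma~\ref{aux2}, there is no additional ``$+1$'' or ``$+2$'' contribution in either the variance or the uniform bound, precisely because of the diagonal cancellation above.

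Plugging these into Bernstein, using the hypothesis $m\geq\|\Lambda^S\|_{2,\infty}^2$ to absorb the cross-term $2M\sqrt{\sigma_*^2}$ (via $\|\Lambda^S\|_{2,\infty}/\sqrt{m}\leq 1$) in the same way as in Lemma~\ref{aux2}, produces for each $j$
\[
\Prob\!\left(\Bigl\|\sum_{\ell=1}^m Z_{\ell,j}\Bigr\|_2\geq\frac{\|\Lambda^S\|_{2,\infty}}{\sqrt{m}}+t\right)\leq\exp\!\left(-\frac{mt^2}{4\|\Lambda^S\|_\infty+2\|\Lambda^S\|_{2,\infty}^2+\tfrac{2}{3}t\|\Lambda^S\|_\infty}\right).
\]
A union bound over $j\in S$ then yields the factor $s$ in front of the exponential, giving the claimed inequality. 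The only real step requiring care is checking that the $k=j$ terms cancel after subtracting $\bP_S\bv$, since everything else is bookkeeping; I don't expect a serious obstacle.
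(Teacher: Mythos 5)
Your proposal is correct and follows essentially the same route as the paper's proof: block-wise decomposition of $[(\tAP)_S^*(\tAP)_S-\bP_S]\bv$ with the diagonal terms cancelling, the vector-valued Bernstein inequality (Lemma~\ref{v_bern}) applied per block with the same variance bound $\|\Lambda^S\|_{2,\infty}^2/m$ and uniform bound $\|\Lambda^S\|_\infty/m$, and a union bound over the $s$ blocks. The paper compresses these steps (referring back to earlier computations), but your explicit treatment of the cancellation and the absorption of the cross term via $\|\Lambda^S\|_{2,\infty}/\sqrt{m}\leq 1$ matches its argument exactly.
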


\begin{proof}
We can assume that $S= [s]$ and $\|\bv\|_{2,\infty} = \max_{i \in S}
\|v_i\|_2 =1$ by normalizing $\bv$ by  $\|\bv\|_{2,\infty}$. As in
\eqref{ind_sum}, we write
\begin{align*}
\bZ:=[(\tAP)_S^*(\tAP)_S- \bP_S] \bv = \frac{1}{m} \sum_{\ell=1}^m
(\bY_\ell \bY_\ell^*-\bP_S) \bv
\end{align*}
where $\bY_\ell = (\eps_{\ell i}P_i)_{i=1}^s$ is the $\ell$-th row
of $(\tAP)_S$. We can further write for $i \in S$
$$ Z_i = \sum_{\ell=1}^s \frac{1}{m} \sum_{\substack{j=1 \\ j \neq
    i}}^s \eps_{\ell i} \eps_{\ell j} P_i P_j v_j =: \sum_\ell
X_\ell. $$ The vectors $X_\ell$ are independent, thus we use
Lemma~\ref{v_bern} in order to bound $\|Z_i\|_2$. Since we have done
similar estimations in the previous proofs, we skip some steps and obtain
\begin{align*}
\E  \|Z_i\|_2^2 &= m \E \| X_\ell\|_2^2 \leq \frac{1}{m}
\sum_{\substack{j=1 \\ j \neq i}}^s \|P_i P_j\|^2 \|v_j\|_2^2 \leq \frac{1}{m} \|\Lambda^S\|_{2,\infty}^2,
\end{align*}
where we used that $\|v_j\|_2 \leq 1$. Furthermore, $m \sigma^2 \leq
\frac{1}{m} \|\Lambda^S\|_{2,\infty}^2$. For any $\ell \in [s]$ we
have the uniform bound
\begin{align*}
\|X_\ell\|_2 &= \frac{1}{m} \left\| \sum_{j=1,j \neq i}^s \eps_{\ell i} \eps_{\ell j} P_i P_j v_j \right\|_2 \leq \frac{1}{m} \sum_{j=1,j \neq i}^s \|P_i P_j \|
\|v_j\|_2 \leq \frac{1}{m} \|\Lambda^S\|_\infty.
\end{align*}
Combining these with Lemma \ref{v_bern} and taking the union bound yield
$$
\Prob \left( \max_{i \in S} \|Z_i\|_2 \geq \frac{
\|\Lambda^S\|_{2,\infty} }{\sqrt{m}}+ t \right) \leq s \cdot
\exp \left(-\frac{mt^2}{4 \|\Lambda^S\|_\infty + 2
    \|\Lambda^S\|_{2,\infty}^2 + \frac{2}{3} t
\|\Lambda^S\|_\infty}\right).
$$
\end{proof}

Lastly we present the following lemma before the proof of our main result.

\begin{lemma}\label{aux3} For $t \in (0,\frac{3}{2})$,
$$\Prob ( \max_{i \in \ov{S}} \| (\tAP)_S^* (\tAP)_i\| \geq t) \leq
2(s+1)Nk \exp\left(-\frac{t^2 m}{3 \|\Lambda_S\|_{2,\infty}^2 } \right).$$
\end{lemma}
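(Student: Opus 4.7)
The plan is to fix $i \in \ov{S}$, write $(\tAP)_S^*(\tAP)_i$ as a sum of $m$ independent mean-zero rectangular block matrices, apply the matrix Bernstein inequality of Theorem~\ref{Tropp}, and finish with a union bound over the at most $N$ choices of $i$. The calculation closely parallels those of Theorem~\ref{submatrix} and Lemma~\ref{aux1}; the main new feature is that one must control an operator norm of a rectangular block matrix rather than an $\ell_2$ norm of a vector, so it is the matrix (not vector) form of Bernstein that is needed.

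For fixed $i \in \ov{S}$ I would write
$$(\tAP)_S^*(\tAP)_i \;=\; \sum_{\ell=1}^m \bX_\ell, \qquad \bX_\ell \;=\; \tfrac{1}{m}\bigl(\eps_{\ell j}\eps_{\ell i}\,P_j P_i\bigr)_{j \in S} \in \R^{sd \times d},$$
viewing each $\bX_\ell$ as an $sd \times d$ block column. Because $i \notin S$, for every $j \in S$ the product $\eps_{\ell j}\eps_{\ell i}$ is a Rademacher variable with mean zero, so $\E \bX_\ell = 0$ and the summands are independent. Using $\eps_{\ell i}^2 = 1$ and $\E\,\eps_{\ell j}\eps_{\ell k} = \delta_{jk}$, one computes
$$\sum_{\ell=1}^m \E\,\bX_\ell \bX_\ell^* \;=\; \tfrac{1}{m}\sum_{j \in S}\bE_{jj}\!\bigl(P_j P_i P_j\bigr), \qquad \sum_{\ell=1}^m \E\,\bX_\ell^* \bX_\ell \;=\; \tfrac{1}{m}\,P_i\!\Bigl(\sum_{j \in S} P_j\Bigr)\!P_i.$$
Both spectral norms are bounded by $\|\Lambda_S\|_{2,\infty}^2/m$: the block-diagonal one via the identity $\|P_j P_i P_j\| = \|P_j P_i\|^2$ and maximizing over $j \in S$; the second via the pointwise identity $\langle P_i P_j P_i v, v\rangle = \|P_j P_i v\|_2^2$, which yields $\|P_i(\sum_{j \in S} P_j)P_i\| \leq \sum_{j \in S}\|P_i P_j\|^2 \leq \|\Lambda_S\|_{2,\infty}^2$ (the last inequality uses $i \in \ov{S}$, so $i \neq j$ for all $j \in S$). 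The same identity, together with the fact that $\bX_\ell^*\bX_\ell = \tfrac{1}{m^2}\sum_{j\in S}P_i P_j P_i$ is deterministic (the random signs cancel), delivers the uniform bound $\|\bX_\ell\| \leq \|\Lambda_S\|_{2,\infty}/m$.

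Plugging the variance proxy $\sigma^2 = \|\Lambda_S\|_{2,\infty}^2/m$ and almost-sure bound $R = \|\Lambda_S\|_{2,\infty}/m$ into Theorem~\ref{Tropp} (applied to the Hermitian dilation in order to handle the rectangular shape), the naive prefactor would be the ambient dimension $sd + d = (s+1)d$. However, exactly as in the remark following~\eqref{prob_bound}: since $\rank(P_j) = k$, the block-diagonal matrix $\sum_\ell \E\,\bX_\ell \bX_\ell^*$ has rank at most $sk$, while $\sum_\ell \E\,\bX_\ell^* \bX_\ell$ has rank at most $k$ (its range lies inside $W_i$), so the rank-aware version of Theorem~\ref{Tropp} produces the prefactor $2(s+1)k$. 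For $t < 3/2$ (and $\|\Lambda_S\|_{2,\infty} \geq 1$, otherwise the statement is vacuous) the Bernstein denominator satisfies $\sigma^2 + Rt/3 \leq \tfrac{3}{2}\|\Lambda_S\|_{2,\infty}^2/m$, giving
$$\Prob\bigl(\|(\tAP)_S^*(\tAP)_i\| \geq t\bigr) \;\leq\; 2(s+1)k\,\exp\!\Bigl(-\tfrac{mt^2}{3\,\|\Lambda_S\|_{2,\infty}^2}\Bigr).$$
A union bound over the at most $N$ choices of $i \in \ov{S}$ completes the proof.

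The only delicate point is invoking the rank-aware form of Theorem~\ref{Tropp}, which is what replaces the naive $(s+1)d$ by $(s+1)k$ in the prefactor; this is the same maneuver that appears just after~\eqref{prob_bound}. Once that is granted, the variance and uniform-norm estimates are direct transcriptions of computations already carried out earlier in the section.
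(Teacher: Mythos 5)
Your proposal is correct and takes essentially the same route as the paper's proof: the identical decomposition of $(\tAP)_S^*(\tAP)_i$ into $m$ independent mean-zero rectangular block summands, the same variance and uniform bounds $\sigma^2 \le \|\Lambda_S\|_{2,\infty}^2/m$ and $\|\bX_\ell\| \le \|\Lambda_S\|_{2,\infty}/m$, the same rank-aware prefactor $2(s+1)k$, and the same union bound over $i \in \ov{S}$. The only cosmetic differences are that you reach the rectangular case via the Hermitian dilation of Theorem~\ref{Tropp} while the paper cites the rectangular version, Theorem~\ref{Tropp_rect}, directly, and that your parenthetical assumption $\|\Lambda_S\|_{2,\infty} \ge 1$ when simplifying the Bernstein denominator makes explicit the same step the paper performs implicitly when it invokes $t \in (0,\tfrac{3}{2})$.
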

\begin{proof}
Fix $i \in \ov{S}$. Similarly as before, we write $(\tAP)_S^*
(\tAP)_i$ as a sum of independent matrices,
\begin{align}\label{recsum}
(\tAP)_S^* (\tAP)_i
=\frac{1}{m} \sum_{\ell=1}^m \begin{pmatrix}
\eps_{\ell 1} \eps_{\ell i} P_1 P_i \\ \eps_{\ell 2}\eps_{\ell i} P_2 P_i  \\  \vdots \\
\eps_{\ell s}\eps_{\ell i} P_s P_i
\end{pmatrix} =: \frac{1}{m} \sum_{\ell=1}^m \bY_\ell,
\end{align}
where we assumed $S = [s]$ for simplifying the notation. 
Above we introduced the block column vectors $\bY_\ell \in \R^{sd
\times d}$ which are independent and identically distributed
rectangular matrices. Observe also that $\E \bY_\ell ={\bf 0}$. In order
to estimate the norm of the sum in \eqref{recsum} we will employ
Theorem \ref{Tropp_rect} \cite{Tropp12}  which is a
version of the noncommutative Bernstein inequality for rectangular
matrices. We first bound the
variance parameter
\begin{align}\label{last_var}
\sigma^2=  \max \left\{ \|\sum_{\ell=1}^m \frac{1}{m^2} \E[\bY_\ell
\bY_\ell^*] \|, \|\sum_{\ell=1}^m \frac{1}{m^2} \E[\bY_\ell^*
\bY_\ell] \| \right\}.
\end{align}
We write $\E[\bY_\ell \bY_\ell^*] = \sum_{j=1}^s \bE_{jj} ( P_j P_i P_j ).$
The first term on the right hand side of \eqref{last_var} is estimated as
\begin{align}\label{former_term}
\|\frac{1}{m^2}\sum_{\ell=1}^m  \E[\bY_\ell \bY_\ell^*] \| =
\frac{1}{m} \max_{j \in S} \|P_j P_i P_j\| \leq \frac{1}{m} \max_{j \in S} \|P_j P_i\| \|P_i P_j\| \leq
\frac{\lambda^2}{m}.
\end{align}
We used that $P_i^2= P_i$ and $i \not\in S$. Furthermore, $\E[\bY_\ell^* \bY_\ell] = \sum_{j=1}^s P_i P_j P_i$ and
\begin{align}\label{latter_term}
\frac{1}{m^2} \|\sum_{\ell=1}^m  \E[\bY_\ell^* \bY_\ell] \| &=
\frac{1}{m} \| \sum_{j=1}^s P_i P_j P_i \| \leq \frac{1}{m} \sum_{j=1}^s \| P_i P_j\| \|P_j P_i \| \leq
\frac{\|\Lambda_S\|_{2,\infty}^2 }{m}.
\end{align}
Since \eqref{latter_term} dominates \eqref{former_term}, we have
$$\sigma^2 \leq \frac{\|\Lambda_S\|_{2,\infty}^2 }{m}.$$
For the uniform bound we obtain
\begin{align*}
\|\bY_\ell\|^2 &= \sup_{\substack{ \|x\|_2\leq 1 \\
x\in \R^d }} \| \bY_\ell x \|_2^2 = \sup_{\substack{ \|x\|_2\leq 1 \\
x\in \R^d }} \sum_{j=1}^s \|P_jP_i
x \|_2^2 \\
&\leq \sup_{\substack{ \|x\|_2\leq 1 \\
x\in \R^d }} \sum_{j=1}^s \|P_jP_i\|^2 \|x \|_2^2 \leq
\|\Lambda_S\|_{2,\infty}^2.
\end{align*}
We conclude that $\frac{1}{m}\|\bY_\ell\| \leq \frac{
\|\Lambda_S\|_{2,\infty}}{m}$. Combining these estimates, Theorem \ref{Tropp_rect} yields
$$
\Prob (\| (\tAP)_S^* (\tAP)_i\| \geq t) \leq 2(s+1) k
\exp\left(-\frac{t^2/2}{\frac{\|\Lambda_S\|_{2,\infty}^2}{m} +
\frac{t}{3}\frac{\|\Lambda_S\|_{2,\infty} }{m}} \right).
$$
Taking the union bound over $i \in \ov{S} \subset [N]$ and using
that $t \in (0,\frac{3}{2})$ yields
$$\Prob ( \max_{i \in \ov{S}} \| (\tAP)_S^* (\tAP)_i\| \geq t) \leq
2(s+1)Nk \exp\left(-\frac{t^2 m}{3 \|\Lambda_S\|_{2,\infty}^2 } \right).$$
Above, the dimension of the subspaces $k$ appears instead the ambient
dimension $d$ for the same reasons explained in the proof of Theorem~\ref{submatrix}. This completes the
proof.
\end{proof}

\subsection{Proof of Theorem~\ref{nonuni_main}}\label{golf_proof}
Essentially we follow the arguments in \cite[Section 12.4]{Foucart13}.
We will construct an inexact dual vector as in Lemma~\ref{inexact} satisfying the
conditions there. To this end, we will use the so-called
\textit{golfing scheme} due to Gross \cite{Gross09}.  We partition
the $m$ independent (block) rows of $\AP$ into $L$ disjoint blocks
of sizes $m_1, \ldots, m_L$ and $L$ to be specified later with
$m=\sum_{j=1}^L m_j$. These blocks correspond to row submatrices of
$\AP$ which are denoted by $\AP^{(1)} \in \R^{m_1 d \times Nd},
\ldots,\AP^{(L)} \in \R^{m_L d \times Nd}$, i.e.,
\begin{align*}
{\bf A_P}=
\begin{pmatrix}
& & & \AP^{(1)}& & & \\
& & & \AP^{(2)}& & & \\
& & & \vdots & & & \\
& & &  \AP^{(L)}& & &\end{pmatrix}
\begin{array}{l}
\} m_1 \\ \} m_2 \\ \vdots \\ \} m_L \end{array}
\end{align*}
Set $S=\supp (x)$. The golfing scheme starts with $\bu^{(0)}={\bf 0}$ and
then inductively defines
\begin{align*}
\bu^{(n)}= \frac{1}{m_n} (\AP^{(n)})^* (\AP^{(n)})_S (\sgn(\bx_S) -
\bu_S^{(n-1)}) + \bu^{(n-1)},
\end{align*}
for $n=1,\ldots,L$. The vector $\bu=\bu^{(L)}$ will serve as a
candidate for the inxeact dual vector in Lemma~\ref{inexact}. Thus,
we need to check if it satisfies the two conditions in
\eqref{dual_cond2}. By construction $\bu$ is in the row space of
$\AP$, i.e., $\bu=\AP^* \bh$ for some vector $\bh$ as required in
Lemma~\ref{inexact}. To simplify the notation we introduce
$\bw^{(n)}=\sgn(\bx_S)-\bu_S^{(n)}$. Observe that
\begin{align}\label{star1}
\bu_S^{(n)} - \bu_S^{(n-1)} &= \frac{1}{m_n} (\AP^{(n)})_S ^*
(\AP^{(n)})_S  (\sgn(\bx_S) - \bu_S^{(n-1)}) \notag \\
\bw^{(n-1)}-\bw^{(n)} &= \frac{1}{m_n} (\AP^{(n)})_S ^*
(\AP^{(n)})_S
\bw^{(n-1)} \notag \\
\bw^{(n)} &= \left[ \bP_S- \frac{1}{m_n} (\AP^{(n)})_S ^*
(\AP^{(n)})_S \right] \bw^{(n-1)}
\end{align}
Above we used that $\bP_S \bw^{(n)} = \bw^{(n)}$. Furthermore we have
\begin{align}\label{star2}
\bu^{(n)} - \bu^{(n-1)} &= \frac{1}{m_n} (\AP^{(n)})^*
(\AP^{(n)})_S \bw^{(n-1)} \notag \\
\bu=\bu^{(L)} &= \sum_{n=1}^L \frac{1}{m_n} (\AP^{(n)})^*
(\AP^{(n)})_S \bw^{(n-1)},
\end{align}
where last line follows by a telescopic sum.
We will later show that the matrices
$$
\bP_S- \frac{1}{m_k}(\AP^{(k)})_S^* (\AP^{(k)})_S
$$
are contractions and the norm of the
residual vector $\bw^{(n)}$ decreases geometrically fast, thus
$\bu^{(n)}$ becomes close to $\sgn(\bx_S)$ on its support set $S$.
Particularly, we will prove that $\|\bw^{(L)}\|_2 \leq 1/4$ for a
suitable choice of $L$. In
addition we also need that the off-support part of $\bu$ remains
small as well, satisfying the condition $\max_{i \in \ov{S}} \|u_i\|_2
\leq 1/4$. 

For these tasks, we will use the lemmas proven above. For the
moment, we assume that the following holds for each $n$ with high
probability
\begin{align}\label{est0}
\|\bw^{(n)}\|_{2,\infty} \leq \left(
\frac{\|\Lambda^S\|_{2,\infty} }{\sqrt{m}} + q_n \right)
\|\bw^{(n-1)}\|_{2,\infty}, \ \ n \in [L].
\end{align}
Let $q_n' := \frac{ \|\Lambda^S\|_{2,\infty} }{\sqrt{m}} + q_n$.
Since $\|\bw^{(0)}\|_{2,\infty} = \|\sgn(\bx_S)\|_{2,\infty} = 1$,
we have
\begin{align*}
\|\bw^{(n)}\|_{2,\infty} \leq \prod_{j=1}^n q_j' =: h_n.
\end{align*}
Further assume that the following inequalities hold for each $n$
with high probability,
\begin{align}
& \|\bw^{(n)}\|_2 \leq \left(\frac{ \|\Lambda^S\|_{2,\infty} }{\sqrt{m}} + r_n \right) \|\bw^{(n-1)}\|_2, \ \ n \in [L], \label{est1} \\
&\max_{i \in \ov{S}} \left\| \frac{1}{m_n} (\AP^{(n)})_i^* (\AP^{(n)})_S \bw^{(n-1)} \right\|_2 \leq \frac{h_n \|\Lambda_S\|_{2,\infty} }{\sqrt{m}} + t_n , \ \ n \in [L]. \label{est2}
\end{align}
The parameters $q_n, r_n, t_n$ will be specified later. Now let
$r_n' := \frac{ \|\Lambda^S\|_{2,\infty} }{\sqrt{m}} + r_n$ and
$t_n' := \frac{h_n \|\Lambda_S\|_{2,\infty} }{\sqrt{m}}+t_n$. Then
the relations in \eqref{star1} and \eqref{est1} yield
\begin{align*}
\|\sgn(\bx_S) - \bu_S \|_2 = \|\bw^{(L)}\|_2 \leq \|\sgn(\bx_S)\|_2
\prod_{n=1}^L r_n' \leq \sqrt{s} \prod_{n=1}^L r_n'.
\end{align*}
Furthermore, \eqref{star2} and \eqref{est2} give
\begin{align*}
\max_{i \in \ov{S}} \|u_i\|_2 &= \max_{i \in \ov{S}} \left\|
\sum_{n=1}^L \frac{1}{m_n} (\AP^{(n)})_i^* (\AP^{(n)})_S \bw^{(n-1)}
\right\|_2 \notag \\
&\leq \sum_{n=1}^L \max_{i \in \ov{S}} \left\| \frac{1}{m_n}
(\AP^{(n)})_i^* (\AP^{(n)})_S \bw^{(n-1)} \right\|_2 \notag \\
& \leq \sum_{n=1}^L t_n'.
\end{align*}
Next we define the probabilities $p_0(n)$, $p_1(n)$ and $p_2(n)$ that \eqref{est0}, \eqref{est1} and
\eqref{est2} do not hold respectively. Then by Lemma~\ref{aux22} and independence of the
blocks,
$$p_0(n) \leq \vare,$$
provided
\begin{align}\label{m_cond0}
m_n \geq \left( \frac{4 \|\Lambda^S\|_\infty + 2 \|\Lambda^S\|_{2,\infty}^2 }{q_n^2}+\frac{2 \|\Lambda^S\|_\infty}{3 q_n} \right) \ln(s/\vare).
\end{align}
Also by Lemma~\ref{aux2} and independence of the blocks,
$$p_1(n) \leq \vare$$
provided
\begin{align}\label{m_cond1}
m_n \geq \left( \frac{8+ 4 \|\Lambda^S\|_\infty + 2 \|\Lambda^S\|_{2,\infty}^2 }{r_n^2}+\frac{4/3+ (2/3)
\|\Lambda^S\|_\infty}{r_n} \right) \ln(\vare^{-1}).
\end{align}
Similarly, due to Lemma~\ref{aux1} and independence of the blocks,
$$p_2(n) \leq \vare,$$
provided
\begin{align}\label{m_cond2}
m_n \geq \left( \frac{ 2 h_n^2 \|\Lambda_S\|_{2,\infty}^2 + 4 h_n^2 \|\Lambda_S\|_\infty}{t_n^2}+\frac{h_n
\|\Lambda_S\|_\infty}{t_n} \right) \ln(N/\vare).
\end{align}
We now set the parameters $L, m_n, t_n, r_n, q_n$ for $n \in
[L]$ such that $\|\sgn(\bx_S)-\bu_S\|_2 \leq 1/4$ and $\max_{i \in
\ov{S}} \|u_i\|_2 \leq 1/4$ as required in the Lemma~\ref{inexact}.
We choose
\begin{align*}
&L=\lceil \ln(s)/\ln\ln (N) \rceil +3, \\
& m_n \geq c (1+\|\Lambda_S\|_\infty) \ln(N) \ln(2L \vare^{-1}), \\
&r_n=\frac{1}{4\sqrt{\ln(N)}},\\
& t_n=\frac{1}{2^{n+3}}, \\
& q_n =\frac{1}{8}.
\end{align*}
We can estimate each of $\|\Lambda^S\|_{2,\infty}^2, \|\Lambda_S\|_{2,\infty}^2,  \|\Lambda^S\|_\infty$ by $ \|\Lambda_S\|_\infty$ from above. Then by definitions of $r_n',t_n',h_n,q_n'$, we obtain $r_n'\leq
\frac{1}{2\sqrt{\ln N}}$, $q_n' \leq \frac{1}{2}$, $h_n \leq
\frac{1}{2^n}$ and $t_n' \leq \frac{1}{2^{n+2}}$ for $n=1,\ldots,L$.
Furthermore,
\begin{align*}
\|\sgn(\bx_S)-\bu_S\|_2 \leq \sqrt{s} \prod_{n=1}^L r_n' \leq
\frac{1}{4},
\end{align*}
and
\begin{align*}
\max_{i \in \ov{S}} \|u_i\|_2 \leq \sum_{n=1}^L t_n' \leq
\frac{1}{4}.
\end{align*}
Next we bound the failure probabilities according to our
choices of parameters above. Considering also the conditions
\eqref{m_cond0}, \eqref{m_cond1} and \eqref{m_cond2}, we have
$p_0(n),p_1(n),p_2(n) \leq \vare/ L.$ These yield
$$\sum_{n=1}^L p_0(n) + p_1(n) +p_2(n) \leq 3 \vare.$$
The overall number of samples obey
\begin{align}\label{imp1}
m = \sum_{n=1}^L m_n \geq 2c(1+\|\Lambda_S\|_\infty) L
\ln(N)\ln(L/\vare).
\end{align}
This
is already very close to the proposed condition in the statement of
our theorem. We will strengthen this condition later. Next we look
into the first part of Condition \eqref{dual_cond1} of
Lemma~\ref{inexact}. By Theorem~\ref{submatrix},
$\|(\tAP)_S^*(\tAP)_S-\bP_S\| \leq 1/2$ with probability at least
$1-\vare$ provided
\begin{align}\label{imp2}
m\geq \left( 8\|\Lambda^S\|_{2,\infty}^2
+ \frac{8}{3}\max\{\|\Lambda^S\|,1\} \right) \ln(2sk/\vare)..
\end{align}
This implies that $\|[(\tAP)_S^* (\tAP)_S]_{|\mathcal{H}}^{-1} \| \leq
2$. For the second part of Condition \eqref{dual_cond1} we will use
Lemma~\ref{aux3}. It says that
$$
\Prob ( \max_{i \in \ov{S}} \| (\tAP)_S^* (\tAP)_i\| \geq t) \leq
2(s+1)Nk \exp\left(-\frac{t^2 m}{3 \|\Lambda_S\|_{2,\infty}^2 } \right).
$$
Taking $t=1$ implies
that
$$
\max_{i \in \ov{S}} \| (\tAP)_S^* (\tAP)_i\| \leq 1
$$
with probability at least $1-\vare$ provided
\begin{align}\label{imp3}
m \geq 6 \|\Lambda_S\|_{2,\infty}^2  \ln(N(s+1)k/\vare).
\end{align}
Altogether we have shown that Conditions \eqref{dual_cond1} and
\eqref{dual_cond2} of Lemma~\ref{inexact} hold simultaneously with
probability at least $1-5\vare$ provided Conditions \eqref{imp1},
\eqref{imp2} and \eqref{imp3} hold. Replacing $\vare$ by $\vare/5$,
the main condition of Theorem~\ref{nonuni_main}
$$
m \geq C (1 + \|\Lambda_S\|_\infty ) \ln( N) \ln(sk) \ln(\vare^{-1})
$$
implies all three
conditions above with an appropriate constant $C$ since $\|\Lambda^S\| \leq \|\Lambda^S\|_\infty \leq \|\Lambda_S\|_\infty$ since $\Lambda$ is symmetric.

This ends the proof of our theorem.

\qed


\begin{remark}
The inexact dual method yields a relatively long and technical proof for sparse recovery and involves several auxiliary results.
Other methods used in the compressed sensing literature proved to be hard to apply for our particular case where we work with the block matrix $\AP$ 
which is more structured than a purely random Gaussian matrix.
To name a few of other methods, the exact dual approach developed by J. J. Fuchs  
\cite{Fuchs04} was used for subgaussian matrices in \cite{Ayaz11}, a uniform recovery result for Gaussian matrices
based on the concentration of measure of Lipschitz functions was given by one of the seminal papers by Cand\'{e}s and 
Tao \cite{cata06} and atomic norm approach that was recently given in \cite{Chandra12} 
with far-reaching applications. For instance, particularly the exact dual approach 
involves taking the pseudo-inverse of $\AP$ which loses the structure given by projection matrices $P_i$. 
This structure is crucial because it allows us to prove our results involving the incoherence parameter $\lambda$ 
which is the central theme of this paper. 
\end{remark}

\subsection{A special case}

In this section we give an alternative result for the nonuniform
recovery with Bernoulli matrices. This result involves the parameter
$\lambda$ instead of matrix $\Lambda$.

\begin{theorem}\label{nonuni_corol}Let $\bx \in \mathcal{H}$ be $s$-sparse. Let $A \in \R^{m \times N}$ be Bernoulli matrix
and $(W_j)_{j=1}^N$ be given with parameter $\lambda \in [0,1]$.
Assume that
\begin{align}\label{nonuni_corol_cond}
m \geq C (1 + \lambda s) \ln(Nsk) \ln(\vare^{-1}),
\end{align}
where $C > 0$ is a universal constant. Then with probability at
least $1-\vare$, $(L1)$ recovers $\bx$ from $\by=\AP \bx$.
\end{theorem}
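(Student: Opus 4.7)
The plan is to derive Theorem~\ref{nonuni_corol} as a direct corollary of Theorem~\ref{nonuni_main} by invoking the elementary estimate $\|\Lambda_S\|_\infty \leq \lambda s$ already recorded at the end of Section~\ref{incoher}. Indeed, each row of the submatrix $\Lambda_S$ has at most $s$ nonzero entries and each is bounded by $\lambda$, so the bound is immediate. Substituting it into condition \eqref{nonuni_cond} replaces the prefactor $(1+\|\Lambda_S\|_\infty)$ by the desired $(1+\lambda s)$. Moreover, $\lambda \in [0,1]$ implies the companion bounds $\|\Lambda^S\|_\infty \leq s\lambda$ and $\|\Lambda^S\|_{2,\infty}^2, \|\Lambda_S\|_{2,\infty}^2 \leq s\lambda^2 \leq s\lambda$, which will be used below.

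To reconcile the logarithmic factor $\ln(Nsk)$ in \eqref{nonuni_corol_cond} with the product $\ln(N)\ln(sk)$ appearing in \eqref{nonuni_cond}, I would retrace the ingredients of Section~\ref{golf_proof} with the $\lambda$-based simplifications applied inside Theorem~\ref{submatrix} and Lemmas~\ref{aux1}--\ref{aux3}. With those substitutions, each individual invocation of those lemmas contributes a logarithm of the form $\ln(N/\vare)$, $\ln(sk/\vare)$, or $\ln(N(s+1)k/\vare)$, all uniformly bounded by $\ln(Nsk)\ln(\vare^{-1})$ (for $N, sk, \vare^{-1}$ at least $e^2$). Plugging these bounds into \eqref{imp1}--\eqref{imp3} yields per-condition requirements of the form $m \gtrsim (1+\lambda s)\ln(Nsk)\ln(\vare^{-1})$ and $m \gtrsim (1+\lambda s)\ln(L/\vare) L \ln(N)$ from the golfing step.

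The main obstacle is the multiplicative factor $L = \O(\ln s / \ln\ln N)$ that the golfing scheme in Section~\ref{golf_proof} naturally introduces when one sums the per-step costs $m_n$. To eliminate it in the $\lambda$ setting, I would re-tune the golfing scheme by choosing $L = \O(1)$ and a correspondingly stronger per-step contraction $r_n'$ so that the product $\prod_n r_n' \leq 1/(4\sqrt{s})$ still holds. Since the dominant contributions from the auxiliary lemmas already scale like $(1+\lambda s)$ rather than $s$, the slightly larger $1/r_n^2$ required per step is absorbed by the $(1+\lambda s)$ prefactor, leaving a total sample count of $(1+\lambda s)\ln(Nsk)\ln(\vare^{-1})$. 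As a sanity check, when $\lambda = 0$ the scheme trivializes, in line with the observation in Section~\ref{incoher} that one measurement suffices for exact reconstruction up to the necessary union-bound overhead.
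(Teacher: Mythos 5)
Your first paragraph is fine as far as it goes: $\|\Lambda_S\|_\infty \leq \lambda s$ (and $\|\Lambda^S\|_\infty \leq \lambda s$, $\|\Lambda_S\|_{2,\infty}^2 \leq \lambda^2 s$) substituted into Theorem~\ref{nonuni_main} immediately gives recovery under $m \gtrsim (1+\lambda s)\ln(N)\ln(sk)\ln(\vare^{-1})$. But that is the \emph{weaker} product-of-logs condition; the entire content of Theorem~\ref{nonuni_corol}, as the paper's own remark after it stresses, is the replacement of $\ln(N)\ln(sk)$ by the single factor $\ln(Nsk)$. Your mechanism for achieving this -- re-tuning the golfing scheme with $L=\O(1)$ blocks and contraction product $\prod_{n=1}^L r_n' \leq \tfrac{1}{4\sqrt{s}}$ -- does not work. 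If $L$ is constant, at least one step must have $r_n \leq (4\sqrt{s})^{-1/L}$, and then condition \eqref{m_cond1} for that single step already forces
\begin{align*}
m \;\geq\; m_n \;\geq\; \frac{8+4\|\Lambda^S\|_\infty+2\|\Lambda^S\|_{2,\infty}^2}{r_n^2}\,\ln(\vare^{-1}) \;\geq\; 8\cdot 16^{1/L}\, s^{1/L}\,\ln(\vare^{-1}).
\end{align*}
The factor $1/r_n^2 \approx s^{1/L}$ \emph{multiplies} the variance prefactor $(1+\lambda s)$; it is not an additive term that can be ``absorbed'' by it. In particular, in the most interesting regime of highly incoherent subspaces (say $\lambda \leq 1/s$, so $1+\lambda s \leq 2$), the claimed bound \eqref{nonuni_corol_cond} is $\O(\ln(Nsk)\ln(\vare^{-1}))$ while your scheme demands $m \gtrsim s^{1/L}\ln(\vare^{-1})$, polynomial in $s$. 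Your closing sanity check is wrong for the same reason: at $\lambda=0$ the scheme does not trivialize, because the additive constant $8$ in \eqref{m_cond1} (and the analogous constants in \eqref{m_cond0}) survives and still gets divided by $r_n^2$.

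There is also a second, subtler obstruction that your outline does not address. Suppose you abandon $L=\O(1)$ and instead keep constant contraction ratios, so that (by AM--GM applied to $\sum_n 1/r_n^2$ subject to $\prod_n r_n \leq \tfrac{1}{4\sqrt{s}}$) you are forced back to $L \asymp \ln s$ steps. Then the per-step conditions \eqref{m_cond0} and \eqref{m_cond2}, which come from the union bounds in Lemma~\ref{aux22} (over the $s$ blocks of $S$) and Lemma~\ref{aux1} (over the $N$ off-support indices), each carry a factor $\ln(sL/\vare)$ resp.\ $\ln(NL/\vare)$ \emph{per step}; multiplied by $L \asymp \ln s$ steps this reproduces products like $\ln(s)\ln(N/\vare)$ -- exactly what \eqref{nonuni_corol_cond} claims to eliminate. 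Removing these products requires a genuinely different probabilistic design, in the spirit of the ``RIPless'' golfing argument of Cand\`es and Plan: strong ($\asymp 1/\sqrt{\ln}$) contraction only in the first two steps, $\ell_2$-contraction for the later steps certified with only \emph{constant} per-step probability via Markov's inequality (no union bound), oversampling the number of blocks so that a Chernoff bound tolerates failed steps, and enlarging the off-support tolerances $t_n$ for later steps (exploiting the decay of $h_n$ already achieved) so that the $\ln N$ union-bound cost enters at a reduced per-step rate. None of this is in your proposal, and it is not a cosmetic re-tuning of the parameters $L, r_n, q_n, t_n$ of Section~\ref{golf_proof}. (To be fair, the paper itself only asserts that Theorem~\ref{nonuni_corol} follows ``with slight modifications'' and supplies no details; but your particular modification, as written, provably cannot yield the stated bound.)
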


The proof of this theorem is similar to the one of
Theorem~\ref{nonuni_main} with slight modifications in the
estimations.

\begin{remark}
Theorem~\ref{nonuni_corol} improves Theorem~\ref{nonuni_main} in
terms of the log-factors as $\log(s)$ does not appear in Condition
\eqref{nonuni_corol_cond}. 
Condition~\eqref{nonuni_cond} is slightly better than
\eqref{nonuni_cond} in terms of the incoherence parameter, at
least if there is a true gap between $\|\Lambda\|_\infty$ and $\lambda
s$, which happens if the quantities $\|P_i P_j\|$ are not all close to
their maximal value. The equality is achieved when the subspaces are equi-angular. In the case that they are not equi-angular,
even if only two subspaces align, then $\lambda=1$. In
this case, \eqref{nonuni_corol_cond} suggests that we should not
expect any improvement for the recovery of the sparse vectors with
respect to the standard block sparse case. However, intuitively the
orientation of the other subspaces might still be effective in the
recovery process. A more average measure of incoherence of the
subspaces is captured by $\|\Lambda\|_\infty$ in
\eqref{nonuni_cond}, so Theorem~\ref{nonuni_main} improves for
general orientations of the subspaces up to a slight drawback in the
log-factors. Numerical experiments we have run also support this
result.
\end{remark}


\section{Stable and Robust Recovery}\label{stablesec}

In this section we show that nonuniform recovery for fusion frames
with Bernoulli matrices are stable and robust under
presence of noise. In other words we allow our signal $\bx$ to be
approximately sparse (compressible) and the measurements
$\by$ to be noisy. Our measurement model then becomes
\begin{align}\label{noisy_model}
\by = \AP \bx + \be \ \text{ with } \ \|\be\|_2 \leq \eta \sqrt{m}
\end{align}
for some $\eta \geq 0$. For the reconstruction we employ
\begin{align*}
(L1)^\eta \ \ \hat{\bx}= \text{argmin}_{x \in \mathcal{H}}
\|\bx\|_{2,1} \ \ s.t. \ \ \|\AP \bx - \by\|_2 \leq \eta \sqrt{m}.
\end{align*}

The condition $\|\be\|_2 \leq \eta \sqrt{m}$ in \eqref{noisy_model} is
natural for a vector $\be = (e_j)_{j=1}^m$. For instance, it is implied
by the bound $\|e_j\|_2 \leq \eta$ for all $j \in [m]$. 
We first define the best $s$-term approximation of a vector $\bx$ as follows
$$
\sigma_s(\bx)_1 := \inf_{\|\bz\|_0 \leq s} \|\bx -\bz \|_{2,1}.
$$
Compressible vectors are the ones with small $\sigma_s(\bx)_1$. The
next statement makes Lemma~\ref{inexact} stable and robust under
noise and under passing from sparse to compressible vectors. It is
an extension of \cite[Theorem 4.33]{Foucart13} and its proof is entirely
analogous to the one in there, so we skip it.

\begin{lemma}\label{inexact_robust}
Let $A \in \R^{m \times N}$ and $(W_j)_{j=1}^N$ be a fusion frame
for $\R^d$ and $x \in \mathcal{H}$. Let $S \subset [N]$ be the index
set of the $s$ largest $\ell_2$-normed vectors $x_i$ of $\bx$.
Assume that, for positive constants $\delta, \beta, \gamma, \theta
\in (0,1)$ with $b := \theta + \beta \gamma / (1-\delta) < 1$ and
\begin{align}
\|(\AP)_S^* \AP_S - \bP_S\| &\leq \delta, \label{robust_cond1} \\
\max_{\ell \in \ov{S}} \|(\AP)_S^* (\AP)_\ell \| &\leq \beta.
\label{robust_cond2}
\end{align}
Suppose there exists a block vector $\bu \in \R^{Nd}$ of the form
$\bu = \AP^* \bh$ with block vector $\bh \in \R^{md}$ such that
\begin{align}
\|\bu_S- \sgn(\bx_S)\|_2 &\leq \gamma, \label{robust_cond3} \\ \
\max_{i \in \ov{S}} \|u_i\|_2 &\leq \theta, \label{robust_cond4} \\
\|\bh\|_2 &\leq \tau \sqrt{s}. \label{robust_cond5}
\end{align}
Let noisy measurements $\by = \AP \bx + \be$ be given with
$\|\be\|_2 \leq \eta$. Then the minimizer $\hat{\bx}$ of
\begin{align*}
\min_{\bz \in \mathcal{H}} \|\bz\|_{2,1} \ \ \text{ s.t. } \ \|\AP \bz
- \by\|_2 \leq \eta
\end{align*}
satisfies
$$
\|\bx - \hat{\bx} \|_2 \leq C_1 \sigma_s(\bx)_1 + (C_2 + C_3
\sqrt{s}) \eta,
$$
where
\begin{align*}
&C_1=\left(1+\frac{\beta}{1-\delta}\right) \frac{2}{1-b}, \ \ C_2=2
\frac{\sqrt{1+\delta}}{1-\delta} +
\left(1+\frac{\beta}{1-\delta}\right)\frac{2\gamma
\sqrt{1+\delta}}{(1-\delta)(1-b)} \\
&C_3=\left(1+\frac{\beta}{1-\delta}\right)\frac{2\tau}{1-b}.
\end{align*}
\end{lemma}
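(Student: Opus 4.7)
The plan is to imitate the proof of Lemma~\ref{inexact}, upgrading each step to tolerate two new sources of error: the measurement noise (so $\AP \bv$ is no longer zero, but merely satisfies $\|\AP \bv\|_2 \leq 2\eta$ by the triangle inequality since both $\bx$ and $\hat{\bx}$ are feasible), and the compressibility of $\bx$ (so the minimality $\|\hat{\bx}\|_{2,1} \leq \|\bx\|_{2,1}$ loses an amount proportional to $\sigma_s(\bx)_1$ once we split the norm into on- and off-support contributions). Throughout, set $\bv = \hat{\bx}-\bx$ and let $S$ be the index set of the $s$ largest $\ell_2$-normed blocks of $\bx$, so that $\|\bx_{\ov{S}}\|_{2,1} = \sigma_s(\bx)_1$.

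The first step is the $\ell_{2,1}$-minimality inequality. Writing $\|\hat{\bx}\|_{2,1} = \|\bx_S + \bv_S\|_{2,1} + \|\bx_{\ov{S}} + \bv_{\ov{S}}\|_{2,1}$, lower-bounding the on-support piece by the subgradient identity $\la \sgn(\bx_S), \bx_S + \bv_S \ra$ and the off-support piece by $\|\bv_{\ov{S}}\|_{2,1} - \|\bx_{\ov{S}}\|_{2,1}$, then comparing with $\|\hat{\bx}\|_{2,1} \leq \|\bx\|_{2,1}$ yields
\begin{align*}
\la \sgn(\bx_S), \bv_S \ra + \|\bv_{\ov{S}}\|_{2,1} \leq 2 \sigma_s(\bx)_1.
\end{align*}
I would then expand $\la \sgn(\bx_S), \bv_S \ra$ via the inexact dual exactly as in Lemma~\ref{inexact}, except that $\la \bu, \bv \ra = \la \bh, \AP \bv \ra$ is no longer zero but is bounded in absolute value by $\|\bh\|_2 \cdot 2\eta \leq 2\tau \sqrt{s}\,\eta$ via \eqref{robust_cond5}. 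Applying Cauchy--Schwarz with \eqref{robust_cond3} and \eqref{robust_cond4} delivers $|\la \sgn(\bx_S), \bv_S \ra| \leq \gamma \|\bv_S\|_2 + \theta \|\bv_{\ov{S}}\|_{2,1} + 2\tau \sqrt{s}\,\eta$, and consequently
\begin{align*}
(1-\theta)\|\bv_{\ov{S}}\|_{2,1} \leq 2\sigma_s(\bx)_1 + \gamma \|\bv_S\|_2 + 2\tau \sqrt{s}\,\eta.
\end{align*}

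Bounding $\|\bv_S\|_2$ proceeds as in Lemma~\ref{inexact}, but now from $(\AP)_S \bv_S = \AP \bv - (\AP)_{\ov{S}} \bv_{\ov{S}}$ there is an additional $\AP \bv$ term to carry. Invertibility of $(\AP)_S^*(\AP)_S$ on $\mathcal{H}$ with operator norm at most $(1-\delta)^{-1}$ follows from \eqref{robust_cond1}, and the same condition gives $\|(\AP)_S\| \leq \sqrt{1+\delta}$; combining these with the bound on $(\AP)_S^*(\AP)_{\ov{S}} \bv_{\ov{S}}$ from \eqref{robust_cond2} produces $\|\bv_S\|_2 \leq (1-\delta)^{-1}\bigl[2\sqrt{1+\delta}\,\eta + \beta \|\bv_{\ov{S}}\|_{2,1}\bigr]$. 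Feeding this back into the previous display isolates $\|\bv_{\ov{S}}\|_{2,1}$ on the left, and a final application of $\|\bv\|_2 \leq \|\bv_S\|_2 + \|\bv_{\ov{S}}\|_{2,1}$ combined with careful bookkeeping of the coefficients of $\sigma_s(\bx)_1$, $\eta$, and $\sqrt{s}\,\eta$ reproduces $C_1, C_2, C_3$ as stated. The one subtle point is ensuring the coefficient $(1-b)$ of $\|\bv_{\ov{S}}\|_{2,1}$ remains positive, which is exactly the hypothesis $b<1$; everything else reduces to triangle inequalities and operator-norm bookkeeping, so no new ingredients beyond Lemma~\ref{inexact} are needed.
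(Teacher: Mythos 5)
Your proposal is correct and takes essentially the same approach as the paper, which skips the proof by noting it is ``entirely analogous'' to \cite[Theorem 4.33]{Foucart13}: your block-wise adaptation of that argument --- the minimality splitting yielding the $2\sigma_s(\bx)_1$ slack, the dual-certificate expansion with $|\la \bh, \AP \bv \ra| \leq 2\tau\sqrt{s}\,\eta$, and the $\|\bv_S\|_2$ bound carrying the extra $2\sqrt{1+\delta}\,\eta/(1-\delta)$ term --- is exactly that proof. The bookkeeping you outline does reproduce the stated constants $C_1$, $C_2$, $C_3$, so nothing is missing.
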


In the remainder of this section, we prove a robust and stable
version of the nonuniform recovery result Theorem~\ref{nonuni_main}
for Bernoulli matrices. We also state the result for the
Gaussian case  but do not prove it since it follows very similarly
to the Bernoulli case.

\begin{theorem}\label{robust_main}
Let $\bx \in \mathcal{H}$ and $S \subset [N]$ with cardinality $s$ be
an index set of $s$ largest $\ell_2$-normed entries of $\bx$. Let $A
\in \R^{m \times N}$ be a Bernoulli matrix and $(W_j)_{j=1}^N$ be
given with parameter $\lambda \in [0,1]$. Assume the measurement
model in \eqref{noisy_model} and let $\hat{\bx}$ be a solution to
$(L1)^\eta$. Provided
\begin{align}\label{robust_main_cond}
m \geq C (1 + \|\Lambda_S\|_\infty) \ln(N)\ln(sk) \ln(\vare^{-1}),
\end{align}
then with probability at least $1-\vare$,
\begin{align}\label{recon_error}
\|\bx -\hat{\bx}\|_2 \leq C_1 \sigma_s(\bx)_1 + C_2 \sqrt{s} \eta.
\end{align}
The constants $C,C_1,C_2 >0$ are universal.
\end{theorem}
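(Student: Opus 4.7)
The plan is to reprove Theorem~\ref{nonuni_main} with the robust recovery criterion Lemma~\ref{inexact_robust} substituted for the basic recovery criterion Lemma~\ref{inexact}, retaining essentially all of the existing probabilistic machinery. Since the noise constraint $\|\be\|_2 \leq \eta\sqrt{m}$ is equivalent to $\|\tAP\bz - \by/\sqrt{m}\|_2 \leq \eta$, I would apply Lemma~\ref{inexact_robust} to the rescaled matrix $\tAP = \AP/\sqrt{m}$, which is already the scale at which Theorem~\ref{submatrix}, Lemma~\ref{aux3}, and the golfing iterates operate in the proof of Theorem~\ref{nonuni_main}.

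First I would re-verify conditions \eqref{robust_cond1}--\eqref{robust_cond4} by literally re-running the argument of Section~\ref{golf_proof}. Theorem~\ref{submatrix} with $\delta = 1/2$ yields \eqref{robust_cond1}; Lemma~\ref{aux3} with $t = 1$ gives \eqref{robust_cond2} with $\beta = 1$; and the golfing construction with the same parameter choices $L,\,m_n,\,r_n,\,t_n,\,q_n$ produces an inexact dual $\bu = \tAP^*\bh$ satisfying $\|\bu_S - \sgn(\bx_S)\|_2 \leq 1/4$ and $\max_{i\in\ov{S}}\|u_i\|_2 \leq 1/4$, i.e.\ $\gamma = \theta = 1/4$. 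With these constants, the quantity $b = \theta + \beta\gamma/(1-\delta) = 1/4 + (1/4)/(1/2) = 3/4$ indeed lies in $(0,1)$, as required by Lemma~\ref{inexact_robust}.

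The genuinely new step is verifying condition \eqref{robust_cond5}, namely the bound $\|\bh\|_2 \leq \tau\sqrt{s}$ on the size of the dual pre-image. I would read $\bh$ off the telescoping identity \eqref{star2}: writing the golfing output in the form $\bu = \sum_{n=1}^L (\tAP^{(n)})^*\bh^{(n)}$, one identifies $\bh^{(n)}$ (up to the appropriate normalization) with $(\tAP^{(n)})_S \bw^{(n-1)}$, and then
$$
\|\bh\|_2^2 = \sum_{n=1}^L \|\bh^{(n)}\|_2^2 \;\leq\; \sum_{n=1}^L c \,\|(\tAP^{(n)})_S\|^2 \,\|\bw^{(n-1)}\|_2^2.
$$
Applying Theorem~\ref{submatrix} to each sub-block $\tAP^{(n)}$ (with a union bound absorbing an additional $\ln L$ factor into the universal constant in \eqref{robust_main_cond}) gives $\|(\tAP^{(n)})_S\|^2 \leq 1 + \delta$, and the geometric decay $\|\bw^{(n-1)}\|_2 \leq \sqrt{s}\prod_{j<n} r_j' \leq \sqrt{s}\,(2\sqrt{\ln N})^{-(n-1)}$ established in the proof of Theorem~\ref{nonuni_main} makes the series telescope to $\|\bh\|_2^2 \leq \tau^2 s$ for a universal $\tau$.

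The main obstacle is this last estimate: keeping the sample complexity identical to \eqref{nonuni_cond} while obtaining a uniform spectral-norm bound on all $L$ sub-blocks $(\tAP^{(n)})_S$ and simultaneously exploiting the geometric decay of the residuals. Once \eqref{robust_cond5} is in hand, plugging the constants $\delta, \beta, \gamma, \theta, \tau$ into the formulas for $C_1, C_2, C_3$ in Lemma~\ref{inexact_robust} and absorbing the constant term $C_2$ of the lemma into the $\sqrt{s}$-scaled term (which only loosens the inequality) yields the desired bound $\|\bx - \hat{\bx}\|_2 \leq C_1 \sigma_s(\bx)_1 + C_2\sqrt{s}\,\eta$ in \eqref{recon_error}.
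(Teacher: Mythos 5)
Your overall strategy---re-running the golfing scheme of Section~\ref{golf_proof} to secure \eqref{robust_cond1}--\eqref{robust_cond4}, and isolating \eqref{robust_cond5} as the genuinely new step---is exactly the paper's, and your identification of $\bh$ from \eqref{star2} is correct. The gap is in the final estimate. You keep ``the same parameter choices $L, m_n, r_n, t_n, q_n$'' as in Theorem~\ref{nonuni_main}, i.e.\ golfing blocks of essentially equal size $m_n \approx m/L$, and then claim $\|\bh\|_2^2 \leq \tau^2 s$ for a universal $\tau$. This cannot work. Unwinding the normalization, $\bh^{(n)} = \frac{m}{m_n}(\tAP^{(n)})_S \bw^{(n-1)}$, so
\begin{align*}
\|\bh^{(n)}\|_2^2 \;=\; \frac{m}{m_n}\left\|\tfrac{1}{\sqrt{m_n}}(\AP^{(n)})_S \bw^{(n-1)}\right\|_2^2 \;\lesssim\; \frac{m}{m_n}\,\|\bw^{(n-1)}\|_2^2 ,
\end{align*}
and the factor $\frac{m}{m_n} \approx L$ is \emph{not} a universal constant---your displayed inequality hides it inside the ``$c$''. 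For $n \geq 2$ this factor is harmless, being multiplied by the geometric decay $\prod_{j<n}(r_j')^2 \lesssim (2\ln N)^{-(n-1)}$ with $L \lesssim \ln N$. But the $n=1$ term has no decay: $\|\bw^{(0)}\|_2^2 = \|\sgn(\bx_S)\|_2^2 = s$ exactly, and $\|\tfrac{1}{\sqrt{m_1}}(\AP^{(1)})_S\sgn(\bx_S)\|_2^2$ genuinely concentrates near $s$, so with equal blocks $\|\bh^{(1)}\|_2^2 \approx \frac{m}{m_1}\,s \approx L s$. This is not a loose estimate that a sharper argument could repair; the dual vector built from equal blocks actually has $\|\bh\|_2 \asymp \sqrt{Ls}$. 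Feeding $\tau \asymp \sqrt{L}$ into $C_3$ of Lemma~\ref{inexact_robust} degrades \eqref{recon_error} to $\lesssim \sqrt{Ls}\,\eta$ with $L \approx \ln(s)/\ln\ln(N)$, weaker than the claimed bound with universal constants.

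The fix---and it is exactly what the paper does---is to re-allocate the measurements: choose $m_1 \geq c(1+\|\Lambda_S\|_\infty)\ln(N)\,L\,\ln(2L\vare^{-1})$, i.e.\ make the \emph{first} golfing block a constant fraction of all $m$ rows, keeping $m_n$ as before for $n \geq 2$. Then $\frac{m}{m_1} \leq c$ tames the undamped $n=1$ term, $\frac{m}{m_n} \leq L$ for $n\geq 2$ is absorbed by the decay, and the total $m$ changes only by a constant factor, so \eqref{robust_main_cond} is unaffected. A secondary remark: where you bound $\|\tfrac{1}{\sqrt{m_n}}(\AP^{(n)})_S\|^2 \leq 1+\delta$ by applying Theorem~\ref{submatrix} to every block with a union bound, the paper needs no new probabilistic event at all: by \eqref{star1}, $\|\tfrac{1}{\sqrt{m_n}}(\AP^{(n)})_S\bw^{(n-1)}\|_2^2 = \la \bw^{(n-1)}-\bw^{(n)}, \bw^{(n-1)}\ra \leq 2\|\bw^{(n-1)}\|_2^2$, using Cauchy--Schwarz and the already-established contraction $\|\bw^{(n)}\|_2 \leq \|\bw^{(n-1)}\|_2$. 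Your spectral-norm route would also work (the extra $\ln L$ is lower order), but it is optional; the block re-allocation is the step your proof cannot do without.
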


The proof is analogous to the one of \cite[Theorem
12.22]{Foucart13}. It invokes Lemma~\ref{inexact_robust} which
gives necessary conditions on the measurement matrices for the
robust and stable nonuniform recovery with them. Since the
conditioning assumption \eqref{robust_cond1} requires normalization
of the matrix, we will work with the matrix $\tAP=\frac{1}{\sqrt{m}}
\AP$. Then observe that the optimization problem $(L1)^\eta$ is
equivalent to
\begin{align*}
\min_{\bz \in \mathcal{H}} \|\bz\|_{2,1} \ \ \text{ s.t. } \ \left\|
\tAP \bz - \frac{1}{\sqrt{m}}\by \right\|_2 \leq \eta.
\end{align*}
\begin{proof}
We follow the golfing scheme as in the proof of
Theorem~\ref{nonuni_main}, see Section~\ref{golf_proof}. In
particular, we make the same choices of the parameters $L,r_n,t_n,
q_n, m_n$ as before. We choose $m_n$ as follows
\begin{align*}
m_1 &\geq c (1+\|\Lambda_S\|_\infty) \ln(N) L \ln(2L \vare^{-1}), \\
m_n &\geq c (1+\|\Lambda_S\|_\infty) \ln(N) \ln(2L \vare^{-1}), \ \
n \geq 2.
\end{align*}
These choices change the number of overall samples $m$ only up to a
constant. Then Conditions \eqref{robust_cond1}, \eqref{robust_cond2},
\eqref{robust_cond3}, \eqref{robust_cond4} are all satisfied for
the normalized matrix $\tAP$ with probability at least $1-\vare$
with appropriate choices of variables $\delta, \beta, \gamma,
\theta$. It remains to verify that the vector $\bh \in \R^{md}$
constructed in Section~\ref{golf_proof} as $\bu=\tAP^* \bh$
satisfies Condition~\eqref{robust_cond5}. For simplicity, assume
without loss of generality that the first $L$ values of $n$ are used
in the construction of the dual vector in \eqref{star2}. Then recall
that
$$
\bu= \sum_{n=1}^L \frac{1}{m_n} (\AP^{(n)})^* (\AP^{(n)})_S
\bw^{(n-1)} = \sum_{n=1}^L \frac{m}{m_n} (\tAP^{(n)})^*
(\tAP^{(n)})_S \bw^{(n-1)}.
$$
Hence, $\bu=\tAP^*\bh$ with $\bh^*=((\bh^{(1)})^*, \ldots,
(\bh^{(L)})^*, 0 , \ldots, 0)$ where
$$
\bh^{(n)} = \frac{m}{m_n} (\tAP^{(n)})_S \bw^{(n-1)} \in \R^{m_n d},
\ \ n=1,\ldots, L'.
$$
Then we have
\begin{align*}
\|\bh\|_2^2 &= \sum_{n=1}^L \|\bh^{(n)}\|_2^2 = \sum_{n=1}^L
\frac{m}{m_n} \left\|\sqrt{\frac{m}{m_n}} (\tAP^{(n)})_S \bw^{(n-1)}
\right\|_2^2 \\
&\sum_{n=1}^L \frac{m}{m_n} \left\|\sqrt{\frac{1}{m_n}}
(\AP^{(n)})_S \bw^{(n-1)} \right\|_2^2.
\end{align*}
We also recall the relation \eqref{star1} of the vectors
$\bw^{(n)}$. This gives, for $n \geq 1$,
\begin{align*}
\left\|\sqrt{\frac{1}{m_n}} (\AP^{(n)})_S \bw^{(n-1)} \right\|_2^2
&= \left\langle \frac{1}{m_n} (\AP^{(n)})_S^* (\AP^{(n)})_S
\bw^{(n-1)} ,
\bw^{(n-1)} \right\rangle \\
&=\left\langle \left(\frac{1}{m_n} (\AP^{(n)})_S^* (\AP^{(n)})_S
-\bP_S \right)\bw^{(n-1)},\bw^{(n-1)} \right\rangle +
\|\bw^{(n-1)}\|_2^2 \\
&=\la \bw^{(n)},\bw^{(n-1)}\ra + \|\bw^{(n-1)}\|_2^2 \leq \|\bw^{(n
)}\|_2^2 \|\bw^{(n-1)}\|_2^2 + \|\bw^{(n-1)}\|_2^2.
\end{align*}
Recall from the assumption \eqref{est1} that $\|\bw^{(n)}\|_2^2 \leq
r_n' \|\bw^{(n-1)}\|_2^2 \leq \|\bw^{(n-1)}\|_2^2$. Then we obtain
\begin{align*}
\left\|\sqrt{\frac{1}{m_n}} (\AP^{(n)})_S \bw^{(n-1)} \right\|_2^2
&\leq 2 \|\bw^{(n-1)}\|_2^2 \leq \|\bw^{(0)}\|_2^2 \prod_{j=1}^{n-1}
(r_j')^2 \\
&= 2 \|\sgn(\bx)_S\|_2^2 \prod_{j=1}^{n-1} (r_j')^2 = 2s
\prod_{j=1}^{n-1} (r_j')^2.
\end{align*}
Assume that $m \leq C(1+\lambda s) \ln(N)\ln(sk) \ln(2\vare^{-1})$
so that $m$ is just large enough to satisfy
\eqref{robust_main_cond}. Recall the definition of $L=\lceil \ln(s)/\ln\ln (N) \rceil +3$.
Then by our choices of $m_n$, we have $\frac{m}{m_n} \leq L$ for $n
\geq 2$ and $\frac{m}{m_1} \leq c$ for some $c > 0$. (If $m$
is much larger, one can rescale $m_n$
proportionally to achieve the same ratio.) This yields
\begin{align*}
\|\bh\|_2^2 & \leq 2s \sum_{n=1}^L \frac{m}{m_n} \prod_{j=1}^{n-1}
(r_j')^2 \leq 2 s \left( \frac{c}{2 \ln(N)} + \sum_{n=2}^L L \prod_{j=1}^{n-1}
\frac{1}{2\ln{N}} \right) \\
&\leq 2 C' s \left(1 + \frac{L}{2 \ln(N)} \frac{1}{[1-1/ (2\ln(N))]}  \right) \leq C'' s,
\end{align*}
where we used the convention $\prod_{j=1}^0 (r_j')^2 =1$. Therefore, all conditions of
Lemma~\ref{inexact_robust} are satisfied for $\bx$ and $\tAP$ with
probability at least $1-\vare$. This completes the proof.
\end{proof}

We now state the result for Gaussian matrices
and skip the proof.

\begin{theorem}\label{robust_gauss}
Let $\bx \in \mathcal{H}$. Let $A \in \R^{m \times N}$ be a Gaussian
matrix and $(W_j)_{j=1}^N$ be given with parameter $\lambda \in
[0,1]$ and $\dim(W_j)=k$ for all $j$. Assume the measurement model
in \eqref{noisy_model} and let $\hat{\bx}$ be a solution to
$(L1)^\eta$. If
\begin{align*}
m \geq \tilde{C} (1+\lambda s) \ln^2 (6 Nk ) \ln^2(\vare^{-1}),
\end{align*}
then with probability at least $1-\vare-N^{-c}$,
$$
\|\bx -\hat{\bx}\|_2 \leq C_1 \sigma_s(\bx)_1 + C_2 \sqrt{s} \eta.
$$
The constants $\tilde{C},C_1,C_2,c >0$ are universal.
\end{theorem}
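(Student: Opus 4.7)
The plan is to mirror the proof of Theorem~\ref{robust_main} for the Bernoulli case, substituting Gaussian concentration arguments wherever bounded random variable tools (matrix Bernstein, Hoeffding-type) were used. The backbone is exactly Lemma~\ref{inexact_robust}: I need conditions \eqref{robust_cond1}--\eqref{robust_cond5} for the normalized matrix $\tAP = \frac{1}{\sqrt{m}}\AP$ with appropriately small constants $\delta,\beta,\gamma,\theta$ and a bounded dual witness $\bh$.

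First, I would work with the rescaled matrix $\tAP$ so that $\E (\tAP)_S^*(\tAP)_S = \bP_S$. The Gaussian analogues of Theorem~\ref{submatrix} and Lemmas~\ref{aux1}--\ref{aux3} would be established using tools adapted to unbounded entries: specifically, the Hanson--Wright inequality (for quadratic forms like $\|(\tAP)_S^*(\tAP)_S - \bP_S\|$), the noncommutative Bernstein inequality for subexponential matrices from \cite[Section 3.3]{Ayaz14}, and standard Gaussian concentration for linear combinations. The ``$6Nk$'' factor in the logarithms and the extra $N^{-c}$ failure probability indicates that a standard truncation argument is being invoked (cap each $|g_{ij}|$ at $\sqrt{c\ln(Nk)}$), contributing $N^{-c}$ to the failure probability and a $\ln^2$ factor rather than $\ln\ln$ in the sample complexity. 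These analogues give $\|(\tAP)_S^*(\tAP)_S - \bP_S\| \leq \delta$ and $\max_{\ell \in \ov{S}} \|(\tAP)_S^*(\tAP)_\ell\| \leq \beta$ under the hypothesis $m \gtrsim (1+\lambda s)\ln^2(6Nk)\ln^2(\vare^{-1})$.

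Next I would run the golfing scheme with the same parameters $L = \lceil \ln(s)/\ln\ln(N)\rceil + 3$, $q_n = 1/8$, $r_n = 1/(4\sqrt{\ln N})$, $t_n = 2^{-(n+3)}$ as in Section~\ref{golf_proof}, partitioning the Gaussian rows into $L$ independent blocks of sizes $m_1 \geq cL$-times-the-base-block and $m_n \geq c(1+\lambda s)\ln(6Nk)\ln(2L/\vare)$ for $n \geq 2$. The same recursion for $\bw^{(n)}$ yields $\|\sgn(\bx_S) - \bu_S\|_2 \leq \gamma \leq 1/4$ and $\max_{i \in \ov{S}} \|u_i\|_2 \leq \theta \leq 1/4$, provided the Gaussian-adapted auxiliary bounds hold on each block, with total failure probability at most $\vare + N^{-c}$ after the union bound with the truncation event.

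Finally, the verification of \eqref{robust_cond5} (the bound $\|\bh\|_2 \leq \tau\sqrt{s}$) proceeds exactly as in the Bernoulli case: writing $\bh^{(n)} = \frac{m}{m_n}(\tAP^{(n)})_S \bw^{(n-1)}$, expanding $\|(\tAP^{(n)})_S \bw^{(n-1)}\|_2^2 = \langle \bw^{(n)},\bw^{(n-1)}\rangle + \|\bw^{(n-1)}\|_2^2$ via \eqref{star1}, and telescoping through the geometric decay $\|\bw^{(n-1)}\|_2 \leq \prod_j r_j' \leq (2\ln N)^{-(n-1)/2}\sqrt{s}$ yields $\|\bh\|_2^2 \leq C''s$, so $\tau$ is a universal constant. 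With all conditions of Lemma~\ref{inexact_robust} verified, the conclusion $\|\bx -\hat{\bx}\|_2 \leq C_1 \sigma_s(\bx)_1 + C_2 \sqrt{s}\eta$ follows directly (the $C_2$ term absorbs the non-sparse $\sqrt{s}\eta$ contribution since $\tau$ is bounded).

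The main obstacle is establishing the Gaussian analogues of the auxiliary concentration lemmas without the convenient boundedness $|\eps_{ij}| = 1$. In particular, the spectral norm estimate $\|\bY_\ell\|^2 \leq 1 + \|\Lambda^S\|$ from \eqref{basic}, which is uniform and deterministic in the Bernoulli case, becomes probabilistic and requires a Gaussian chaos tail bound; similarly the clean computation of $\E\bX_\ell^2$ in Theorem~\ref{submatrix} becomes more delicate because fourth moments of Gaussians do not collapse as nicely. Handling this cleanly (rather than through truncation) is why the proof in \cite[Section 3.3]{Ayaz14} is ``considerably long'' and why the authors opted to defer the details.
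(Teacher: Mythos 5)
Your proposal takes essentially the same route the paper intends: the paper itself skips the proof of this theorem, saying only that it ``follows very similarly to the Bernoulli case,'' i.e., verify conditions \eqref{robust_cond1}--\eqref{robust_cond5} of Lemma~\ref{inexact_robust} for $\tAP$ via the golfing scheme of Section~\ref{golf_proof}, with the Gaussian concentration analogues of Theorem~\ref{submatrix} and Lemmas~\ref{aux1}--\ref{aux3} deferred to \cite[Section~3.3]{Ayaz14} because the bounded-variable tools (matrix and vector Bernstein) no longer apply directly --- exactly the skeleton you lay out. Your reconstruction of the $\|\bh\|_2 \leq \tau\sqrt{s}$ verification and your identification of the unboundedness of Gaussian entries (and the resulting $N^{-c}$ term and $\ln^2$ factors) as the technical obstacle are consistent with the paper's remarks, so there is nothing to flag.
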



\section{Numerical Experiments}
In this section, we present numerical experiments in order to highlight
important aspects of the sparse reconstruction in the fusion frame
(FF) setup. The experiments illustrate our theoretical results and
show that when the subspaces are known, one
can significantly improve the recovery of sparse vectors. In all of
our experiments, we use SPGL1 \cite{Berg07,Berg08} to solve the
$\ell_{2,1}$-minimization problems.

\begin{figure}[!t]
\centering
\subfigure[\label{fig1a}]{\includegraphics[width=3.2in,height=2.3in]{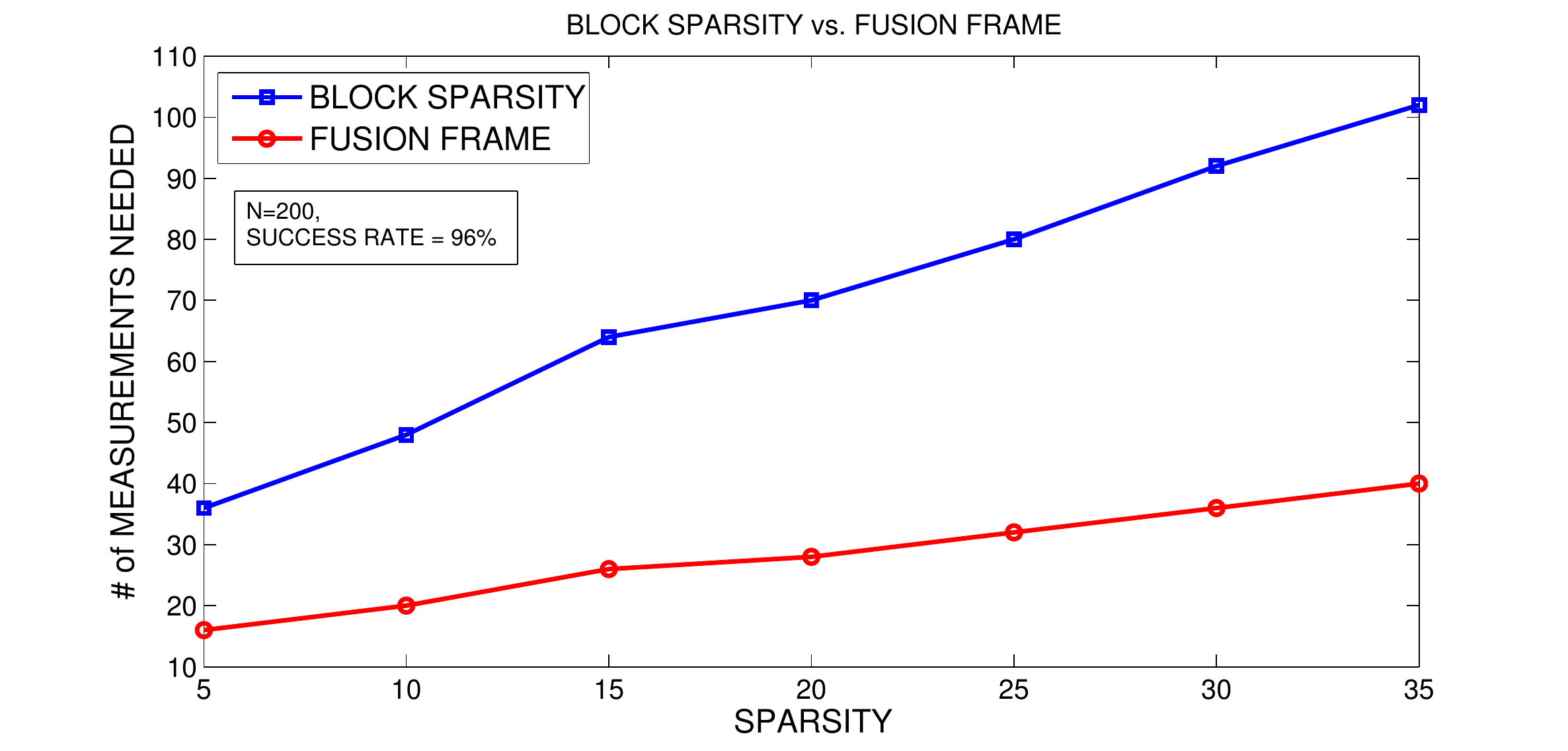}}
\subfigure[\label{fig1b}]{\includegraphics[width=3.2in,height=2.3in]{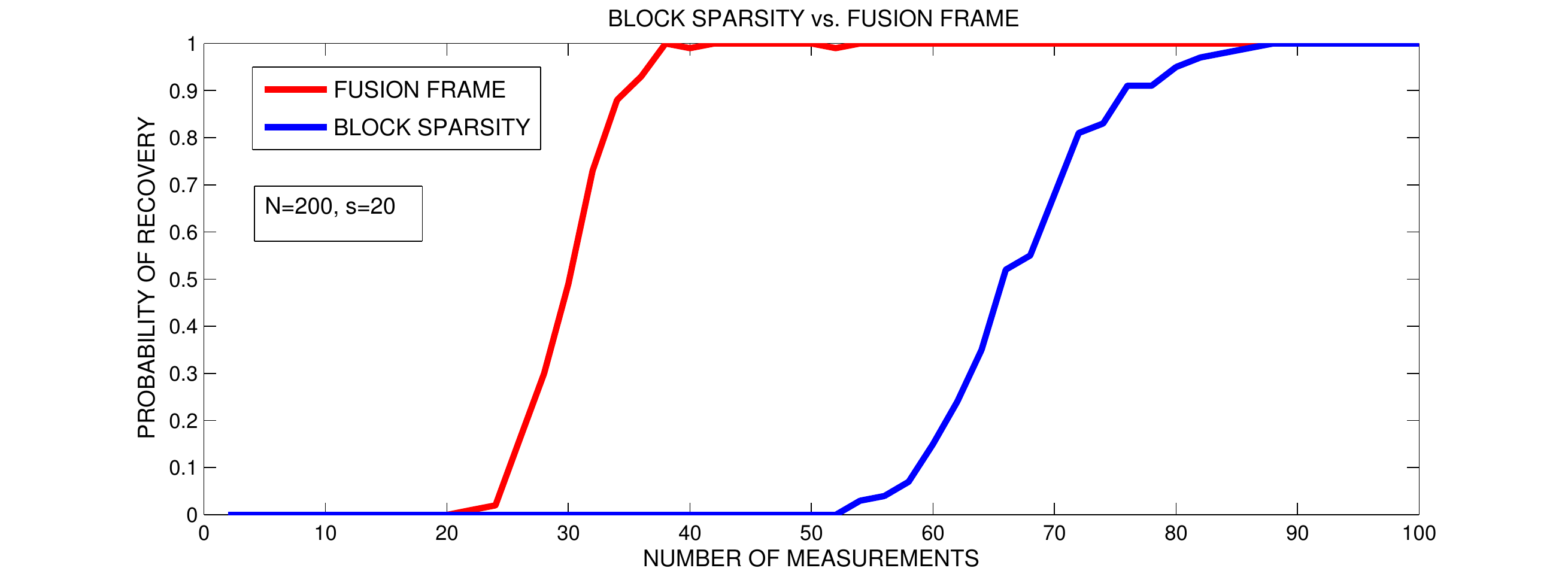}}
\caption{'block' vs. 'fusion frame' sparsity} \label{fig1}
\end{figure}

\paragraph{General setup:}
We generate subspaces randomly, which allows us to generate fusion frames with different values of $\lambda$
and $\Lambda$. Particularly, for $N$ subspaces in $\R^d$ each with
dimension $k$, we generate $N \cdot k$ random vectors from $\mathcal{N}(0,I_d)$ and group them to form the basis for the subspaces. Each
such a random orientation of the subspaces yields a parameter
$\lambda$. In order to obtain a different $\lambda$, it is enough to
vary $d$ or $k$. When $N$ is fixed, $\lambda$ increases with
increasing $k$ and decreasing $d$.

For the measurement matrices, we generate the normalized matrix
$\tilde{A} = \frac{1}{\sqrt{m}} A$ where $A \in \R^{m \times N}$ is
a Gaussian matrix. For a sparsity level $s$, sparse vectors are
generated in the following way: We choose the support set $S$
uniformly at random, then we sample a Gaussian vector in each
subspace in this support set. $N$ is kept fixed throughout the
experiment at hand. In our experiments we work with the parameter
$\|\Lambda_S\|_\infty$ introduced in Section~\ref{incoher}. Since
the random subspaces are not equiangular, this parameter reflects
the linear relation between $m$ and $s$ better than $\lambda$. We
work with the normalized parameter
$$
\lambda_{\eff} = \frac{\|\Lambda_S\|_\infty}{s}.
$$

\paragraph{Exact sparse case:}
In Fig.~\ref{fig1}, we show that the knowledge of the subspaces
improves the recovery. To that end, we fix a fusion frame with
$N=200$ subspaces in $\R^d$ with $\lambda_{\eff} \approx 0.6$. Then
we vary the sparsity level $s$ from $5$ to $35$, and generate an
$s$-sparse vector $\bx$ in the fusion frame. For each $s$, we vary
the number of measurements $m$ and compute empirical recovery rates
via the programs
\begin{align}\label{FF_prog}
\text{(FF)} \ \hat{\bx}= \text{argmin}_{\bx \in \mathcal{H}}
\|\bx\|_{2,1} \ \ s.t. \ \ \AP \bx=\by,
\end{align}
\begin{align}\label{block_prog}
\text{(block)} \ \hat{\bx}= \text{argmin}_{\bx} \|\bx\|_{2,1} \ \
s.t. \ \ {\bf A_I} \bx=\by.
\end{align}
For the whole period, we leave the vector to be recovered fixed. Repeating
this test 100 times with different random $A$ for each choice of
parameters $(s,m,N)$ provides an empirical estimate of the success
probability. In Fig.~\ref{fig1a}, we plot $m$ which yields at least
$96\%$ success rate for each $s$. The difference in two plots is due
to the incoherence of the subspaces, i.e., $\lambda{\eff}$.
Fig.~\ref{fig1b} ($d=3$, $k=1$) shows the transition from the
unsuccessful regime to the successful regime for the sparsity level
$s=20$ for both cases (FF and block sparsity). The transition for
the FF case occurs at a smaller value $m$ which reflects
Fig.~\ref{fig1a} in a different way. As a consequence, the
assumption $\bx \in \mathcal{H}$ in \eqref{FF_prog}
allows us to to recover $\bx$ with far less measurements compared to
\eqref{block_prog} where such a constraint is not used.

Theorem~\ref{nonuni_main} suggests that there is a linear relation
between the number of measurements $m$ and the parameter
$\lambda_{\eff}$. The experiment depicted in Fig.~\ref{fig2} is
designed to reflect this relation. We generate fusion frames with
$N=180$ subspaces with various $\lambda_{\eff}$ which is managed by
changing $d$ and keeping $k=3$ fixed. Then in each fusion frame, a
vector $\bx$ with sparsity $s=25$ is generated and the number of
measurements $m$ that suffices for recovery is determined. The plot
yields an almost linear relation in parallel to the theoretical
result.

\begin{figure}[!t]
\centering
\includegraphics[width=3.2in,height=2.3in]{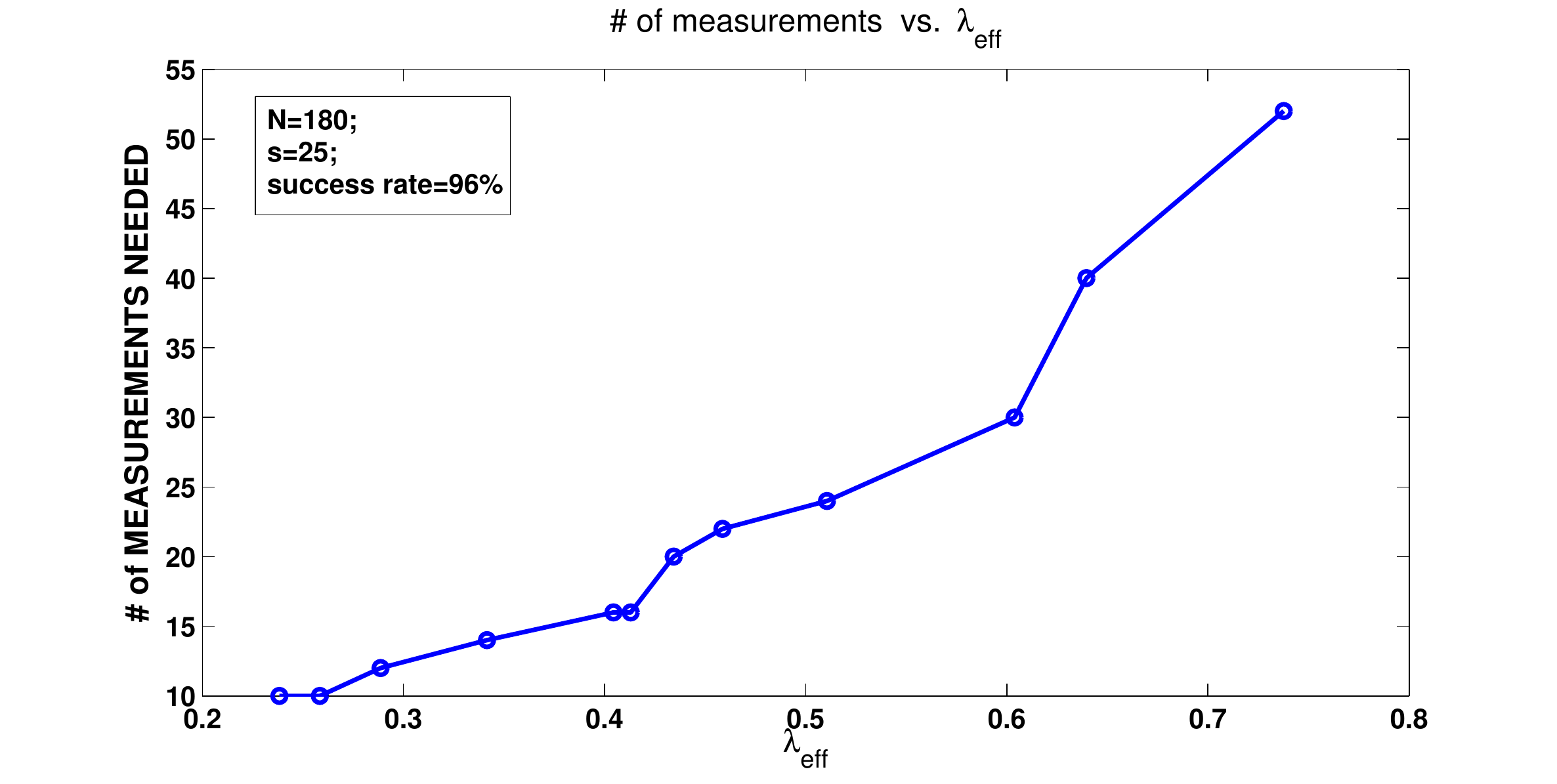}
\caption{$m$ vs. $\lambda_{\eff}$} \label{fig2}
\end{figure}

\paragraph{Stable case:} In this part, we generate scenarios that
allude to the conclusions of Theorems~\ref{robust_main} and
\ref{robust_gauss}. In a fusion frame of $N=200$
subspaces, we generate a signal $\bx$ composed of $\bx_S$, supported
on an index set $S$, and a signal $\bz_{\ov{S}}$ supported on
$\ov{S}$. We then normalize $\bx_S$ and $\bz_{\ov{S}}$ so that
$\|\bx_S\|_{2,1}= \|\bz_{\ov{S}}\|_{2,1}=1$ and produce $\bx = \bx_S +
\theta \bz_{\ov{S}}$ where $\theta \in [0,1]$. Then $\bx$ is our
compressible vector where compressibility is controlled with
$\theta$. For measurement, we choose the normalized Gaussian matrix
$A \in \R^{m\times N}$. We measure $\by = \AP \bx$ and then run the
program $(L1)$  and measure the reconstruction error $\|\bx -
\hat{\bx}\|_2$. We repeat this test $20$ times for a fixed $\bx$
with $\theta=0.12$ in order to obtain an average recovery error for
different values of $m$. Fig.~\ref{fig3a} reports the results of
this experiment performed for different fusion frames with various
values of $\lambda_{\eff}$ and also for the block sparsity case. The
decrease in the reconstruction error with increasing $m$ is natural
even though it is not suggested directly by the theoretical results.
Indeed, one would expect that increasing the number of measurements
would enhance the recovery conditions and yield an improved
reconstruction.

For the noisy case, similarly, we generate noisy observations $\AP
\bx_S + \sigma \be$, of a sparse signal $\bx_S$ where $\|\bx_S\|_2 =
\|\be\|_2 =1$ and $\sigma=0.06$. Here, all entries of the noise
vector $\be$ are chosen i.i.d from the standard Gaussian
distribution and then properly normalized. We then run the robust
$(L1)^\eta$ program and measure the reconstruction error
$\|\bx - \hat{\bx}\|_2$. We plot the average of this error vs. the
number of measurements in Fig~\ref{fig3b} for different values of
$\lambda_{\eff}$.

\begin{figure}[!t]
\centering
\subfigure[\label{fig3a}]{\includegraphics[width=3.2in,height=2.3in]{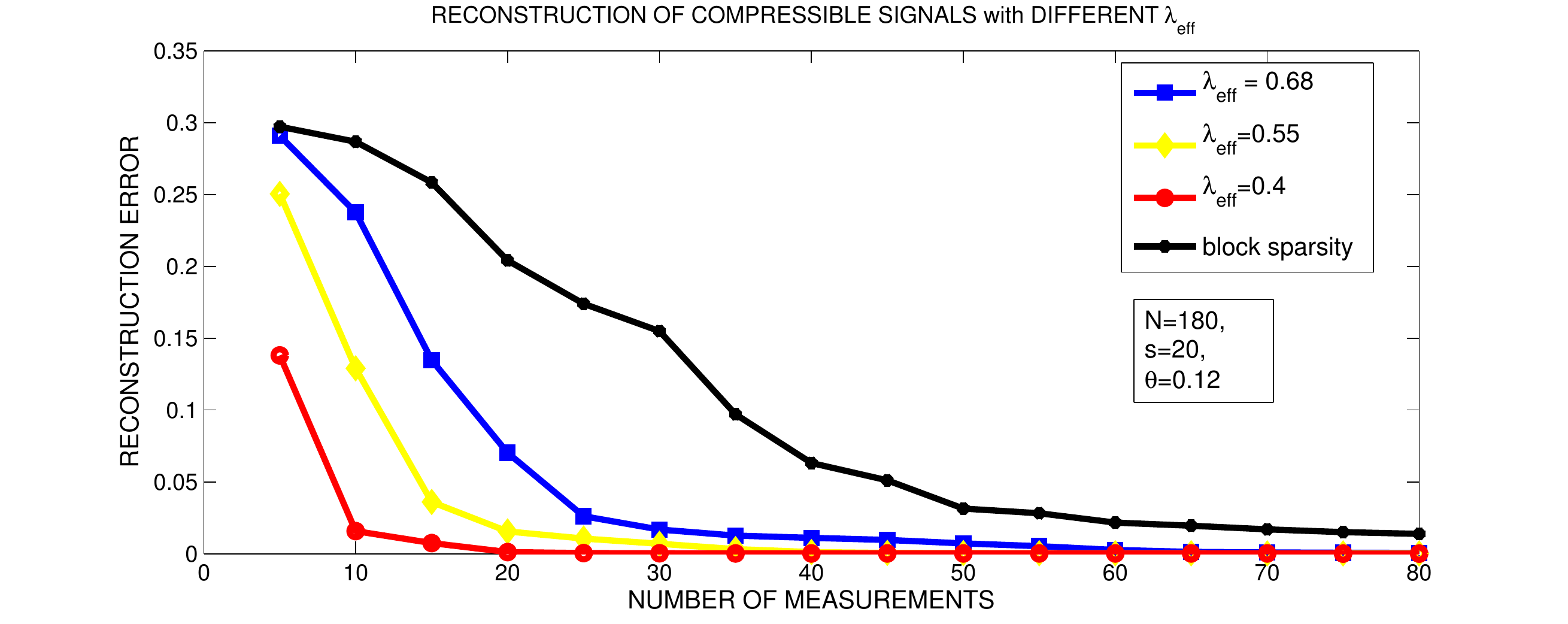}}
\subfigure[\label{fig3b}]{\includegraphics[width=3.2in,height=2.3in]{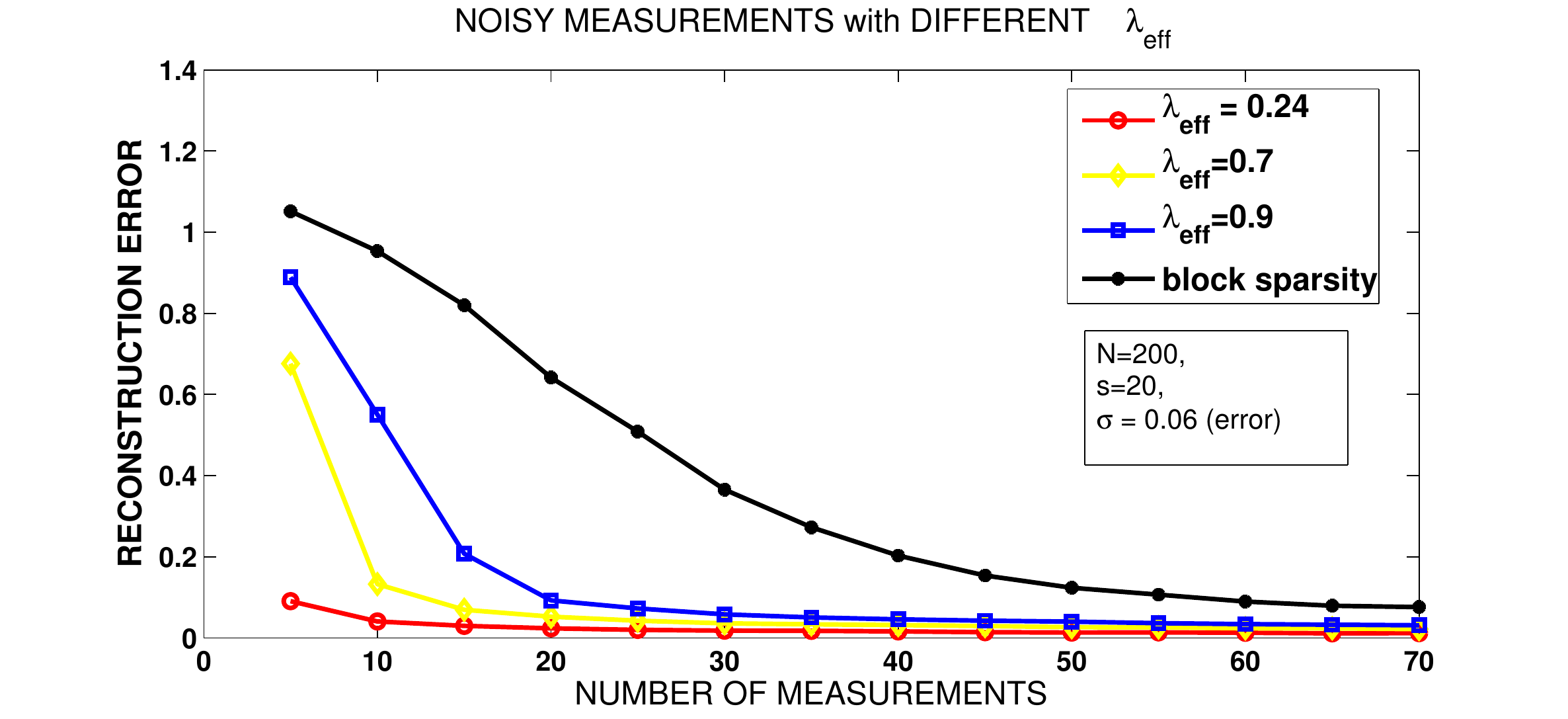}}
\caption{ $\|\hat{\bx}-\bx\|_2$ vs. 'm' }\label{fig3}
\end{figure}

Fig.~\ref{fig4a} depicts the relation between the reconstruction
error and the noise level $\sigma$ for different values of
$\lambda_{\eff}$. In this setup, $N=200$, $s=30$ and $m=50$ are
fixed, and a sparse vector $\bx$ in the fusion frame with specific
value of $\lambda_{\eff}$ is generated. For each value of $\sigma$
we plot the average reconstruction error. Results manifest the
linear relation between $\sigma$ and $\|\bx - \hat{\bx}\|_2$ given
in \eqref{recon_error}. Again, we obtain a better reconstruction quality when
$\lambda_{\eff}$ is smaller.

Finally, we examine the relation between compressibility and the
reconstruction error using a different model than described earlier.
In Fig.~\ref{fig4b}, we plot the results of an experiment in which we
generate signals $\bx$ in a fusion frame with $N=200$, with sorted
values of $\|x_j\|_2$ that decay according to some power law. In
particular, for various values of $ 0 < q <1$, we set $\|x_j\|_2 = c
j^{-1/q}$ such that $\|\bx\|_2=1$. We then measure $\bx$ with
Gaussian matrices $A$ and compute the average reconstruction errors
via $(L1)$ program. Note that the higher the value of $q$, the less
compressible the signal is. The results indicate that reconstruction
of error decreases when the compressibility of the signal increases as declared in
\eqref{recon_error}. We can also see the improvement in the
reconstruction when the subspaces are more incoherent, i.e., they
have smaller $\lambda_{\eff}$.

\begin{figure}[!t]
\centering
\subfigure[\label{fig4a}]{\includegraphics[width=3.2in,height=2.3in]{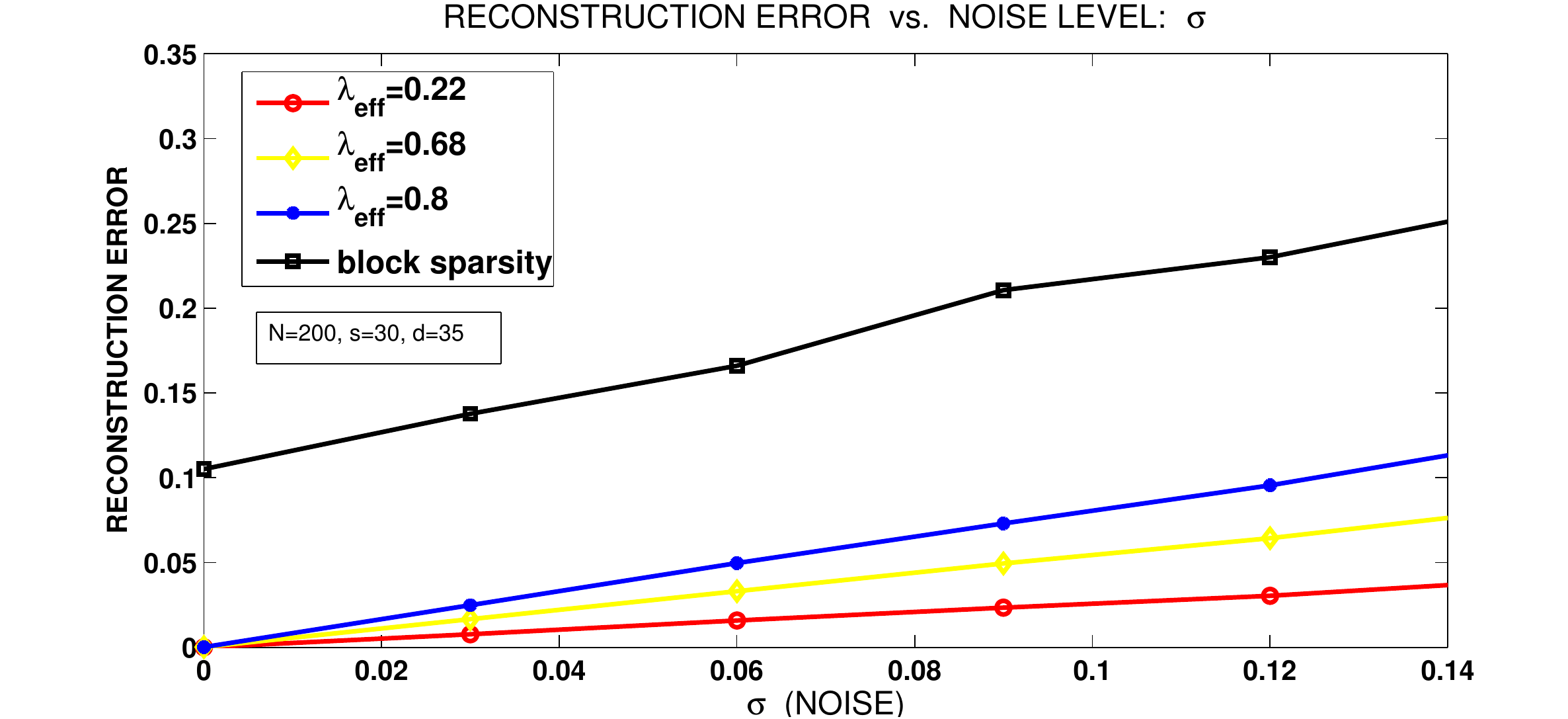}}
\subfigure[\label{fig4b}]{\includegraphics[width=3.2in,height=2.3in]{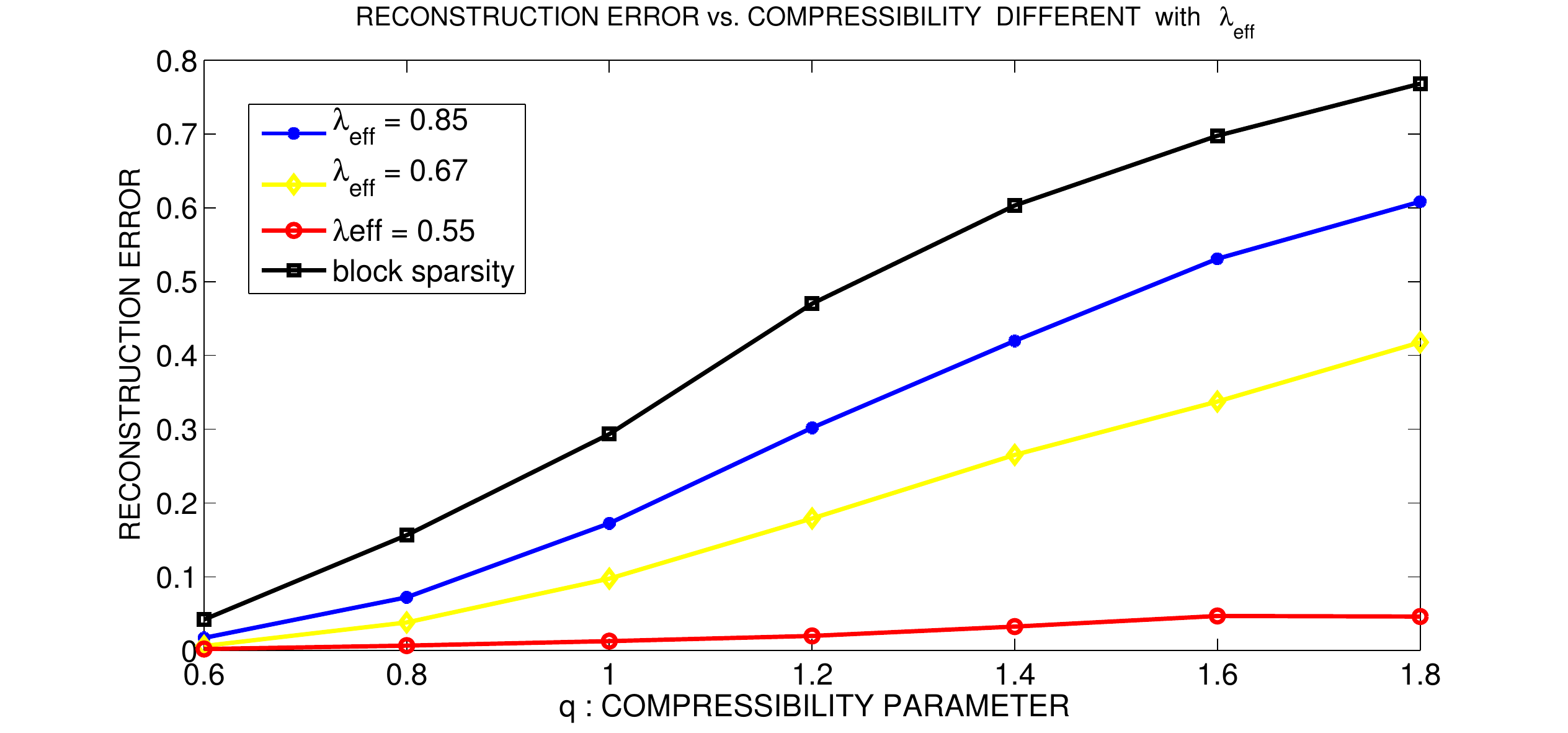}}
\caption{ $\|\hat{\bx}-\bx\|_2$ vs. $\sigma$  \hspace{0.36in} and
  \hspace{0.5in}  $\|\hat{\bx}-\bx\|_2$ vs. $q$}\label{fig4}
\end{figure}




\section{Appendix}


The following theorem is the noncommutative Bernstein inequality
due to Tropp, \cite[Theorem 1.4]{Tropp12}.

\begin{theorem}\label{Tropp}{\bf (Matrix Bernstein inequality)}
Let $\{X_\ell\}_{\ell=1}^M \in \R^{d \times d}$ be a sequence of
independent random self-adjoint matrices. Suppose that $\E X_\ell
=0$ and $\|X_\ell\| \leq K$ a.s. and put
$$
\sigma^2 := \left\| \sum_{\ell=1}^M \E X_\ell^2 \right\|.
$$
Then for all $t \geq 0$,
\begin{align}\label{Tropp_tail}
 \Prob \left( \left\| \sum_{\ell=1}^M X_\ell \right\| \geq t \right) \leq 2 d \exp \left(\frac{-t^2/2}{\sigma^2 + Kt/3}
\right).
\end{align}
\end{theorem}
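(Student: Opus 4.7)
The plan is to follow the matrix Laplace transform approach pioneered by Ahlswede–Winter and refined by Tropp. The starting point is the observation that for any self-adjoint random matrix $Y$ and any $\theta > 0$,
\begin{align*}
\Prob\bigl(\lambda_{\max}(Y) \geq t\bigr) \leq e^{-\theta t} \, \E \tr \exp(\theta Y),
\end{align*}
which follows by Markov's inequality applied to $\tr \exp(\theta Y)$ together with the fact that $e^{\theta \lambda_{\max}(Y)} \leq \tr e^{\theta Y}$ for self-adjoint $Y$. Applying this simultaneously to $Y = \sum_\ell X_\ell$ and to $-Y$ and taking a union bound gives a two-sided tail on $\|\sum_\ell X_\ell\|$, so it suffices to control $\E \tr \exp(\theta \sum_\ell X_\ell)$.

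The crucial step, and the place where one must be careful, is to decouple the sum inside the matrix exponential. Here I would invoke Lieb's concavity theorem, which states that $A \mapsto \tr \exp(H + \log A)$ is concave on the positive definite cone. Conditioning on all but one summand and iterating Jensen's inequality in this form yields Tropp's master bound
\begin{align*}
\E \tr \exp\Bigl(\theta \sum_{\ell=1}^M X_\ell\Bigr) \leq \tr \exp\Bigl(\sum_{\ell=1}^M \log \E e^{\theta X_\ell}\Bigr).
\end{align*}
This Lieb-based subadditivity of the matrix cumulant generating function is the main nontrivial ingredient; the rest of the proof is essentially scalar in spirit.

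Next I would bound each matrix moment generating function. Using the boundedness hypothesis $\|X_\ell\| \leq K$, a Taylor expansion of the scalar function $f(x) = (e^{\theta x} - 1 - \theta x)/x^2$ on $[-K,K]$ gives $f(x) \leq f(K)$, which lifts to the semidefinite inequality
\begin{align*}
\E e^{\theta X_\ell} \preceq I + \frac{e^{\theta K} - 1 - \theta K}{K^2}\, \E X_\ell^2,
\end{align*}
and hence, via $\log(I + A) \preceq A$, to $\log \E e^{\theta X_\ell} \preceq g(\theta)\, \E X_\ell^2$ with $g(\theta) = (e^{\theta K} - 1 - \theta K)/K^2$. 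Substituting into the master bound, using monotonicity of the trace exponential and the definition of $\sigma^2$, one gets
\begin{align*}
\E \tr \exp\Bigl(\theta \sum_{\ell=1}^M X_\ell\Bigr) \leq d \, \exp\bigl(g(\theta)\, \sigma^2\bigr).
\end{align*}

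Finally I would optimize $\theta$. The elementary inequality $g(\theta) \leq \theta^2/(2(1 - \theta K/3))$ converts the exponent into the Bernstein form, and choosing $\theta = t/(\sigma^2 + Kt/3)$ produces the claimed bound $2d \exp(-t^2/2/(\sigma^2 + Kt/3))$ after combining the two one-sided tails. The main obstacle in executing this plan is the appeal to Lieb's concavity theorem, which is the source of the tight dimensional dependence; everything after that is essentially the scalar Bernstein argument lifted to matrices by the functional calculus.
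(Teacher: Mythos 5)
Your proposal is correct and is precisely Tropp's own argument (matrix Laplace transform, Lieb's concavity theorem for the master bound $\E \tr \exp(\theta \sum_\ell X_\ell) \leq \tr \exp(\sum_\ell \log \E e^{\theta X_\ell})$, the semidefinite MGF bound, and the Bernstein-type optimization of $\theta$), which is exactly the source the paper relies on: the paper states this theorem without proof, citing \cite[Theorem 1.4]{Tropp12}. Indeed, the same ingredients you name (Tropp's tail bound via the trace exponential and the lemma $\log \E e^{\theta X} \preccurlyeq g(\theta)\,\E X^2$) are the ones the paper's subsequent remark reuses to sharpen the dimensional factor $d$ to the rank $r$ of $\E X_\ell^2$ in the identically distributed case.
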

\vspace{0.005in}
\begin{remark}
One can improve the tail bound \eqref{Tropp_tail}  provided the
$\{X_\ell\}$ are identically distributed and the $\E X_\ell^2$ are not
full rank, say $\rank (\E X_\ell^2) = r < d$. Then \eqref{Tropp_tail} can be replaced by
\begin{align}\label{spec_tail}
 \Prob \left( \left\| \sum_{\ell=1}^M X_\ell \right\| \geq t \right) \leq 2 r \exp \left(\frac{-t^2/2}{\sigma^2 + Kt/3}
\right).
\end{align}
\textit{Sketch of the proof.} We mainly improve
\cite[Corollary~3.7]{Tropp12} under the assumptions above on
$X_\ell$. By \cite[Theorem~3.6]{Tropp12}, for each $\theta > 0$, it
holds that
\begin{align*}
\Prob \left( \lambda_{\max}\left( \sum_{\ell=1}^M X_\ell \right) \geq t
\right) \leq e^{-\theta t} \ \tr \exp\left( \sum_{\ell=1}^M \ln \left(\E e^{\theta
    X_\ell} \right)\right),
\end{align*}
where $\lambda_{\max}$ denotes the largest eigenvalue.
Moreover, \cite[Lemma~6.7]{Tropp12} states that, for $\theta > 0$,
$$
\E e^{\theta X} \preccurlyeq \exp\left( g(\theta) \E X^2 \right)
$$ 
for a self-adjoint, centered random matrix $X$, where $g(\theta)=e^\theta -
\theta -1$. Using this result we obtain
\begin{align*}
\Prob \left( \lambda_{\max}\left( \sum_{\ell=1}^M X_\ell \right) \geq t
\right) &\leq e^{-\theta t} \ \tr \exp\left( g(\theta) \sum_{\ell=1}^M \E
  X_\ell^2 \right) = e^{-\theta t} \ \tr \exp\left( g(\theta) M \E  X_1^2 \right) \\
&\leq e^{-\theta t} r \ \lambda_{\max} \left[ \exp\left( g(\theta) M \E
  X_1^2 \right) \right] = e^{-\theta t} r \ \exp\left( g(\theta) \lambda_{\max} (M \E
  X_1^2 )\right).
\end{align*}
The first equality above uses that $X_\ell$ are identically distributed. 
The second inequality is valid because, for a positive definite matrix $B$ with rank $r$, we have $\tr
B \leq r \lambda_{\max}(B)$ and $\rank (c \ \E
X_\ell^2) =\rank \left(\exp(c \ \E X_\ell^2)\right) = r$, for some $c >
0$. The rest of the proof proceeds in the same way as the proof of \cite[Theorem~1.4]{Tropp12}.
\end{remark}

We also give a rectangular version of the matrix Bernstein
inequality as it appears in \cite[Theorem 1.6]{Tropp12}.

\begin{theorem}\label{Tropp_rect}{\bf (Matrix Bernstein:rectangular)}
Let $\{Z_\ell\} \in \R^{d_1 \times d_2}$ be a finite sequence of
independent random matrices. Suppose that $\E Z_\ell =0$ and
$\|Z_\ell\| \leq K$ a.s. and put
$$
\sigma^2 := \max \left\{ \left\| \sum_\ell \E (Z_\ell Z_\ell^*)
\right\|, \left\| \sum_\ell \E (Z_\ell^* Z_\ell) \right\| \right\}.
$$
Then for all $t \geq 0$,
$$ \Prob \left( \left\| \sum_\ell Z_\ell \right\| \geq t \right) \leq (d_1+d_2) \exp \left(\frac{-t^2/2}{\sigma^2 + Kt/3}
\right).
$$
\end{theorem}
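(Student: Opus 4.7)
The plan is to reduce the rectangular statement to the self-adjoint case (Theorem~\ref{Tropp}) via the standard Hermitian dilation. For a rectangular matrix $Z \in \R^{d_1 \times d_2}$, define
$$
\mathcal{D}(Z) = \begin{pmatrix} 0 & Z \\ Z^* & 0 \end{pmatrix} \in \R^{(d_1+d_2)\times(d_1+d_2)}.
$$
The map $\mathcal{D}$ is linear, takes values in the self-adjoint matrices, and preserves the operator norm, $\|\mathcal{D}(Z)\| = \|Z\|$, because the nonzero eigenvalues of $\mathcal{D}(Z)$ are exactly the signed singular values $\pm \sigma_i(Z)$. A direct block computation also gives
$$
\mathcal{D}(Z)^2 = \begin{pmatrix} ZZ^* & 0 \\ 0 & Z^*Z \end{pmatrix},
$$
so that $\|\mathcal{D}(Z)^2\| = \max\{\|ZZ^*\|,\|Z^*Z\|\}$.

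First I would set $X_\ell := \mathcal{D}(Z_\ell)$, obtaining an independent self-adjoint sequence in $\R^{(d_1+d_2)\times(d_1+d_2)}$. The hypotheses transfer cleanly: $\E X_\ell = \mathcal{D}(\E Z_\ell) = 0$; $\|X_\ell\| = \|Z_\ell\| \leq K$; and summing the block identity above yields
$$
\Bigl\|\sum_\ell \E X_\ell^2 \Bigr\| = \max\Bigl\{\bigl\|\sum_\ell \E (Z_\ell Z_\ell^*)\bigr\|,\ \bigl\|\sum_\ell \E (Z_\ell^* Z_\ell)\bigr\|\Bigr\} = \sigma^2,
$$
which is exactly the variance parameter appearing in Theorem~\ref{Tropp}. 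Linearity of $\mathcal{D}$ gives $\sum_\ell X_\ell = \mathcal{D}(\sum_\ell Z_\ell)$, and hence $\|\sum_\ell X_\ell\| = \|\sum_\ell Z_\ell\|$.

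Next I would apply Theorem~\ref{Tropp} to the sequence $\{X_\ell\}$, but using the one-sided form that underlies its proof rather than the two-sided statement displayed in \eqref{Tropp_tail}. The Laplace-transform step in the proof of Theorem~\ref{Tropp} produces
$$
\Prob\Bigl(\lambda_{\max}\bigl(\sum_\ell X_\ell\bigr) \geq t\Bigr) \leq (d_1+d_2)\exp\Bigl(\frac{-t^2/2}{\sigma^2 + Kt/3}\Bigr),
$$
and the factor of two in \eqref{Tropp_tail} comes solely from a final union bound over $\pm \sum_\ell X_\ell$. Here, though, the spectrum of each $\mathcal{D}(Z_\ell)$ is symmetric about zero, and so is the spectrum of the self-adjoint $\sum_\ell X_\ell = \mathcal{D}(\sum_\ell Z_\ell)$; consequently $\|\sum_\ell X_\ell\| = \lambda_{\max}(\sum_\ell X_\ell)$ and the final union bound is unnecessary. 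Translating the displayed probability via $\|\sum_\ell X_\ell\| = \|\sum_\ell Z_\ell\|$ delivers the stated tail bound.

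The substance of the argument is the dilation identification, which is essentially algebraic and should pose no real difficulty. The one point requiring genuine care — and the reason I would not simply cite Theorem~\ref{Tropp} as a black box — is the prefactor $(d_1+d_2)$ in place of $2(d_1+d_2)$: this improvement is available only because the dilation makes the spectrum symmetric, so I must either invoke the one-sided form of Theorem~\ref{Tropp} (which is implicit in its proof) or explicitly record the $\pm$ symmetry argument.
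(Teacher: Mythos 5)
Your proposal is correct, and there is no internal proof to compare it against: the paper states this theorem without proof, quoting it directly from \cite{Tropp12}. Your Hermitian dilation argument is exactly the standard derivation of the rectangular case from the self-adjoint one (and is how Tropp obtains it), and you correctly flag the one delicate point --- that the prefactor $(d_1+d_2)$, rather than $2(d_1+d_2)$, requires the one-sided $\lambda_{\max}$ form of Theorem~\ref{Tropp} combined with the $\pm$-symmetry of the spectrum of $\sum_\ell \mathcal{D}(Z_\ell)$, since applying the two-sided bound \eqref{Tropp_tail} as displayed would lose a factor of $2$.
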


The next lemma is a deviation inequality for sums of independent
random vectors which is a corollary of Bernstein inequalities for
suprema of empirical processes \cite[Corollary 8.44]{Foucart13}. A
similar result can be also found in \cite[Theorem 12]{Gross09}.

\begin{lemma}\label{v_bern}{\bf (Vector Bernstein inequality)} Let $\bY_1, \bY_2, \ldots, \bY_M$ be
independent copies of a random vector $\bY$ on $\R^n$ satisfying $\E
\bY=0$. Assume $\|\bY\|_2 \leq K$. Let
\begin{align*}
Z= \left\| \sum_{\ell=1}^M \bY_\ell \right\|_2, \ \ \E Z^2 = M \E
\|\bY\|_2^2,
\end{align*}
and
\begin{align}\label{bern2}
\sigma^2 = \sup_{\|\bx\|_2 \leq 1} \E |\la x,\bY \ra|^2.
\end{align}
Then, for $t >0$,
\begin{align}\label{bern3}
\Prob (Z \geq \sqrt{\E Z^2} +t) \leq \exp \left(
-\frac{t^2/2}{M\sigma^2 + 2K \sqrt{\E Z^2} + tK/3} \right).
\end{align}
\end{lemma}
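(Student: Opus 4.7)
The plan is to recognize $Z$ as the supremum of a centered empirical process indexed by the Euclidean unit ball, apply a Bernstein-type deviation inequality for such suprema (precisely the one cited, \cite[Corollary~8.44]{Foucart13}), and then convert the resulting $\mathbb{E} Z$ on the right-hand side to $\sqrt{\mathbb{E} Z^2}$. Specifically, by duality for the Euclidean norm,
\begin{equation*}
Z \;=\; \Bigl\| \sum_{\ell=1}^M \bY_\ell \Bigr\|_2 \;=\; \sup_{\bx \in B_2^n} \sum_{\ell=1}^M \la \bx, \bY_\ell \ra,
\end{equation*}
where the supremum may be restricted to a countable dense subset of the unit ball $B_2^n$ for measurability. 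Setting $f_\bx(y) := \la \bx, y \ra$, each family $\{f_\bx(\bY_\ell)\}_\ell$ consists of i.i.d.\ centered real random variables.

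The next step is to verify the ingredients required by Bernstein for empirical processes. The uniform bound is immediate from Cauchy--Schwarz: for $\bx \in B_2^n$,
\begin{equation*}
|f_\bx(\bY)| \;\leq\; \|\bx\|_2 \,\|\bY\|_2 \;\leq\; K \quad\text{a.s.}
\end{equation*}
The weak variance parameter of the empirical process is exactly the quantity $\sigma^2$ defined in \eqref{bern2}, since $\sup_{\bx \in B_2^n} \mathbb{E} f_\bx(\bY)^2 = \sup_{\|\bx\|_2 \leq 1} \mathbb{E}|\la \bx, \bY\ra|^2$. Applying \cite[Corollary~8.44]{Foucart13} then yields, for every $t > 0$,
\begin{equation*}
\Prob\bigl( Z \geq \mathbb{E} Z + t \bigr) \;\leq\; \exp\!\left( - \frac{t^2/2}{M\sigma^2 + 2K\, \mathbb{E} Z + tK/3} \right).
\end{equation*}

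Finally, by Jensen's inequality $\mathbb{E} Z \leq \sqrt{\mathbb{E} Z^2}$, so the event $\{Z \geq \sqrt{\mathbb{E} Z^2} + t\}$ is contained in $\{Z \geq \mathbb{E} Z + t\}$; moreover the denominator of the exponent is monotone in the term multiplying $K$, so replacing $\mathbb{E} Z$ by the larger $\sqrt{\mathbb{E} Z^2}$ only weakens the bound and produces \eqref{bern3}. The identity $\mathbb{E} Z^2 = M\, \mathbb{E}\|\bY\|_2^2$ follows from expanding $\|\sum_\ell \bY_\ell\|_2^2 = \sum_{\ell,\ell'} \la \bY_\ell, \bY_{\ell'}\ra$ and using independence and centeredness to kill cross terms.

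The only genuinely delicate point is the appeal to the empirical-process Bernstein inequality, since that result is itself nontrivial (it rests on Talagrand/Bousquet-type concentration). Everything else is bookkeeping: the duality rewriting, the uniform and variance bounds, and the Jensen step. I expect the main subtlety in a careful write-up to be the measurability/separability reduction to a countable dense subset of $B_2^n$ before invoking the empirical-process bound, together with making sure the variance proxy used there matches $\sigma^2$ as defined in \eqref{bern2} rather than the larger strong variance $\mathbb{E}\sup_{\bx} \sum_\ell f_\bx(\bY_\ell)^2$.
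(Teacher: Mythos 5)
Your proof is correct and takes essentially the same route the paper relies on: the paper offers no proof of Lemma~\ref{v_bern}, presenting it as a corollary of the Bernstein (Bousquet-type) inequality for suprema of empirical processes, which is exactly your argument via duality over the unit ball, the uniform bound $K$, the weak variance $\sigma^2$, and the Jensen step $\E Z \leq \sqrt{\E Z^2}$. One bibliographic nit: the inequality you invoke with $\E Z$ on the right-hand side is Bousquet's concentration theorem (the result preceding Corollary~8.44 in the cited book), whereas Corollary~8.44 is the statement of the lemma itself, so your citation is circular in letter though not in substance.
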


\begin{remark}\label{variance}
The so-called weak variance $\sigma^2$ in \eqref{bern2} can be
estimated by
$$ \sigma^2 = \sup_{\|\bx\|_2 \leq 1} \E |\la x,\bY \ra|^2 \leq \E \sup_{\|\bx\|_2 \leq 1} |\la x,\bY
\ra|^2 = \E \|\bY\|_2^2.$$
\end{remark}

\section*{Acknowledgment}
The authors would like to thank the Hausdorff Center for Mathematics
and RWTH Aachen University for support, and acknowledge funding
through the WWTF project SPORTS (MA07-004) and the ERC Starting
Grant StG 258926. 

\bibliographystyle{abbrv}
\bibliography{frame_nonuniform_refer}

\begin{thebibliography}{10}

\bibitem{Ayaz14}
U.~{A}yaz.
\newblock {\em {S}parse Recovery with Fusion Frames}.
\newblock 2014.
\newblock {Ph}.{D}. thesis. Hausdorff Center for Mathematics, University of
  Bonn.

\bibitem{AyazU13}
U.~{A}yaz, S.~{D}irksen, and H.~{R}auhut.
\newblock {U}niform recovery of fusion frame structured sparse signals.
\newblock (Submitted).

\bibitem{Ayaz11}
U.~{A}yaz and H.~{R}auhut.
\newblock {N}onuniform sparse recovery with subgaussian matrices.
\newblock {\em ETNA}, to appear.

\bibitem{Bjorstad91}
P.~E. Bj{\o}rstad and J.~Mandel.
\newblock On the spectra of sums of orthogonal projections with applications to
  parallel computing.
\newblock {\em BIT}, 31(1):76--88, Mar. 1991.

\bibitem{Bodmann07}
B.~G. {B}odmann.
\newblock {O}ptimal linear transmission by loss-sensitive packet encoding.
\newblock {\em {A}ppl. {C}omput. {H}armon. {A}nal.}, 22(3):274--285, 2007.

\bibitem{Boufounos09}
P.~{B}oufounos, G.~{K}utyniok, and H.~{R}auhut.
\newblock {S}parse recovery from combined fusion frame measurements.
\newblock {\em IEEE Trans. Inform. Theory}, 57(6):3864--3876, 2011.

\bibitem{carota06}
E.~{C}and{\`e}s, J.~{R}omberg, and T.~{T}ao.
\newblock {R}obust uncertainty principles: exact signal reconstruction from
  highly incomplete frequency information.
\newblock {\em {I}{E}{E}{E} {T}rans. {I}nf. {T}heory}, 52(2):489--509, 2006.

\bibitem{cata06}
E.~{C}and{\`e}s and T.~{T}ao.
\newblock {N}ear optimal signal recovery from random projections: universal
  encoding strategies?
\newblock {\em {I}{E}{E}{E} {T}rans. {I}nf. {T}heory}, 52(12):5406--5425, 2006.

\bibitem{Casazza13}
P.~{C}asazza and G.~{K}utyniok.
\newblock {\em {F}usion {F}rames}.
\newblock {A}pplied and {N}umerical {H}armonic {A}nalysis. {B}oston, {M}{A}:
  {B}irkh{\"a}user. xvi, 2013.

\bibitem{Casazza04}
P.~G. Casazza and G.~Kutyniok.
\newblock Frames of subspaces.
\newblock {\em in Wavelets, Frames and Operator Theory}, pages 87--113, 2004.

\bibitem{Casazza07}
P.~G. {C}asazza, G.~{K}utyniok, S.~{L}i, and C.~J. {R}ozell.
\newblock {M}odeling sensor networks with fusion frames.
\newblock In {\em {W}avelets {X}ll, {S}pecial {S}ession on
  {F}inite-{D}imensional {F}rames, {T}ime-{F}requency {A}nalysis, and
  {A}pplications}, volume 6701, page~11, 2007.

\bibitem{Chandra12}
V.~Chandrasekaran, B.~Recht, P.~A. Parrilo, and A.~S. Willsky.
\newblock The convex geometry of linear inverse problems.
\newblock {\em Foundations of Computational Mathematics}, 12(6), 2012.

\bibitem{do06-2}
D.~{D}onoho.
\newblock {C}ompressed sensing.
\newblock {\em {I}{E}{E}{E} {T}rans. {I}nform. {T}heory}, 52(4):1289--1306,
  2006.

\bibitem{Eldar08}
Y.~C. Eldar and H.~B{\"o}lcskei.
\newblock Block-sparsity: Coherence and efficient recovery.
\newblock In {\em ICASSP}, pages 2885--2888. IEEE, 2009.

\bibitem{fora08}
M.~{F}ornasier and H.~{R}auhut.
\newblock {R}ecovery algorithms for vector valued data with joint sparsity
  constraints.
\newblock {\em {S}{I}{A}{M} {J}. {N}umer. {A}nal.}, 46(2):577--613, 2008.

\bibitem{Foucart13}
S.~{F}oucart and H.~{R}auhut.
\newblock {\em {A} mathematical introduction to compressive sensing}.
\newblock {A}pplied and {N}umerical {H}armonic {A}nalysis. {B}irkh{\"a}user,
  2013.

\bibitem{Fuchs04}
J.-J. {F}uchs.
\newblock On sparse representations in arbitrary redundant bases.
\newblock {\em IEEE Trans. Inf. Th}, page 1344, 2004.

\bibitem{Gross09}
D.~Gross.
\newblock Recovering low-rank matrices from few coefficients in any basis.
\newblock {\em IEEE Transactions on Information Theory}, 57(3):1548--1566,
  2011.

\bibitem{Kutyniok09}
G.~{K}utyniok, A.~{P}ezeshki, R.~{C}alderbank, and T.~{L}iu.
\newblock {R}obust dimension reduction, fusion frames, and {G}rassmannian
  packings.
\newblock {\em {A}ppl. {C}omput. {H}armon. {A}nal.}, 26(1):64--76, 2009.

\bibitem{Oswald97}
P.~{O}swald.
\newblock {F}rames and space splittings in {H}ilbert spaces.
\newblock Technical report, 1997.

\bibitem{Rao12}
N.~S. Rao, B.~Recht, and R.~D. Nowak.
\newblock Universal measurement bounds for structured sparse signal recovery.
\newblock {\em Jour. of Mach. Learn. Res. - Proc. Track}, 22:942--950, 2012.

\bibitem{Tropp05}
J.~{T}ropp.
\newblock {R}ecovery of short, complex linear combinations via $l_1$
  minimization.
\newblock {\em {I}{E}{E}{E} {T}rans. {I}nf. {T}heory}, 51(4):1568--1570, 2005.

\bibitem{Tropp12}
J.~A. Tropp.
\newblock User-friendly tail bounds for sums of random matrices.
\newblock {\em Found. of Comp. Math.}, 12(4):389--434, 2012.

\bibitem{Berg07}
E.~van~den Berg and M.~P. Friedlander.
\newblock {SPGL1}: A solver for large-scale sparse reconstruction, June 2007.
\newblock http://www.cs.ubc.ca/labs/scl/spgl1.

\bibitem{Berg08}
E.~van~den Berg and M.~P. Friedlander.
\newblock Probing the pareto frontier for basis pursuit solutions.
\newblock {\em SIAM Journ. on Scien. Comp.}, 31(2):890--912, 2008.

\end{thebibliography}

\end{document}